\newtheorem{thm}{Theorem}
\newtheorem{lemma}{Lemma}
\newtheorem{remark}{Remark}
\newtheorem{definition}{Definition}
\newtheorem{corollary}{Corollary}
\newcommand{\cc}[1]{\textcolor{black}{#1}}
\begin{document}
	
	\title{Capacity Bounds for the Two-User IM/DD Interference Channel}
	\author{Zhenyu~Zhang and Anas~Chaaban,~\IEEEmembership{Senior~Member,~IEEE}
		% <-this % stops a space
		\thanks{
			Preliminary results related to this paper were presented in the 16th Canadian Workshop on Information Theory (CWIT) \cite{zhang2019capacity}.%
			
			Zhenyu (Charlus) Zhang and Anas Chaaban are with the School of Engineering, University of British Columbia, Kelowna, BC V1V 1V7, Canada, e-mail: zhenyu.zhang@alumni.ubc.ca; anas.chaaban@ubc.ca.
			
			This research was funded by the Natural Sciences and Engineering Research Council of Canada (NSERC) under grant RGPIN-2018-04254. %}% <-this % stops a space
	}}
	
	\maketitle
	
	\begin{abstract}
		This paper studies the capacity of the two-user intensity-modulation/direct-detection (IM/DD) interference channel (IC), which is relevant in the context of multi-user optical wireless communications. Despite some known single-letter capacity characterizations for general discrete-memoryless ICs, a computable capacity expression for the IM/DD IC is missing. In this paper, we provide tight and easily computable inner and outer bounds for a general two-user IM/DD IC under peak and average optical intensity constraints. The bounds enable characterizing the asymptotic sum-rate capacity in the strong and weak interference regimes, as well as the generalized degrees of freedom (GDoF) in the symmetric case. Using the obtained bounds, the GDoF of the IM/DD IC is shown to have a `W' shape similar to the Gaussian IC with power constraints. The obtained bounds are also evaluated numerically in different interference regimes to show their tightness, and used to study the performance of on-chip and indoor OWC systems.
	\end{abstract}
	%Under strong interference, the obtained bounds are within 0.5 bits when the signal-to-noise ratio (SNR) is high, which leads to an approximation of the asymptotic capacity in this regime.
	%Moreover, the generalized degrees-of-freedom of the symmetric IM/DD IC is obtained using the obtained bounds, showing a `W' shape similar to the standard symmetric Gaussian IC.
	
	\IEEEpeerreviewmaketitle
	
	\section{Introduction}
	
	% The target and challenge in similar subject is to provide close and computable approximations to the single-letter capacity bounds derived for Gaussian IC.
	% The contribution of this paper includes providing computable inner and outer bounds for general two-user IM/DD IC, which has good asymptotic performance, and thus obtained the asymptotic capacity region in the strong interference regime and the generalized degree of freedom of the symmetric IM/DD IC. An example of designing or optimizing an on-chip OWC system is demonstrated applying the obtained inner bounds.
	
	Optical wireless communication (OWC) makes use of light-emitting devices (LEDs or laser diodes) as transmitters, and photo-detectors as receivers to realize wireless information transmission \cite{tsonev2014light,haas2015lifi,hranilovic2017trends}. It can be realized using collimated beams such as in free-space optical (FSO) communication \cite{khalighi2014survey}, or using diffused light such as in visible light communication (VLC). VLC integrates high-speed wireless communication into existing illumination systems, including home light fixtures, road lamps, and car head/tail-lights. Other applications of OWC that use diffused light include underwater OWC \cite{johnson2014recent} for fast and marine-friendly underwater data transmission, and OWC on-chip \cite{nafari2017chip} as a space-saving alternative to wire buses in multi-core chips. 
	The broadcast nature of OWC using diffused light leads to interference between transmitter-receiver pairs. Thus, it is important to study the impact of interference on the performance of OWC systems. To this end, an information-theoretic (IT) perspective can be helpful as it provides performance benchmarks and informs scheme design. 
	
	From an IT perspective, communications performance depends on the communication channel, characterized by a probability distribution of the channel output given the input \cite{book-EIT}. An OWC channel is modeled by the so-called IM/DD channel\footnote{Also known as the FSO intensity channel and optical direct detection channel with Gaussian post-detection noise~\cite{Lapidoth-Moser-Wigger-2009}.} \cite{Lapidoth-Moser-Wigger-2009}, which belongs to the family of Gaussian channels. Its input must be real and nonnegative, and must satisfy mean and peak intensity constraints, to model constraints on the average and peak optical intensity. Due to this, existing results on the well-studied Gaussian channel with an average electrical power constraint (second moment constraint) which models radio-frequency (RF) communications cannot be directly applied to IM/DD channels. This opens a wide research area on IT limits of IM/DD channels.
	
	The IM/DD point-to-point (P2P) channel which models a single-user communication link was studied in \cite{Lapidoth-Moser-Wigger-2009,farid2010capacity, chaaban2016free}, which derived capacity upper and lower bounds under average and/or peak intensity constraints. 
	The capacity of multiple input and/or multiple output IM/DD P2P channels was studied in \cite{moser2018capacity,li2020capacity,
	chaaban2018low,chaaban2018capacity}. 
	While single-user IM/DD channels have no interference, multi-user channels have interference that must be taken into account when studying their capacity. The capacity of some IM/DD multi-user channels has been studied in the literature, including the IM/DD broadcast channel (BC) modeling communication between a transmitter and multiple receivers, and the IM/DD multiple-access channel (MAC) modeling communication between multiple transmitters and a receiver \cite{chaaban2016capacity,chaaban2017capacity,zhou2019bounds}. However, the capacity of the IM/DD interference channel (IC), which models communications between multiple interfering transmitter-receiver pairs, has not been studied in the literature despite its importance in VLC systems~\cite{rahaim2017interference,fuschini2020multi}.

	An IM/DD IC can model an indoor VLC system where multiple light sources communicate with their respective receiver, leading to interference due to random user locations and overlap between coverage regions \cite{rahaim2017interference}. It can also model interference in the emerging area of on-chip OWC, where isolated chips communicate with each other wirelessly using light \cite{nafari2017chip,fuschini2020multi}. Similarly, in vehicular VLC, an IM/DD IC can model cars in different lanes exchanging information with cars in front or behind them. While in some scenarios transmitter-side coordination can be established to avoid interference using time-division multiple access (TDMA) e.g., this is generally sub-optimal \cite{Etkin-Tse-Wang-2008}. This highlights the importance of studying the capacity of the IM/DD IC which is the focus of this paper.
	
	Studying the capacity of the IC has seen research activity since Shannon's work in 1961 \cite{Shannon-1961} and until nowadays \cite{Han-kobyashi-1981,kramer2004outer,dytso2016interference}
	%\cite{Ahlswede-1974, Sato-1977, Carleial-1978, Benzel-1979, Han-kobyashi-1981, Sason-2004, kramer2004outer,  Chong-etc-2008,  Etkin-Tse-Wang-2008, Raja-etc-2009, dytso2016interference}. 
	However, the capacity region of the IC is not known in general except for some special cases such as when interference is strong \cite{costa1987capacity}. Consequently, in order to uncover the secrets of the IC, researchers focused on the smallest IC, i.e., the two-user IC with two transmitter-receiver pairs. Treating-interference-as-noise (TIN) is the easiest way to deal with interference, but does not guarantee good performance. It is shown in \cite{Han-kobyashi-1981} that combining TIN with transmission of common messages decoded at both receivers can enlarge the achievable rate region relative to TIN, making the Han-Kobayashi (HK) scheme \cite{Han-kobyashi-1981} the best known scheme for the IC in terms of achievable rate. 
	A thorough study of the general two-user Gaussian IC under an average electrical power constraint (modeling RF networks) is presented in \cite{Etkin-Tse-Wang-2008}, which provides capacity inner and outer bounds with a gap less than one bit, and introduces the generalized degrees-of-freedom (GDoF) to approximate the IC capacity at high signal-to-noise ratio (SNR), where performance is mainly affected by interference. Since a closed-form expression for the IC capacity is hard to derive, the GDoF can be used as the prelog factor for approximating the IC capacity under various interference strengths \cite{davoodi2017generalized}. From this perspective, the GDoF can be seen as a generalization of the degrees-of-freedom (DoF)\footnote{Also known as the capacity prelog or the multiplexing gain \cite{cadambe2008interference}}, which captures the case when the interference and the desired signal are equally strong. The GDoF is shown to be an important metric in studying and designing communication schemes for more complex networks \cite{davoodi2017generalized}. To explicitly derive the GDoF, computable and fairly tight capacity bounds are required. 
	
	However, computable capacity bounds for the IM/DD IC do not exist to-date, even in the strong interference regime. Although the capacity region of the general IC with strong interference can be used to express the capacity region of the IM/DD IC with strong interference, this remains difficult to compute since it involves a union over distributions. This problem is compounded by the absence of a simple capacity expression for the IM/DD P2P channel.
	Note that achievable rates for the IM/DD IC were studied in \cite{ma2019capacity} which focused on TIN in the MIMO IM/DD IC. However, \cite{ma2019capacity} did not present capacity outer bounds for investigating the tightness of the achievable rate regions. Our preliminary work in \cite{zhang2019capacity} derived computable capacity bounds for a symmetric two-user IM/DD IC with an average intensity constraint. The more general asymmetric IM/DD IC with peak and average optical intensity constraints is yet to be studied.
	
	Due to the lack of computable expressions for the IM/DD IC and IM/DD P2P channel capacities, the GDoF of the IM/DD IC is also unknown. Recent research on the standard Gaussian IC \cite{dytso2014gaussian, dytso2016interference} shows that using pulse amplitude modulation (PAM) for the common signal combined with using Gaussian or PAM for the private signal is GDoF optimal. Since a DC-biased PAM signal can be used in an IM/DD channel, the inner bounds in \cite{dytso2014gaussian, dytso2016interference} apply for the IM/DD IC at a fixed average-to-peak optical intensity ratio of $\frac{1}{2}$. However, the same does not apply directly for an IM/DD IC under more general average and peak intensity constraints. Also, it is unknown if the obtained achievable GDoF in \cite{dytso2014gaussian, dytso2016interference} is optimal for the IM/DD IC since this requires deriving GDoF upper bounds for the IM/DD IC. 
	%Except for some special case such as when interference is strong, the capacity region of IC is unknown in general, and even with an capacity region expression, the evaluation is not straightforward facing a union among distributions.
	%(Reviewing, challenges, open problems...)
	
	In this paper, results in \cite{zhang2019capacity} are extended to the general two-user IM/DD IC with general peak and average intensity constraints, aiming to find fairly tight and computable capacity bounds and to characterize the GDoF of the IM/DD IC. 
	%As an extension of our former work on the capacity of a symmetric IC \cite{zhang2019capacity}, this paper adopts general setting of channel factors and peak optical intensity constraint, which are more relevant to practice. 
	Two inner bounds and two outer bounds are derived, where the inner bounds are based on TIN and HK schemes, and the outer bounds are derived by providing receiver side information to form Z-channels and a genie-aided channel \cite{Etkin-Tse-Wang-2008}. Numerical comparison are provided to show the asymptotic tightness of the obtained bounds at high SNR. The asymptotic tightness leads to a fairly tight approximation of the capacity region in the strong interference regime, a finite-gap approximation of the sum-rate capacity in the weak interference regime, and the GDoF of the symmetric IM/DD IC. Interestingly, the GDoF retains the `W' shape observed in the standard Gaussian IC in \cite{Etkin-Tse-Wang-2008}. It is worth to note that the obtained inner and outer bounds are based on a capacity upper bound for the IM/DD P2P channel from \cite{Lapidoth-Moser-Wigger-2009}, and hence can be tightened by using tighter IM/DD P2P channel capacity bounds. Finally, to demonstrate the applicability of the obtained results, we consider an on-chip OWC system and an indoor VLC system as examples, where we discuss system design problems such as optimizing the receivers' location or evaluating interference management capability of different schemes.
	
	The rest of the paper is organized as follows. After defining the two-user IM/DD IC model and listing the necessary preliminaries and definitions in Sec. \ref{Sec:Model}, Sec. \ref{Sec:Bounds} presents the obtained capacity inner and outer bounds for the IM/DD IC and shows their tightness through numerical evaluations at finite SNR. Then, the asymptotic tightness of the obtained capacity bounds is studied in Sec. \ref{Sec:Asymptotics}, and the GDoF of the symmetric IM/DD IC is obtained. The obtained results are then used for studying two example practical systems in Sec. \ref{Sec:Numerical}. Finally, Sec. \ref{Sec:Conclusion} concludes this paper. %numerically evaluations of the results
	
	%In this paper, we derived computable inner and outer capacity bounds for the IM/DD two-user IC, and obtained the asymptotic capacity region in the strong interference regime as well as the generalized degree of freedom for a symmetric setting. We use an example to show how the obtained capacity bounds can be helpful in the design of an optical wireless communication (OWC) system on a chip. 
	%As a extension of a recent study on the capacity of symmetric IM/DD IC under average constraint, the channel factors in this paper can be asymmtric and the input signals are constrained on peak and average power, which is more practical. Nevertheless, a new outer bound is derived based on the genie-aided channel. 
	
	Throughout the paper, we use bold-face letters to denote vectors, normal-face lower-case letters to denote scalars, and normal-face upper-case letters to denote random variables, unless otherwise specified. 
	All logarithms are base 2. 
	For a random variable $X$, we use $p(x)$, $\mathbb{E}[X]$, and $\mathsf{h}(X)$ to denote its probability density function, expectation, and differential entropy, 
	and a superscript-form $X^n$ to denote a time sequence $(X^{(t)})_{t=1}^{n}$ with distribution $p(x^n)$; $\mathsf{I}(X;Y)$ is the mutual information between $X$ and $Y$; 
	$\mathcal{N}(\mu,\sigma^2)$ represents the Gaussian distribution with mean $\mu$ and variance $\sigma^2$, and $\mathcal{Q}(x)=\int_{x}^{\infty}\frac{1}{\sqrt{2\pi}}e^{-\frac{x^2}{2}}dx$ denotes the standard Gaussian tail function. 
	We denote by $\lceil x \rceil$ the smallest integer no smaller than $x$, by ${\rm Conv}(\mathcal{S})$ the convex hull of a set $\mathcal{S}$, and by $\mathbb{R}_+$ the set of nonnegative real numbers.

	\section{Channel Model and Preliminaries}
	\label{Sec:Model}
	In this section, we introduce the two-user IM/DD IC model, in addition to some IM/DD P2P channel capacity bounds from \cite{Lapidoth-Moser-Wigger-2009} and other preliminaries that are required later in this paper. 
	
	\newcounter{mytempeqncnt}
	\begin{figure*}[h!b!]
		\normalsize
		\hrulefill
		\setcounter{mytempeqncnt}{\value{equation}}
		\setcounter{equation}{4}
		\begin{subequations}
			\label{eq:ci}
			\begin{align} 
				c_1 &=
				\min_{\mu,\xi>0} \biggl[ 1- \mathcal{Q} \Bigl( \frac{\xi + \alpha \mathsf{A}}{\sigma} \Bigr) - \mathcal{Q} \Bigl( \frac{\xi + (1-\alpha)\mathsf{A}}{\sigma} \Bigr)  \biggr] \log\bigg( \frac{\mathsf{A}}{\sigma} \frac{e^{ \frac{\mu \xi}{\mathsf{A}} } - e^{ -\mu(1+\frac{\xi}{\mathsf{A}}) }}{\sqrt{2\pi}\mu(1-2\mathcal{Q}(\frac{\xi}{\sigma}) )} \bigg) 
				\notag  \\ 
				& \qquad\qquad\qquad\qquad + \log(e) \biggl[ - \frac{1}{2} + \mathcal{Q} \Bigl( \frac{\xi}{\sigma} \Bigr) + \frac{\xi}{\sqrt{2\pi}\sigma} e^{ -\frac{\xi^2}{2\sigma^2} } + \frac{\sigma}{\mathsf{A}} \frac{\mu}{\sqrt{2\pi}} \Bigl( e^{ -\frac{\xi^2}{2\sigma^2} }  - e^{ -\frac{(\mathsf{A}+\xi)^2}{2\sigma^2} } \Bigr) 
				\notag  \\ 
				& \qquad\qquad\qquad\qquad + \mu\alpha \biggl( 1- 2 \mathcal{Q} \Bigl( \frac{\xi + \frac{\mathsf{A}}{2}}{\sigma} \Bigr)  \biggr) \biggr],  \\
				c_2 &=
				\min_{\xi>0}  \biggl[ 1- 2 \mathcal{Q} \Bigl( \frac{\xi + \frac{\mathsf{A}}{2}}{\sigma} \Bigr)  \biggr] 
				\log \biggl( \frac{\mathsf{A}+2\xi}{\sigma \sqrt{2\pi} (1- 2\mathcal{Q}(\frac{\xi}{\sigma}))} \biggr) 
				%			\notag \\ 
				%			& \quad\quad\quad  
				+ \log(e) \biggl[- \frac{1}{2} + \mathcal{Q} \Bigl( \frac{\xi}{\sigma} \Bigr) + \frac{\xi}{\sqrt{2\pi}\sigma} e^{ -\frac{\xi^2}{2\sigma^2} } \biggr].
			\end{align}
		\end{subequations}
		\setcounter{equation}{\value{mytempeqncnt}}
	\end{figure*}
	
	\subsection{The Two-User IM/DD Interference Channel (IC)} \label{sec:model}
	
	We consider a two-user IM/DD IC as shown in Fig. \ref{fig:chan}. In this channel, there are two transmitter-receiver pairs, where transmitter $i\in\{1,2\}$, communicates with receiver $i$ while interfering with receiver $j\in\{1,2\}$, $j\ne i$. At any time instant, the transmission can be modeled as
	\begin{align} \label{eq:model-channel}
	Y_i = h_{ii} X_i + h_{ji} X_j + Z_i,% \nonumber \\
	%Y_2 = h_{12} X_1 + h_{22} X_2 + Z_2,
	\end{align} 
	for $i,j \in \{1,2\}$ and $i\neq j$, where $Y_i$ and $X_i$ model the received and transmitted signals, respectively, $h_{ij} \in \mathbb{R}_+$ is the channel gain\footnote{$h_{ij}\geq 0$ since it is the ratio of the received to the transmitted optical intensity, both of which are nonnegative.} from transmitter $i$ to receiver $j$, and $Z_i$ is the noise at receiver $i$, which we assume to be $\mathcal{N}(0, \sigma_i^2)$ (thermal-noise-limited regime).
	
	Due to the physical properties of solid-state optical transmitters, $X_i$ should be real, nonnegative, and peak constrained. Although turning on an LED requires the injected current to be larger than a constant threshold, we assume the threshold to be zero without loss of generality when studying the channel capacity. Also, $X_i$ may be subject to an average optical intensity constraint for illumination and eye-safety reasons in applications such as indoor VLC. Thus, the following constraints are considered in studying the capacity of the IM/DD IC:
	\begin{enumerate}
		\item Real and nonnegative intensity constraint: $X_i\in\mathbb{R}_+$, $i=1,2$;
		\item Peak intensity constraint: $X_i\le \mathsf{A}_i$, $i=1,2$;
		\item Average intensity constraint: $\mathbb{E}[X_i]\le\mathsf{E}_i$, $i=1,2$.
	\end{enumerate}
	Without loss of generality, we assume that $\mathsf{A}_1=\mathsf{A}_2=\mathsf{A}$ and $\sigma_1=\sigma_2=\sigma$, since differences in $\mathsf{A}_i$ and $\sigma_i^2$ can be absorbed into the channel coefficients $h_{ij}$, $i,j\in\{1,2\}$. We define $\alpha_i=\frac{\mathsf{E}_i}{\mathsf{A}}$, which will be extensively used in the paper. We also define a symmetric IC as the one with $h_{11}=h_{22}=1$, $h_{12}=h_{21}=g$, and $\alpha_1=\alpha_2=\alpha$. It is also worth to mention that the setting that $\mathsf{A}_1=\mathsf{A}_2$ and $\alpha_1=\alpha_2$ with general channel coefficients $h_{ij}$ models an IM/DD IC with identical transmitters. 
	%Specifically, in this paper, we study the case $\mathsf{A}_1=\mathsf{A}_2=\mathsf{A}$ and  $\mathsf{E}_1=\mathsf{E}_2=\alpha\mathsf{A}$, where $\alpha\in(0,1]$ denotes the average-to-peak power ratio, which corresponds to applications that have identical transmitters' settings. We consider a thermal-noise-limited regime where $Z_i \sim \mathcal{N}(0, \sigma^2)$ for $i=1,2$ and $\sigma=1$ without loss of generality (differences in the variances of $Z_1$ and $Z_2$ can be absorbed into $h_{ij}$).

	\begin{figure}[!tbp]
		\centering
%		\resizebox {.6\columnwidth} {!} {
		\resizebox {.9\columnwidth} {!} {
			\begin{tikzpicture}[node distance = 1.5cm, inner sep=0, outer sep=0] 
			\node (m1) []{$M_1$};
			\node (enc1) [draw, right of=m1, inner sep=4] {Enc. 1};
			\node (x1) [right of=enc1] {$X_1^n$};
			\node (osum1) [circle, draw,  right of =x1, node distance = 3cm] {\large $+$};
			\node (y1) [right of=osum1, node distance = 1cm] {$Y_1^n$};
			\node (z1) [above of=osum1, node distance = 1cm] {$Z_1^n$};
			\node (dec1) [draw, right of=y1, inner sep=4] {Dec. 1};
			\node (mh1) [right of=dec1] {$\hat{M}_1$};
			
			\node (m2) [below of=m1]{$M_2$};
			\node (enc2) [draw, below of=enc1, inner sep=4] {Enc. 2};
			\node (x2) [below of=x1] {$X_2^n$};
			\node (osum2) [circle, draw,  below of=osum1] {\large $+$};
			\node (z2) [below of=osum2, node distance = 1cm] {$Z_2^n$};
			\node (y2) [below of=y1] {$Y_2^n$};
			\node (dec2) [draw, below of=dec1, inner sep=4] {Dec. 2};
			\node (mh2) [below of=mh1] {$\hat{M}_2$};
			
			\draw [->] (m1) -- (enc1);
			\draw [->] (enc1) -- (x1);
			\draw [->] (x1) -- node[anchor = south] {$h_{11}$}(osum1);
			\draw [->] (z1) -- (osum1);
			\draw [->] (osum1) -- (y1);
			\draw [->] (y1) -- (dec1);
			\draw [->] (dec1) -- (mh1);
			\draw [->] (m2) -- (enc2);
			\draw [->] (enc2) -- (x2);
			\draw [->] (x2) -- node[anchor = north] {$h_{22}$}(osum2);
			\draw [->] (z2) -- (osum2);
			\draw [->] (osum2) -- (y2);
			\draw [->] (y2) -- (dec2);
			\draw [->] (dec2) -- (mh2);
			
			\draw [->] (x1) --  (osum2);
			\draw [->] (x2) -- (osum1) {};
			\node [below of=x1, node distance = 10, xshift = 30] {$h_{12}$};
			\node [above of=x2, node distance = 10, xshift = 30] {$h_{21}$};
			\end{tikzpicture}
		}
		\caption{A two-user interference channel (IC). Message $M_i$ of user $i$ is encoded into a codeword $X_i^n$ of length $n$, which is sent over the channel. Receiver $i$ receives $Y_i^n$ which it uses to decode $\hat{M}_i$. The transmission is considered successful if $\hat{M}_i=M_i$.}
		\label{fig:chan}
	\end{figure}
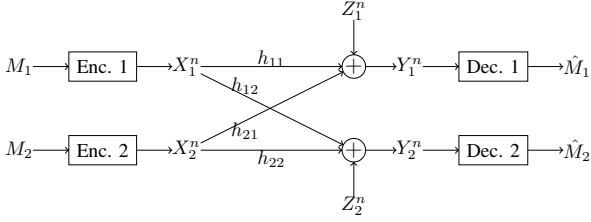
	
	Transmitter $i$ transmits a message $m_i \in \mathcal{M}_i$ to receiver $i$ through a length-$n$ codeword $x^n_i(m_i)=\bigl(x_i^{(1)}(m_i), \dots,x_i^{(n)}(m_i)\bigr)$ chosen from a codebook $\mathcal{C}_{i,n}$, where $\mathcal{M}_i = \bigl\{1, \ldots, \lceil 2^{n \mathsf{R}_i} \rceil  \bigr\}$ is the message set with code rate $\mathsf{R}_i$. Each element of the codeword is a realization of the random variable $X_i$ that satisfies the input constraints.  
	Receiver $i$ uses a decoding function $\varphi_{i,n}: \mathbb{R}^n \mapsto \mathcal{M}_i$ to obtain $\hat{m}_i = \varphi_{i,n} (y^n_i)$, and an error occurs if $\hat{m}_i \ne m_i$. The average error probability for receiver $i$ is defined as $\epsilon_{i,n} = \frac{1}{\lceil 2^{n \mathsf{R}_i} \rceil}\sum_{M_i\in\mathcal{M}_i} \mathbb{P} (\hat{M}_i \ne M_i) $. A rate pair $(\mathsf{R}_1, \mathsf{R}_2)$ is achievable if and only if there exists a sequence of codebook pairs $\left\{(\mathcal{C}_{1,n}, \mathcal{C}_{2,n})\right\}_n$ and decoding functions $\left\{(\varphi_{1,n}, \varphi_{2,n})\right\}_n$, such that $\epsilon_{1,n},\epsilon_{2,n} \rightarrow 0$, as $n \rightarrow \infty$. We denote by $\mathcal{G}$ the capacity region which is the closure of the set of all achievable rate pairs, and define the sum-rate capacity as $\mathsf{C}_\text{sum} \triangleq \max_{(\mathsf{R}_1,\mathsf{R}_2) \in \mathcal{G}} \{\mathsf{R}_1+\mathsf{R}_2\}$. 
	
	%{each codeword $\mathbf{x}_i(m_i)$
	Next, we introduce some IM/DD P2P channel capacity bounds and other preliminaries that will be used throughout the paper.
	
	\subsection{Preliminaries} \label{sec:p2p-bounds}
	%In this part, we review the capacity lower and upper bounds for the IM/DD P2P channel given in \cite{Lapidoth-Moser-Wigger-2009}.%, which will be extensively used in Sec. \ref{Sec:Inner} and Sec. \ref{Sec:Outer}.

	We start with IM/DD P2P channel capacity bounds. Consider an IM/DD P2P channel with input $X\in[0,\mathsf{A}]$ satisfying $\mathbb{E}[X]\le\alpha\mathsf{A}$, and output $Y=X+Z$ where $Z\sim \mathcal{N}(0, \sigma^2)$. Denote the IM/DD P2P channel capacity by $\mathsf{C}(\mathsf{A},\alpha) = \sup_{p(x):X\in[0,\mathsf{A}],\mathbb{E}[X]\le\alpha\mathsf{A}} \mathsf{I}(X; Y)$. The following lemma states lower and upper bounds for $\mathsf{C}(\mathsf{A},\alpha)$. 
	
	\begin{lemma}[\cite{Lapidoth-Moser-Wigger-2009}]
		\label{lemma:Cp2p}  
		$\mathsf{C}(\mathsf{A},\alpha)$ satisfies $\underline{\mathsf{C}} (\mathsf{A}, \alpha) \le \mathsf{C}(\mathsf{A},\alpha) \le \overline{\mathsf{C}} (\mathsf{A}, \alpha)$, where $\underline{\mathsf{C}} (\mathsf{A}, \alpha)$ and $\overline{\mathsf{C}} (\mathsf{A}, \alpha)$ are defined in \eqref{eq:c0_lb} and \eqref{eq:c0_ub}, respectively, as follows:
		\begin{equation}\label{eq:c0_lb}
		\underline{\mathsf{C}} (\mathsf{A}, \alpha) = \frac{1}{2} \log \left( 1 + \frac{\rho(\alpha)\mathsf{A}^2}{\sigma^2} \right) 
		\end{equation}
		wherein 
		\begin{equation} \label{eq:rho}
		\rho(\alpha) =\begin{cases}
		\frac{e^{2\alpha\mu^*-1}}{2\pi}\big(\frac{1-e^{-\mu^*}}{\mu^*}\big)^2, & \alpha\in(0,\frac{1}{2}),\\
		\frac{1}{2\pi e}, &  \alpha\in[\frac{1}{2},1],		
		\end{cases}
		\end{equation}
		and $\mu^*$ is the unique positive solution of $\alpha = \frac{1}{\mu^*} - \frac{e^{-\mu^*}}{1-e^{-\mu^*}}$; and
		\begin{equation} \label{eq:c0_ub}
		\overline{\mathsf{C}} (\mathsf{A}, \alpha) =\begin{cases}
		c_1, & \alpha\in(0,\frac{1}{2}),\\
		c_2, & \alpha\in[\frac{1}{2},1],
		\end{cases}
		\end{equation}
		wherein $c_1$ and $c_2$ are defined in \eqref{eq:ci}.
	\end{lemma}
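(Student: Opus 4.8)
The plan is to prove the two inequalities separately: the lower bound by the entropy power inequality (EPI) combined with an entropy-maximizing input, and the upper bound by a duality (test-output-distribution) argument, which carries essentially all of the computational weight.

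\emph{Lower bound.} Since $Y=X+Z$ with $Z$ Gaussian and independent of $X$, write $\mathsf{I}(X;Y)=\mathsf{h}(X+Z)-\mathsf{h}(Z)$ and apply the EPI, $2^{2\mathsf{h}(X+Z)}\ge 2^{2\mathsf{h}(X)}+2^{2\mathsf{h}(Z)}$, to obtain $\mathsf{I}(X;Y)\ge\tfrac12\log\bigl(1+2^{2\mathsf{h}(X)}/(2\pi e\sigma^2)\bigr)$. It remains to choose $X$ so as to maximize $\mathsf{h}(X)$ under $X\in[0,\mathsf{A}]$ and $\mathbb{E}[X]\le\alpha\mathsf{A}$. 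For $\alpha\ge\tfrac12$ the mean constraint is inactive, the uniform law on $[0,\mathsf{A}]$ is entropy-optimal, $2^{2\mathsf{h}(X)}=\mathsf{A}^2$, and one reads off $\rho(\alpha)=\tfrac1{2\pi e}$ as in \eqref{eq:rho}. For $\alpha<\tfrac12$ a Lagrange argument makes the optimal density a truncated exponential $p(x)\propto e^{-\mu x/\mathsf{A}}$ on $[0,\mathsf{A}]$; imposing $\mathbb{E}[X]=\alpha\mathsf{A}$ gives precisely $\alpha=\tfrac1\mu-\tfrac{e^{-\mu}}{1-e^{-\mu}}$, i.e. $\mu=\mu^*$, and evaluating the entropy of this density yields $2^{2\mathsf{h}(X)}=\mathsf{A}^2 e^{2\alpha\mu^*}\bigl(\tfrac{1-e^{-\mu^*}}{\mu^*}\bigr)^2$, reproducing the stated $\rho(\alpha)$ and hence \eqref{eq:c0_lb}.

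\emph{Upper bound.} For any probability density $q$ on $\mathbb{R}$ one has $\mathsf{I}(X;Y)=\mathbb{E}\bigl[\log\tfrac{p(Y\mid X)}{p_Y(Y)}\bigr]=\mathbb{E}\bigl[\log\tfrac{p(Y\mid X)}{q(Y)}\bigr]-D(p_Y\,\|\,q)\le -\mathsf{h}(Z)+\mathbb{E}_{X,Y}\bigl[-\log q(Y)\bigr]$, since $\mathbb{E}[\log p(Y\mid X)]=-\mathsf{h}(Z)$ and $D(p_Y\,\|\,q)\ge 0$. The crux is the choice of $q$: following Lapidoth--Moser--Wigger, take $q$ to be a density that is a tilted (truncated-exponential) profile $\propto e^{-\mu y/\mathsf{A}}$ on a ``bulk'' interval of width $\mathsf{A}+2\xi$ centred on $[0,\mathsf{A}]$ and that decays with Gaussian tails outside it, where the tilt $\mu>0$ is kept only when $\alpha<\tfrac12$ and is set to $0$ (leaving only the margin parameter $\xi>0$) when $\alpha\ge\tfrac12$. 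With this $q$, $-\log q(y)$ is affine on the bulk and quadratic on the tails, so $\mathbb{E}_{X,Y}[-\log q(Y)]$ splits into a constant, a multiple of $\mathbb{E}[X]$ (controlled by $\mathbb{E}[X]\le\alpha\mathsf{A}$), and tail terms of the form $\mathbb{P}(Y\text{ in a tail})$ and $\mathbb{E}[(X+Z)^2\mathbf{1}\{Y\text{ in a tail}\}]$. Bounding the tail terms uniformly over all feasible inputs --- using $X\in[0,\mathsf{A}]$ to locate the worst case at the endpoints $X\in\{0,\mathsf{A}\}$, together with monotonicity and convexity of the relevant $\mathcal{Q}$-functions --- produces exactly the $\mathcal{Q}\bigl(\tfrac{\xi+\alpha\mathsf{A}}{\sigma}\bigr)$, $\mathcal{Q}\bigl(\tfrac{\xi+(1-\alpha)\mathsf{A}}{\sigma}\bigr)$, $\mathcal{Q}\bigl(\tfrac{\xi}{\sigma}\bigr)$, and Gaussian-moment terms $\tfrac{\xi}{\sqrt{2\pi}\sigma}e^{-\xi^2/(2\sigma^2)}$ that appear in \eqref{eq:ci}. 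Since the bound holds for every $(\mu,\xi)$, minimizing over them gives $c_1$ when $\alpha<\tfrac12$ and $c_2$ when $\alpha\ge\tfrac12$, i.e. \eqref{eq:c0_ub}.

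\emph{Main obstacle.} The difficulty is entirely in the upper bound, and specifically in controlling $\mathbb{E}_{X,Y}[-\log q(Y)]$ \emph{uniformly} over the whole feasible set of input laws: unlike the lower bound, where a single good input suffices, here the quadratic tail penalty is not obviously extremized by a simple input, and one must argue carefully that placing mass at the endpoints of $[0,\mathsf{A}]$ is worst while the exponential-bulk contribution is governed by the mean. This sign-and-convexity bookkeeping is what yields the unwieldy closed forms $c_1,c_2$; the remaining pieces --- the EPI step, the truncated-exponential entropy computation, and the one- or two-parameter minimization --- are routine.
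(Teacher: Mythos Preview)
Your proposal is correct and follows exactly the approach of Lapidoth--Moser--Wigger, to which the paper simply defers for the proof; the paper itself only remarks that the lower bound comes from choosing the entropy-maximizing input $\mathsf{p}(x;\mathsf{A},\alpha)$ (truncated exponential or uniform), which is precisely your EPI step. Your sketch of the upper bound via the duality/test-output-distribution argument with a bulk-plus-Gaussian-tails $q$ is the standard derivation of $c_1,c_2$ in that reference.
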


	The detailed proof of Lemma \ref{lemma:Cp2p} can be found in \cite{Lapidoth-Moser-Wigger-2009}, where we note that the lower bound $\underline{\mathsf{C}} (\mathsf{A}, \alpha)$ is obtained by choosing $X$ to follow $\mathsf{p}(x;\mathsf{A},\alpha)$, as defined in \eqref{eq:p} below, which maximizes $\mathsf{h}(X)$ under the constraints $X\in[0,\mathsf{A}]$ and $\mathbb{E}[X]\le\alpha\mathsf{A}$ \cite{Lapidoth-Moser-Wigger-2009}. 
	\begin{equation}\label{eq:p}
		\mathsf{p}(x;\mathsf{A},\alpha)=\begin{cases}
		\frac{1}{ \mathcal{\mathsf{A}} } \frac{\mu^*}{1- e^{-\mu^*}} e^{-\frac{\mu^* x}{\mathsf{A}}}, & \alpha\in(0,\frac{1}{2}), \\
		\frac{1}{\mathsf{A}}, & \alpha\in[\frac{1}{2},1].
		\end{cases}
	\end{equation}
	For $X\sim\mathsf{p}(x;\mathsf{A},\alpha)$, the corresponding mean and differential entropy of $X$ are given as follows:
	\begin{subequations}
		\begin{align}
			\mathbb{E}[X] &= \begin{cases}
			\alpha \mathsf{A}, & \alpha\in(0,\frac{1}{2}), \\
			\frac{\mathsf{A}}{2}, & \alpha\in[\frac{1}{2},1], 
			\end{cases} \\
			\mathsf{h}(X) &= \begin{cases}
			\substack{\log(e) \mu^* \bigl( \frac{1}{\mu^*} - \frac{e^{-\mu^*}}{1-e^{-\mu^*}} \bigr) \\ \hspace{2.5em}-\log \bigl( \frac{1}{\mathsf{A}} \frac{\mu^*}{1-e^{-\mu^*}} \bigr), } & \alpha\in(0,\frac{1}{2}),\\
			\log(\mathsf{A}), & \alpha\in[\frac{1}{2},1],
			\end{cases} \\
			& = \frac{1}{2}\log\bigl( 2\pi e\rho(\alpha)\mathsf{A}^2 \bigr).\label{h_of_X_A_alpha}
		\end{align}
	\end{subequations}
	Additionally, the following properties are required in the asymptotic analysis in Sec. \ref{Sec:Asymptotics}.
	\begin{remark}\label{rmk:property}
		The lower and upper bounds of $\mathsf{C}(\mathsf{A},\alpha)$ in Lemma \ref{lemma:Cp2p} satisfy the following properties:
		\begin{itemize}
			\item[(i)] $\varepsilon(\mathsf{A}, \alpha)\triangleq\overline{\mathsf{C}}(x,\alpha)-\underline{\mathsf{C}}(x,\alpha)< 0.7$ bits for all $\mathsf{A}\in\mathbb{R}_+$, and $\lim_{\frac{\mathsf{A}}{\sigma} \rightarrow \infty}\varepsilon(\mathsf{A}, \alpha) = 0$;
			\item[(ii)] $0<\rho(\alpha)\le\frac{1}{2\pi e}$ and $\rho(\alpha)$ is monotonically nondecreasing;
			\item[(iii)] $\underline{\mathsf{C}}(\mathsf{A},\alpha)$ is monotonically increasing in $\mathsf{A}$, $\forall \alpha\in(0,1]$;
			%\item $\overline{\mathsf{C}}(\mathsf{A},\alpha)$ is monotonically increasing with $\mathsf{A}$, $\forall \alpha\in(0,1]$;
		\end{itemize}
	\end{remark}	
%	\begin{proof}
The proof of (i) can be found in \cite{Lapidoth-Moser-Wigger-2009} which also states that $\varepsilon(\mathsf{A}, \alpha)<1\; \textnormal{nat}\approx0.7\;\textnormal{bits}$, obtained through numerical evaluation; (ii) follows since $2^{2\mathsf{h}(X)}=2\pi e\rho(\alpha)\mathsf{A}^2$ and since $h(X)\le\log(A)$ \cite[Example 12.2.4]{book-EIT}, and the monotonicity of $\rho(\alpha)$ is obtained numerically; and (iii) follows since the logarithm in \eqref{eq:c0_lb} is monotonically increasing.
		%4) We did not find a clear way to prove Property 4 rigorously due to the maximization expression, but it is an assumption according to the observation from numerically evaluating $\overline{\mathsf{C}}(\mathsf{A},\alpha)$, which is of high confidence. In case $\overline{\mathsf{C}}(\mathsf{A},\alpha)$ does not hold this property strictly, we can easily construct an monotonically non-decreasing upper bound $\overline{\mathsf{C}}'(\mathsf{A},\alpha)$ to replace $\overline{\mathsf{C}}(\mathsf{A},\alpha)$ without changing any asymptotic results in Sec. \ref{Sec:Asymptotics}, as long as $\overline{\mathsf{C}}'(\mathsf{A},\alpha)\ge\overline{\mathsf{C}}(\mathsf{A},\alpha)$ and $\lim_{\frac{\mathsf{A}}{\sigma} \rightarrow \infty}\overline{\mathsf{C}}' (\mathsf{A}, \alpha) - \underline{\mathsf{C}} (\mathsf{A}, \alpha) = 0$. As an example,  $\overline{\mathsf{C}}'(\mathsf{A},\alpha)=\max_{x\in[0,\mathsf{A}]}\overline{\mathsf{C}}(x,\alpha)$ is a valid choice if needed.
%	\end{proof}

	Next, we provide some definitions that will be needed in the sequel. Let 	$\boldsymbol{\kappa}=(\kappa_1,\kappa_2)$, $\boldsymbol{\eta}=(\eta_1,\eta_2)$, and $\boldsymbol{\theta}=(\theta_1,\theta_2)$, wherein $\kappa,\eta_i\in\mathbb{R}_+$,  $\theta_i\in(0,1]$, $i=1,2$.

\begin{definition}[Feasible input distributions]\label{Def:TIN_Feasible_Dist}
	For $\mathbf{x}=(x_1,x_2)$, the set $\mathcal{P}(\boldsymbol{\kappa},\boldsymbol{\theta})$ of feasible input distributions for the IM/DD IC is defined as the set of distributions $p(\mathbf{x})=p(x_1)p(x_2)$ satisfying $p(\mathbf{x})=0$ if $\mathbf{x}\notin[0,\kappa_1]\times [0,\kappa_2]$ and $\mathbb{E}[X_i] \le \theta_i\kappa_i$, $i=1,2$.
\end{definition}

\begin{definition}[Feasible distributions for the Han-Kobayashi (HK) Scheme]\label{Def:HK_Feasible_Dist}
	For $\mathbf{u}=(u_1,u_2)$, and $\mathbf{w}=(w_1,w_2)$, the set $\mathcal{P}_{\rm HK}(\boldsymbol{\kappa}, \boldsymbol{\eta}, \boldsymbol{\theta})$ of feasible input distributions for a HK scheme in the IM/DD IC is defined as the set of distributions $p(\mathbf{u},\mathbf{w})= \prod_{i=1}^{2}p(u_i)p(w_i)$ satisfying $p(\mathbf{u},\mathbf{w})=0$ if $(\mathbf{u},\mathbf{w})\notin[0,\kappa_1]\times[0,\kappa_2]\times[0,\eta_1]\times[0,\eta_2]$, and $\mathbb{E}[U_i+W_i] \le \theta_i(\kappa_i+\eta_i)$. 
\end{definition}

Then, Lemma \ref{Lem:FunctionF} and Corollary \ref{Cor:FunctionF} below state a mutual-information lower bound that will be used to bound the achievable rate of decoding a pair of messages from a received signal with noise and interference. Lemma \ref{Lem:FunctionF} and Corollary \ref{Cor:FunctionF} will help to derive the achievable rates of TIN and the HK scheme in Sec. \ref{Sec:Bounds}. 

\begin{lemma}\label{Lem:FunctionF}
Consider independent random variables $V_i\in[0,a_i]$, $i=1,2,3$ satisfying $\mathbb{E}[V_i]\le\theta_ia_i$ where $a_i\in\mathbb{R}_+$ is the peak intensity parameter and $\theta_i\in(0,1]$ is the average-to-peak ratio parameter, and let $\bar{V}=V_1+V_2+V_3+Z$ where $Z\sim\mathcal{N}(0,\sigma^2)$ is independent of $V_i$. Then, if $V_i$, $i=1,2,$ is distributed according to $\mathsf{p}(v_i;a_i,\theta_i)$ (cf. \eqref{eq:p}), we have
\begin{align}\label{eq:F}
	\mathsf{I}(V_1, V_2; \bar{V}) &\ge \mathsf{F}(a_1,a_2,a_3,\theta_1,\theta_2,\theta_3) \notag \\
	&\triangleq \frac{1}{2} \log \left(  1 + \frac{ \rho(\theta_1)a_1^2+\rho(\theta_2)a_2^2 } {\sigma^2 2^{ 2 \overline{\mathsf{C}}(a_3, \theta_3) } } \right).
\end{align}
\end{lemma}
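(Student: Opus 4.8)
The plan is to bound $\mathsf{I}(V_1,V_2;\bar V)$ from below by expanding it as a difference of differential entropies and then applying the entropy power inequality (EPI) to the three mutually independent summands that make up $\bar V$.

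First I would write $\mathsf{I}(V_1,V_2;\bar V)=\mathsf{h}(\bar V)-\mathsf{h}(\bar V\mid V_1,V_2)$. Since $(V_1,V_2)$ is independent of $(V_3,Z)$ and differential entropy is translation invariant, conditioning on $V_1,V_2$ merely shifts $\bar V$ by a constant, so $\mathsf{h}(\bar V\mid V_1,V_2)=\mathsf{h}(V_3+Z)$, giving $\mathsf{I}(V_1,V_2;\bar V)=\mathsf{h}(\bar V)-\mathsf{h}(V_3+Z)$. Next, treating $\bar V=V_1+V_2+(V_3+Z)$ as a sum of three independent scalar random variables, the EPI yields $2^{2\mathsf{h}(\bar V)}\ge 2^{2\mathsf{h}(V_1)}+2^{2\mathsf{h}(V_2)}+2^{2\mathsf{h}(V_3+Z)}$. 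Substituting this into the previous identity,
\[
\mathsf{I}(V_1,V_2;\bar V)\ge \frac{1}{2}\log\!\left(1+\frac{2^{2\mathsf{h}(V_1)}+2^{2\mathsf{h}(V_2)}}{2^{2\mathsf{h}(V_3+Z)}}\right).
\]

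To finish, I would plug in two ingredients. First, since $V_i\sim\mathsf{p}(v_i;a_i,\theta_i)$ for $i=1,2$, equation \eqref{h_of_X_A_alpha} gives $2^{2\mathsf{h}(V_i)}=2\pi e\,\rho(\theta_i)a_i^2$. Second, I need an \emph{upper} bound on $\mathsf{h}(V_3+Z)$: writing $\mathsf{h}(V_3+Z)=\mathsf{h}(Z)+\mathsf{I}(V_3;V_3+Z)$ (again because $\mathsf{h}(V_3+Z\mid V_3)=\mathsf{h}(Z)$), and noting that $V_3\in[0,a_3]$ with $\mathbb{E}[V_3]\le\theta_3 a_3$ makes $\mathsf{I}(V_3;V_3+Z)$ an achievable rate of the IM/DD P2P channel, Lemma \ref{lemma:Cp2p} gives $\mathsf{I}(V_3;V_3+Z)\le \mathsf{C}(a_3,\theta_3)\le\overline{\mathsf{C}}(a_3,\theta_3)$, hence $2^{2\mathsf{h}(V_3+Z)}\le 2\pi e\,\sigma^2\,2^{2\overline{\mathsf{C}}(a_3,\theta_3)}$. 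Because the right-hand side of the displayed inequality is decreasing in the denominator $2^{2\mathsf{h}(V_3+Z)}$, substituting this upper bound preserves the inequality and, after the $2\pi e$ factors cancel, produces exactly $\mathsf{F}(a_1,a_2,a_3,\theta_1,\theta_2,\theta_3)$.

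There is no serious obstacle here; the one point requiring care is the choice of how to group $\bar V$ for the EPI. Keeping $V_3+Z$ together as a single summand — rather than peeling off the Gaussian $Z$ and using a four-term EPI — is what makes the factor $2\pi e\,\sigma^2$ cancel cleanly and lets the term $\overline{\mathsf{C}}(a_3,\theta_3)$ appear in the denominator exactly as claimed; peeling off $Z$ would instead yield a strictly weaker bound with $\overline{\mathsf{C}}(a_3,\theta_3)$ subtracted outside the logarithm. One should also note the finiteness of all differential entropies involved, which is immediate since $V_1,V_2$ are bounded with bounded densities and $V_3+Z$ has a Gaussian-smoothed density, so that the EPI and the entropy identities apply.
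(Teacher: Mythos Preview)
Your proof is correct and follows essentially the same route as the paper: expand the mutual information as $\mathsf{h}(\bar V)-\mathsf{h}(V_3+Z)$, apply the three-term EPI with $V_3+Z$ kept together, then substitute $2^{2\mathsf{h}(V_i)}=2\pi e\,\rho(\theta_i)a_i^2$ for $i=1,2$ and bound $\mathsf{h}(V_3+Z)=\mathsf{h}(Z)+\mathsf{I}(V_3;V_3+Z)\le \tfrac12\log(2\pi e\sigma^2)+\overline{\mathsf{C}}(a_3,\theta_3)$. Your added remarks on the grouping choice and on finiteness of the entropies are sound elaborations but not departures from the paper's argument.
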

\begin{proof}
We have
\begin{subequations}
	\begin{align}
		\mathsf{I}(V_1,V_2; \bar{V}) &= \mathsf{h}(V_1 + V_2 + V_3 + Z) - \mathsf{h}(V_3 + Z)\\
		%& \ge \frac{1}{2} \log\left(  2^{ 2 \mathsf{h}(V_1)}+2^{ 2 \mathsf{h}(V_2)} + 2^{ 2 \mathsf{h}( V_3+Z )}\right) - \mathsf{h}(V_3 + Z) \\ 
		&\ge \frac{1}{2} \log\left( 1+ \frac{2^{ 2 \mathsf{h}(V_1) }+2^{ 2 \mathsf{h}(V_2) }}{2^{ 2 \mathsf{h}( V_3+Z )} }\right)\\
		&= \frac{1}{2} \log\left( 1+ \frac{2^{ 2 \mathsf{h}(V_1) }+2^{ 2 \mathsf{h}(V_2) }}{2^{ 2 \mathsf{I}(V_3;V_3+Z)+2 \mathsf{h}(Z)} }\right),
	\end{align}
\end{subequations}
where the inequality follows using the entropy-power inequality (EPI) to lower-bound $\mathsf{h}(V_1 + V_2 + V_3 + Z)$ by $\frac{1}{2} \log\left(  2^{ 2 \mathsf{h}(V_1)}+2^{ 2 \mathsf{h}(V_2)} + 2^{ 2 \mathsf{h}( V_3+Z )}\right)$. 
Then, recall that $\mathsf{I}(V_3;V_3+Z)\leq \overline{\mathsf{C}}( a_3,\theta_3)$ by Lemma \ref{lemma:Cp2p} and $\mathsf{h}(Z)=\frac{1}{2}\log(2\pi e \sigma^2)$. Also, $2^{2\mathsf{h}(V_i)}=2\pi e \rho(\theta_i)a_i^2$ (cf. \eqref{h_of_X_A_alpha}) since $V_i\sim\mathsf{p}(v_i;a_i,\theta_i)$, $i=1,2$. Thus,
\begin{align}
\mathsf{I}(V_1,V_2; \bar{V})
&\geq \frac{1}{2} \log\left( 1+ \frac{\rho(\theta_1)a_1^2 + \rho(\theta_2)a_2^2 }{\sigma^2 2^{ 2 \overline{\mathsf{C}}(a_3,\theta_3)} }\right),
\end{align}
which proves the lemma.
\end{proof}
The following corollary is a direct result of Lemma \ref{Lem:FunctionF}.
\begin{corollary}\label{Cor:FunctionF}
$\mathsf{I}(V_2; V_2+V_3+Z) \ge \mathsf{F}(0,a_2,a_3,1,\theta_2,\theta_3)$, where $V_2$, $V_3$, $Z$, $a_i$, and $\theta_i$ are as defined in Lemma~\ref{Lem:FunctionF}.
\end{corollary}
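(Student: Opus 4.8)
The plan is to obtain Corollary~\ref{Cor:FunctionF} as the degenerate specialization of Lemma~\ref{Lem:FunctionF} in which the first random variable is removed. Concretely, I would set $a_1=0$ and $\theta_1=1$ in Lemma~\ref{Lem:FunctionF}. The support constraint $V_1\in[0,a_1]=\{0\}$ then forces $V_1=0$ almost surely, so that $\bar{V}=V_1+V_2+V_3+Z=V_2+V_3+Z$ and $\mathsf{I}(V_1,V_2;\bar{V})=\mathsf{I}(V_2;V_2+V_3+Z)$. On the other side of the inequality, the function $\mathsf{F}$ becomes $\mathsf{F}(0,a_2,a_3,1,\theta_2,\theta_3)=\frac{1}{2}\log\big(1+\tfrac{\rho(1)\cdot 0^2+\rho(\theta_2)a_2^2}{\sigma^2 2^{2\overline{\mathsf{C}}(a_3,\theta_3)}}\big)$, where the $\rho(1)\cdot 0^2$ term vanishes; hence the claimed bound is exactly the bound delivered by Lemma~\ref{Lem:FunctionF} with $a_1=0$.

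The one point that needs a little care is that a point mass at $0$ has differential entropy $-\infty$, so the step in the proof of Lemma~\ref{Lem:FunctionF} that writes $2^{2\mathsf{h}(V_1)}=2\pi e\rho(\theta_1)a_1^2$ should be read as the (correct) statement $2^{2\mathsf{h}(V_1)}=0$ when $a_1=0$. I would sidestep this by simply reproducing the short argument directly for two variables: starting from $\mathsf{I}(V_2;V_2+V_3+Z)=\mathsf{h}(V_2+V_3+Z)-\mathsf{h}(V_3+Z)$, apply the entropy-power inequality to get $\mathsf{h}(V_2+V_3+Z)\ge\frac{1}{2}\log\big(2^{2\mathsf{h}(V_2)}+2^{2\mathsf{h}(V_3+Z)}\big)$, whence $\mathsf{I}(V_2;V_2+V_3+Z)\ge\frac{1}{2}\log\big(1+2^{2\mathsf{h}(V_2)}/2^{2\mathsf{h}(V_3+Z)}\big)$. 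Then use $\mathsf{h}(V_3+Z)=\mathsf{I}(V_3;V_3+Z)+\mathsf{h}(Z)\le\overline{\mathsf{C}}(a_3,\theta_3)+\frac{1}{2}\log(2\pi e\sigma^2)$ from Lemma~\ref{lemma:Cp2p}, together with $2^{2\mathsf{h}(V_2)}=2\pi e\rho(\theta_2)a_2^2$ from \eqref{h_of_X_A_alpha} since $V_2\sim\mathsf{p}(v_2;a_2,\theta_2)$, to arrive at $\frac{1}{2}\log\big(1+\rho(\theta_2)a_2^2/(\sigma^2 2^{2\overline{\mathsf{C}}(a_3,\theta_3)})\big)$, which equals $\mathsf{F}(0,a_2,a_3,1,\theta_2,\theta_3)$.

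I do not anticipate any genuine obstacle here: the corollary is an immediate consequence of Lemma~\ref{Lem:FunctionF}, and the only subtlety is the degenerate first variable, which the two-variable entropy-power-inequality argument above handles cleanly (a limiting argument letting $a_1\downarrow 0$ in Lemma~\ref{Lem:FunctionF} would serve equally well, using the continuity of $\mathsf{F}$ in $a_1$ at $a_1=0$).
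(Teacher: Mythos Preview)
Your proposal is correct and is exactly what the paper intends: it states that the corollary is ``a direct result of Lemma~\ref{Lem:FunctionF}'' without giving a separate proof, i.e., the specialization $a_1=0$, $\theta_1=1$ (so $V_1\equiv 0$). Your observation about the degenerate $\mathsf{h}(V_1)$ and the clean two-variable EPI rederivation are more careful than what the paper writes, but the approach is the same.
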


Lastly, regarding the function $\mathsf{F}(a_1,a_2,a_3,\theta_1,\theta_2,\theta_3)$ in \eqref{eq:F}, it is worth to mention two special cases that will be frequently used in Sec. \ref{Sec:Asymptotics}:
\begin{enumerate}
	\item when $a_1,a_3=0$, $F(0,a_2,0,\theta_1,\theta_2,\theta_3)=\underline{\mathsf{C}}(a_2,\theta_2)$, and 
	\item when $\frac{a_3}{\sigma}\to\infty$, 
	\begin{multline}\label{F_Asymptotic}
		\lim_{\frac{a_3}{\sigma}\to\infty}\bigg[ \mathsf{F}(a_1,a_2,a_3,\theta_1,\theta_2,\theta_3) \\  - \frac{1}{2} \log \left(  1 + \frac{\rho(\theta_1)a_1^2+\rho(\theta_2)a_2^2 } {\rho(\theta_3)a_3^2 } \right) \bigg]=0,
	\end{multline}
	which follows since $\lim_{\frac{a_3}{\sigma} \rightarrow \infty}\varepsilon (a_3, \theta_3) = 0$ (Property (i)).
\end{enumerate} 

In the next section, we derive computable capacity inner and outer bounds for the IM/DD IC based on the preliminaries in this section.

	\section{IM/DD IC Capacity Bounds}
	\label{Sec:Bounds}
	In this section, computable capacity inner and outer bounds for the two-user IM/DD IC are presented. The inner bounds are obtained by TIN and HK schemes, and the outer bounds are obtained by providing side information to the IC to form Z-channels and a genie-aided IC. We start with deriving the two inner bounds.

	\subsection{Computable Inner Bounds}
	
	\subsubsection{Treating Interference as Noise (TIN)}
	A simple scheme to deal with interference in the two-user IC is to treat it as noise. Although it is a passive scheme, it can be efficient when interference is weak, i.e., when the cross-talk channel gain is small. In TIN, each receiver only decodes its desired signal, i.e., decoder $i$ only decodes $M_i$ while treating $X_j^n$ ($j \ne i$) as noise. Through TIN, rate pairs in the following set are achievable:
	\begin{align} \label{eq:TIN-R}
	\mathcal{R}_\text{TIN}\big(p(\mathbf{x})\big) \triangleq \left\lbrace (\mathsf{R}_1, \mathsf{R}_2)\in\mathbb{R}_+^2 \Big| 	\mathsf{R}_i \le \mathsf{I}(X_i ; Y_i), i\in\{1,2\} \right\rbrace,
	\end{align}
	where $\mathbf{x}=(x_1,x_2)$ and $p(\mathbf{x}) \in \mathcal{P}(\boldsymbol{\mathsf{A}},\boldsymbol{\alpha})$ defined in Def.  \ref{Def:TIN_Feasible_Dist}, wherein $\boldsymbol{\mathsf{A}}=(\mathsf{A},\mathsf{A})$. 
	To obtain the achievable rate region of TIN, one has to evaluate the mutual information in \eqref{eq:TIN-R} and take the union with respect to all feasible input probability distributions, which is intractable. Next, we simplify this task by providing an easily computable inner bound based on \eqref{eq:TIN-R}. 
	
	\begin{thm} \label{thm:TIN-innerB}
		The capacity region of the IC is inner bounded by $\mathcal{G}'_\text{TIN} \subseteq \mathcal{G}$, where 
		\begin{equation}\label{Gprime_TIN}
		\mathcal{G}'_\text{TIN} \triangleq {\rm Conv} \Biggl( \bigcup_{\substack{ \kappa_i\in[0,\mathsf{A}],\\\theta_i\kappa_i\in[0,\alpha_i\mathsf{A}],\; i=1,2}} \mathcal{R}'_\text{TIN}(\boldsymbol{\kappa},\boldsymbol{\theta}) \Biggr) ,
		\end{equation}
		wherein $\boldsymbol{\kappa}=(\kappa_1,\kappa_2)$ is the allocated peak intensities at the transmitters, $\boldsymbol{\theta}=(\theta_1,\theta_2)$, and 
		\begin{multline}\label{eq:TIN-R'}
		\mathcal{R}'_\text{TIN}(\boldsymbol{\kappa},\boldsymbol{\theta}) \triangleq  \Big\{(\mathsf{R}_1, \mathsf{R}_2)\in\mathbb{R}_+^2 \Big|\\	 \mathsf{R}_i \le \mathsf{F}(0, h_{ii}\kappa_i, h_{ji}\kappa_j, 1,\theta_i,\theta_j ), i,j\in\{1,2\}, i\neq j \Bigr\}.
		\end{multline}
	\end{thm}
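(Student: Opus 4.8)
The plan is to exhibit, for each admissible $(\boldsymbol{\kappa},\boldsymbol{\theta})$, a product input distribution for which the ordinary TIN scheme achieves every pair in $\mathcal{R}'_\text{TIN}(\boldsymbol{\kappa},\boldsymbol{\theta})$, and then close the resulting set under convex combinations. First I would recall the standard random-coding argument for TIN: for any $p(\mathbf{x})=p(x_1)p(x_2)\in\mathcal{P}(\boldsymbol{\mathsf{A}},\boldsymbol{\alpha})$, drawing each codeword $X_i^n$ i.i.d.\ according to $p(x_i)$ and letting decoder $i$ recover $M_i$ while treating $h_{ji}X_j^n+Z_i^n$ as noise achieves the region $\mathcal{R}_\text{TIN}(p(\mathbf{x}))$ of \eqref{eq:TIN-R}; hence $\mathcal{R}_\text{TIN}(p(\mathbf{x}))\subseteq\mathcal{G}$ for every such $p(\mathbf{x})$.

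The core step is then to choose $p(x_i)=\mathsf{p}(x_i;\kappa_i,\theta_i)$ from \eqref{eq:p} and to lower-bound the resulting $\mathsf{I}(X_i;Y_i)$ by $\mathsf{F}(0,h_{ii}\kappa_i,h_{ji}\kappa_j,1,\theta_i,\theta_j)$. I would first check feasibility: since $X_i\in[0,\kappa_i]\subseteq[0,\mathsf{A}]$ and $\mathbb{E}[X_i]\le\theta_i\kappa_i\le\alpha_i\mathsf{A}$, indeed $p(\mathbf{x})\in\mathcal{P}(\boldsymbol{\mathsf{A}},\boldsymbol{\alpha})$. For the mutual-information bound, when $h_{ii}>0$ the map $x\mapsto h_{ii}x$ is a bijection, so $\mathsf{I}(X_i;Y_i)=\mathsf{I}(h_{ii}X_i;Y_i)$; and because the exponent $\mu^*$ in \eqref{eq:p} depends only on $\theta_i$, the scaled variable $h_{ii}X_i$ is again of the form $\mathsf{p}(\,\cdot\,;h_{ii}\kappa_i,\theta_i)$ supported on $[0,h_{ii}\kappa_i]$ with $\mathbb{E}[h_{ii}X_i]\le\theta_i(h_{ii}\kappa_i)$, while $h_{ji}X_j\in[0,h_{ji}\kappa_j]$ has $\mathbb{E}[h_{ji}X_j]\le\theta_j(h_{ji}\kappa_j)$. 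Writing $Y_i=h_{ii}X_i+h_{ji}X_j+Z_i$ and applying Corollary \ref{Cor:FunctionF} with $V_2=h_{ii}X_i$, $V_3=h_{ji}X_j$, $Z=Z_i$, $a_2=h_{ii}\kappa_i$, $a_3=h_{ji}\kappa_j$ yields exactly $\mathsf{I}(X_i;Y_i)\ge\mathsf{F}(0,h_{ii}\kappa_i,h_{ji}\kappa_j,1,\theta_i,\theta_j)$; the degenerate case $h_{ii}=0$ is consistent since both $\mathsf{I}(X_i;Y_i)$ and $\mathsf{F}(0,0,h_{ji}\kappa_j,1,\theta_i,\theta_j)$ vanish. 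Consequently $\mathcal{R}'_\text{TIN}(\boldsymbol{\kappa},\boldsymbol{\theta})\subseteq\mathcal{R}_\text{TIN}\bigl(\mathsf{p}(x_1;\kappa_1,\theta_1)\mathsf{p}(x_2;\kappa_2,\theta_2)\bigr)\subseteq\mathcal{G}$ for every admissible $(\boldsymbol{\kappa},\boldsymbol{\theta})$.

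To finish, I would take the union over all $\kappa_i\in[0,\mathsf{A}]$ with $\theta_i\kappa_i\in[0,\alpha_i\mathsf{A}]$, which stays inside $\mathcal{G}$, and then invoke convexity: $\mathcal{G}$ is closed and convex (any convex combination of achievable rate pairs is itself achievable by time-sharing between the corresponding codebooks), so ${\rm Conv}\bigl(\bigcup_{\boldsymbol{\kappa},\boldsymbol{\theta}}\mathcal{R}'_\text{TIN}(\boldsymbol{\kappa},\boldsymbol{\theta})\bigr)=\mathcal{G}'_\text{TIN}\subseteq\mathcal{G}$, which is the claim. I expect the only genuinely delicate point to be the change-of-variables/scaling step — verifying that $h_{ii}X_i$ retains the extremal distribution \eqref{eq:p} with parameters $(h_{ii}\kappa_i,\theta_i)$, and that its mean constraint scales correctly, so that Corollary \ref{Cor:FunctionF} applies verbatim — with everything else being the standard TIN achievability result, the feasibility check, and the convexity argument.
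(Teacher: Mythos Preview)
Your proposal is correct and follows essentially the same route as the paper: pick $X_i\sim\mathsf{p}(x_i;\kappa_i,\theta_i)$, verify feasibility in $\mathcal{P}(\boldsymbol{\mathsf{A}},\boldsymbol{\alpha})$, note that $h_{ii}X_i\sim\mathsf{p}(\,\cdot\,;h_{ii}\kappa_i,\theta_i)$, apply Corollary~\ref{Cor:FunctionF} to lower-bound $\mathsf{I}(X_i;Y_i)$, and then take the union and convex hull. The only cosmetic difference is that you spell out the scaling argument and the $h_{ii}=0$ edge case more explicitly than the paper does.
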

	\begin{proof}
		
		For some $\kappa_i,\theta_i$, $i=1,2$, such that $\kappa_i\in[0,\mathsf{A}]$ and $\theta_i\kappa_i\in[0,\alpha_i\mathsf{A}]$, $i\in\{1,2\}$, let $\mathsf{R}_1=\mathsf{I}(X_1; Y_1)$ and $\mathsf{R}_2=\mathsf{I}(X_2; Y_2)$ with $(X_1, X_2)\sim p(\mathbf{x})=\prod_{i=1}^{2}\mathsf{p}(x_i;\kappa_{i},\theta_i)$. As $p(x)\in \mathcal{P}(\boldsymbol{\mathsf{A}},\boldsymbol{\alpha})$, we have $(\mathsf{R}_1,\mathsf{R}_2)\in\mathcal{R}_\text{TIN}\big(p(\mathbf{x})\big)$. 
		On the other hand, $\tilde{X}_{11}\triangleq h_{11}X_1\sim \mathsf{p}(\tilde{x}_{11};h_{11}\kappa_1,\theta_1)$ and $\tilde{X}_{21}\triangleq h_{21}X_2\sim \mathsf{p}(\tilde{x}_{21};h_{21}\kappa_2,\theta_2)$. Using Corollary \ref{Cor:FunctionF}, we obtain
		\begin{align} %\label{eq:TIN-I1}
			\mathsf{R}_1 & = \mathsf{I}(\tilde{X}_{11};Y_1) \geq \mathsf{F}(0,h_{11}\kappa_1, h_{21}\kappa_2, 1,\theta_1,\theta_2),
		\end{align}
		and, similarly, $\mathsf{R}_2 \ge \mathsf{F}(0,h_{22}\kappa_2, h_{12}\kappa_1, 1,\theta_2,\theta_1)$. Since $(\mathsf{R}_1,\mathsf{R}_2)$ are achievable given any $\kappa_i,\theta_i$, $i=1,2$, such that $\kappa_i\in[0,\mathsf{A}]$ and $\theta_i\kappa_i\in[0,\alpha_i\mathsf{A}]$, then rate pairs in $\mathcal{R}'_\text{TIN}(\boldsymbol{\kappa},\boldsymbol{\theta})$ are all achievable, which concludes the proof.
	\end{proof}
	
	\subsubsection{Han-Kobayashi (HK) Scheme}
	A general Han-Kobayashi scheme for the two-user IC makes use of private and common messages together with superposition encoding and joint decoding to achieve a larger rate region than TIN \cite{Han-kobyashi-1981}. Here, we study the rate region achieved by a simplified HK scheme \cite{Etkin-Tse-Wang-2008} for the two-user IM/DD IC, which transmits the sum of the private and common messages and decodes successively. Specifically, in the simplified HK scheme (Fig. \ref{fig:HK}), each message $M_i \in \mathcal{M}_i$ is split into two parts, a private message $M_i^{\rm p} \in \mathcal{M}_i^{\rm p}$ with rate $\mathsf{R}_i^{\rm p}$ and a common message $M_i^{\rm c} \in \mathcal{M}_i^{\rm c}$ with rate $\mathsf{R}_i^{\rm c}$, $i=1,2$, so that $\mathsf{R}_i = \mathsf{R}_i^{\rm p} + \mathsf{R}_i^{\rm c}$. Then, the messages $M_i^{\rm p}$ and $M_i^{\rm c}$ are encoded independently into $U_i^n$ and $W_i^n$, respectively, where $U_i^n$ is decoded at receiver $i$ but $W_i^n$ is decoded at both receivers. 
	We choose $(\mathbf{U},\mathbf{W})\sim p(\mathbf{u},\mathbf{w})\in\mathcal{P}_{\rm HK}(\boldsymbol{\kappa},\boldsymbol{\eta},\boldsymbol{\alpha})$ defined in Def. \ref{Def:HK_Feasible_Dist} with $\kappa_i+\eta_i\le\mathsf{A}$.	
	Then, the channel input is constructed as $X_i=U_i + W_i$ resulting in an output $Y_i= h_{ii}(U_i + W_i) + h_{ji}(U_j + W_j) + Z_i$, $i \ne j$. Receiver $i$ obtains $M_1^{\rm c}$ and $M_2^{\rm c}$ first using joint decoding, then reconstructs $W_1$ and $W_2$ and subtracts them from $Y_i$, and finally uses the obtained signal for decoding $U_i$ while treating $U_j$ ($j \ne i$) as noise. 
	\begin{figure}[!tbp]
		\centering
%		\resizebox {0.6\columnwidth} {!} { 
		\resizebox {\columnwidth} {!} { 
			\begin{tikzpicture}[inner sep=0, outer sep=0]
			\node(x1) at (1,0) {$X_1$};
			\node(x2) at (1,-2) {$X_2$};
			\node(y1) at (3.5,0) {$Y_1$};
			\node(y2) at (3.5,-2) {$Y_2$};
			\node(z1) at (2.7,1) {$Z_1$};
			\node(z2) at (2.7,-3) {$Z_2$};
			\node (p1) at (2.7,0) [circle, draw] {\large $+$};
			\node (p2) at (2.7,-2) [circle, draw] {\large $+$};
			%\node(p1) at (2.7,0) {$\oplus$};
			%\node(p2) at (2.7,-2) {$\oplus$};
			\node [below of=x1, node distance = 15, xshift = 10] {$h_{12}$};
			\node [above of=x2, node distance = 15, xshift = 10] {$h_{21}$};
			
			\draw[->] (x1) to node[above]{$h_{11}$} (p1);
			\draw[->] (x1) to (p2);
			\draw[->] (x2) to (p1);
			\draw[->] (x2) to node[above]{$h_{22}$} (p2);
			\draw[->] (z1) to (p1);
			\draw[->] (z2) to (p2);
			\draw[->] (p1) to (y1);
			\draw[->] (p2) to (y2);
			
			\node(e1p) at (-2.5,.5) [rectangle, draw, inner sep=4] {Enc. 1-p};
			\node(e2p) at (-2.5,-1.5) [rectangle, draw, inner sep=4] {Enc. 2-p};
			\node(e1c) at (-2.5,-.5) [rectangle, draw, inner sep=4] {Enc. 1-c};
			\node(e2c) at (-2.5,-2.5) [rectangle, draw, inner sep=4] {Enc. 2-c};
			\node(m1p) at (-4,.5) {$M_1^{\rm p}$};
			\node(m1c) at (-4,-.5) {$M_1^{\rm c}$};
			\node(m2p) at (-4,-1.5) {$M_2^{\rm p}$};
			\node(m2c) at (-4,-2.5) {$M_2^{\rm c}$};
			\draw[->] (m1p) to (e1p);
			\draw[->] (m1c) to (e1c);
			\draw[->] (m2p) to (e2p);
			\draw[->] (m2c) to (e2c);
			\node(u1) at (-1,.5) {$U_1^n$};
			\node(u2) at (-1,-1.5) {$U_2^n$};
			\node(w1) at (-1,-.5) {$W_1^n$};
			\node(w2) at (-1,-2.5) {$W_2^n$};
			\draw[->] (e1p) to (u1);
			\draw[->] (e2p) to (u2);
			\draw[->] (e1c) to (w1);
			\draw[->] (e2c) to (w2);
			
			\node (sum1) at (0,0) [circle, draw] {\large $+$};
			\node (sum2) at (0,-2) [circle, draw] {\large $+$};
			\draw[->] (u1) to (sum1);
			\draw[->] (u2) to (sum2);
			\draw[->] (w1) to (sum1);
			\draw[->] (w2) to (sum2);
			\draw[->] (sum1) to (x1);
			\draw[->] (sum2) to (x2);
			
			\node(d1) at (5,0) [rectangle, draw, inner sep=4] {Dec. 1};
			\node(d2) at (5,-2) [rectangle, draw, inner sep=4] {Dec. 2};
			%		\node(m1h) at (7.5,0) {$\hat{M}_{1p}, \hat{M}_{1c}, \hat{M}_{2c}$};
			%		\node(m2h) at (7.5,-2) {$\hat{M}_{2p}, \hat{M}_{1c}, \hat{M}_{2c}$};
			\node(m1h) at (7.5,0) {$\hat{M}_1^{\rm p},(\hat{M}_1^{\rm c},\hat{M}_2^{\rm c})$};
			%		\node [below of=m1h, node distance = 14] {$\hat{M}_{1p}$};
			%		\node [below of=m1h, node distance = 14] {$\hat{M}_{2c}$};
			\node(m2h) at (7.5,-2) {$\hat{M}_2^{\rm p},(\hat{M}_1^{\rm c},\hat{M}_2^{\rm c})$};
			%		\node [below of=m2h, node distance = 14]  {$\hat{M}_{2p}$};
			%		\node [below of=m2h, node distance = 14]  {$\hat{M}_{2c}$};
			\draw[->] (y1) to (d1);
			\draw[->] (y2) to (d2);
			\draw[->] (d1) to (m1h);
			\draw[->] (d2) to (m2h);
			\end{tikzpicture}	
		}
		\caption{In the simplified Han-Kobayashi (HK) scheme, message $M_i$ is split into a private part $M_i^{\rm p}$ and a common part $M_i^{\rm c}$. These are encoded individually and their sum is transmitted. Each receiver decodes both common and one private message.}
		\label{fig:HK}
	\end{figure}
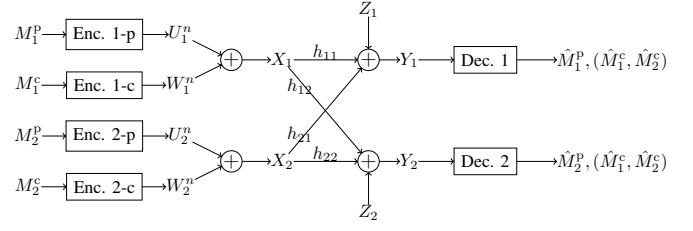
	
	Using the simplified HK scheme with distribution $p(\mathbf{u},\mathbf{w}) \in \mathcal{P}_{\rm HK}(\boldsymbol{\kappa},\boldsymbol{\eta},\boldsymbol{\alpha})$, the following rate constraints must be satisfied for reliable transmission, for $i,j\in\{1,2\}$ and $i\neq j$,
	\begin{subequations} 
		\label{eq:HK-R}
		\begin{IEEEeqnarray}{rCll}
			\mathsf{R}_i^{\rm u} &\le& \mathsf{I}(U_i; Y_i| W_1, W_2), \\
			\mathsf{R}_i^{\rm w} &\le& \min\bigl\{\mathsf{I}(W_i; Y_i | W_j), \mathsf{I}(W_i; Y_j | W_j)\bigr\},	\\
			\mathsf{R}_1^{\rm w}+\mathsf{R}_2^{\rm w}&\le& \min\bigl\{\mathsf{I}(W_1, W_2; Y_1), \mathsf{I}(W_1, W_2; Y_2)\bigr\}. 
		\end{IEEEeqnarray}
	\end{subequations}
	Thus, the following rate region is achievable: 
	\begin{multline}\label{eq:HK-R-all}
		\mathcal{R}_\text{HK}\big(p(\mathbf{u},\mathbf{w})\big) \triangleq \Bigl\{ (\mathsf{R}_1, \mathsf{R}_2)\in\mathbb{R}_+^2 \Big| \mathsf{R}_i = \mathsf{R}_i^{\rm u} + \mathsf{R}_i^{\rm w}, i\in\{1,2\}, \\ \eqref{eq:HK-R} \text{ is satisfied} \Bigr\}.
	\end{multline}
	The following theorem states a computable inner bound based on \eqref{eq:HK-R-all}. 

	\begin{thm} \label{thm:HK-innerB}
			The capacity region of the IC is inner bounded by $\mathcal{G}'_{\rm HK} \subseteq \mathcal{G}$, where 
			\begin{equation} \label{GprimeHK}
				\mathcal{G}'_{\rm HK} \triangleq {\rm Conv} \Biggl( \bigcup \mathcal{R}'_\text{HK} (\boldsymbol{\kappa}, \boldsymbol{\eta}, \boldsymbol{\theta}, \boldsymbol{\phi}) \Biggr) ,
				\end{equation}
			where the union is with respect to peak intensities allocated to the private and common messages, $\boldsymbol{\kappa}=(\kappa_1,\kappa_2)$ and  $\boldsymbol{\eta}=(\eta_1,\eta_2)$, respectively, and average-to-peak ratios of the private and common messages, $\boldsymbol{\theta}=(\theta_1, \theta_2)$ and $\boldsymbol{\phi}=(\phi_1,\phi_2)$, respectively, such that, for $i=1,2$, 
			\begin{subequations}
				%\label{eq:HK_cond}
				\begin{align*} 
					\kappa_i,\eta_i&\in[0,\mathsf{A}],\\
					\kappa_i+\eta_i&\le\mathsf{A}, \\
					\theta_i\kappa_i+\phi_i\eta_i&\le\alpha_i\mathsf{A}
				\end{align*}
			\end{subequations}
			and $\mathcal{R}'_\text{HK} (\boldsymbol{\kappa}, \boldsymbol{\eta}, \boldsymbol{\theta}, \boldsymbol{\phi})$ is the set of rate pairs $(\mathsf{R}_1, \mathsf{R}_2)\in\mathbb{R}_+^2$ satisfying that, for $i,j\in\{1,2\}$, $i\ne j$,
			\begin{subequations}
				\label{eq:HK-R'}
				\begin{align}
					\mathsf{R}_i \le &\;\mathsf{F}(0,h_{ii}\kappa_i, h_{ji}\kappa_j, 1,\theta_i,\theta_j)\notag\\
					& {+}\min_{k\in\{1,2\}} \mathsf{F}(0, h_{ik}\eta_i, h_{ik}\kappa_i {+} h_{jk}\kappa_j, 1,\phi_i,\theta_{k}' ), \label{eq:HK-R'-1}\\
					\mathsf{R}_1{+}\mathsf{R}_2 &\le \sum_{\substack{i,j\in\{1,2\},\\i\ne j}} \mathsf{F}(0,h_{ii}\kappa_i, h_{ji}\kappa_j, 1,\theta_i,\theta_j)   \notag \\
					& \;\;\; {+}\min_{\substack{i,j\in\{1,2\},\\i\ne j}} \mathsf{F}( h_{ii}\eta_i, h_{ji}\eta_j, h_{ii}\kappa_i {+} h_{ji}\kappa_j, \phi_i,\phi_j,\theta_i' ), \label{eq:HK-R'-2}
				\end{align}
			\end{subequations}
			where $\theta_i'=\frac{h_{ii}\theta_i\kappa_i+h_{ji}\theta_j\kappa_j}{h_{ii}\kappa_i+h_{ji}\kappa_j}$ if $h_{ii}\kappa_i+h_{ji}\kappa_j>0$ and $\theta_i'=0$ otherwise, $i\neq j$.
		\end{thm}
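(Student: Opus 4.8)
\emph{Proof sketch.} The plan is to start from the Han--Kobayashi achievable region \eqref{eq:HK-R-all}, specialize the auxiliary inputs, eliminate the private/common rate split by Fourier--Motzkin, and lower-bound the surviving mutual informations with Lemma~\ref{Lem:FunctionF} and Corollary~\ref{Cor:FunctionF}. Fix an admissible tuple $(\boldsymbol{\kappa},\boldsymbol{\eta},\boldsymbol{\theta},\boldsymbol{\phi})$ and choose $U_i\sim\mathsf{p}(u_i;\kappa_i,\theta_i)$ and $W_i\sim\mathsf{p}(w_i;\eta_i,\phi_i)$ (cf.\ \eqref{eq:p}), all mutually independent, with $X_i=U_i+W_i$. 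Since $\kappa_i+\eta_i\le\mathsf{A}$ we have $X_i\in[0,\mathsf{A}]$, and since $\mathbb{E}[U_i]\le\theta_i\kappa_i$ and $\mathbb{E}[W_i]\le\phi_i\eta_i$ we get $\mathbb{E}[X_i]\le\theta_i\kappa_i+\phi_i\eta_i\le\alpha_i\mathsf{A}$; hence $p(\mathbf{u},\mathbf{w})$ is a legitimate HK input for the IM/DD IC and $\mathcal{R}_\text{HK}\big(p(\mathbf{u},\mathbf{w})\big)\subseteq\mathcal{G}$. It therefore suffices to show $\mathcal{R}'_\text{HK}(\boldsymbol{\kappa},\boldsymbol{\eta},\boldsymbol{\theta},\boldsymbol{\phi})\subseteq\mathcal{R}_\text{HK}\big(p(\mathbf{u},\mathbf{w})\big)$, after which taking the union over all admissible tuples and the convex hull (achievable by time-sharing, $\mathcal{G}$ being convex) gives $\mathcal{G}'_{\rm HK}\subseteq\mathcal{G}$.

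For the Fourier--Motzkin step, substitute $\mathsf{R}_i^{\rm u}=\mathsf{R}_i-\mathsf{R}_i^{\rm w}$ into \eqref{eq:HK-R}, so that the nonnegative variables $\mathsf{R}_1^{\rm w},\mathsf{R}_2^{\rm w}$ must lie in the intervals $\big[\max\{0,\mathsf{R}_i-\mathsf{I}(U_i;Y_i|W_1,W_2)\},\,\min\{\mathsf{R}_i,\mathsf{I}(W_i;Y_i|W_j),\mathsf{I}(W_i;Y_j|W_j)\}\big]$ and satisfy $\mathsf{R}_1^{\rm w}+\mathsf{R}_2^{\rm w}\le\min\{\mathsf{I}(W_1,W_2;Y_1),\mathsf{I}(W_1,W_2;Y_2)\}$. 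Such a pair exists iff each interval is nonempty and the two lower endpoints sum to at most the sum bound; expanding the $\max\{0,\cdot\}$ terms yields the per-user bounds $\mathsf{R}_i\le\mathsf{I}(U_i;Y_i|W_1,W_2)+\min_{k}\mathsf{I}(W_i;Y_k|W_j)$, the sum bound $\mathsf{R}_1{+}\mathsf{R}_2\le\sum_i\mathsf{I}(U_i;Y_i|W_1,W_2)+\min_{k}\mathsf{I}(W_1,W_2;Y_k)$, and the ``mixed'' bounds $\mathsf{R}_i\le\mathsf{I}(U_i;Y_i|W_1,W_2)+\min_{k}\mathsf{I}(W_1,W_2;Y_k)$. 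The mixed bounds are redundant, since $\mathsf{I}(W_1,W_2;Y_k)=\mathsf{I}(W_j;Y_k)+\mathsf{I}(W_i;Y_k|W_j)\ge\mathsf{I}(W_i;Y_k|W_j)$ for each $k$, so $\min_{k}\mathsf{I}(W_i;Y_k|W_j)\le\min_{k}\mathsf{I}(W_1,W_2;Y_k)$ and each mixed bound follows from the matching per-user bound.

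To bound the surviving mutual informations, I strip from each received signal the common signals known at the relevant receiver. Conditioned on $(W_1,W_2)$, receiver $i$ sees $h_{ii}U_i+h_{ji}U_j+Z_i$; since scaling a $\mathsf{p}(\cdot;a,\theta)$ variable by $h\ge0$ again yields a $\mathsf{p}(\cdot;ha,\theta)$ variable (the exponent $\mu^*$ depends only on $\theta$), we have $h_{ii}U_i\sim\mathsf{p}(\cdot;h_{ii}\kappa_i,\theta_i)$ while $h_{ji}U_j\in[0,h_{ji}\kappa_j]$ has mean at most $\theta_j h_{ji}\kappa_j$, so Corollary~\ref{Cor:FunctionF} gives $\mathsf{I}(U_i;Y_i|W_1,W_2)\ge\mathsf{F}(0,h_{ii}\kappa_i,h_{ji}\kappa_j,1,\theta_i,\theta_j)$. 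For the common-message terms, conditioned on $W_j$ receiver $k\in\{1,2\}$ sees $h_{ik}W_i+(h_{ik}U_i+h_{jk}U_j)+Z_k$, in which the aggregate private term is supported on $[0,h_{ik}\kappa_i+h_{jk}\kappa_j]$ with mean at most $h_{ik}\theta_i\kappa_i+h_{jk}\theta_j\kappa_j$, i.e.\ average-to-peak ratio at most $\theta_k'$; Corollary~\ref{Cor:FunctionF} then gives $\mathsf{I}(W_i;Y_k|W_j)\ge\mathsf{F}(0,h_{ik}\eta_i,h_{ik}\kappa_i+h_{jk}\kappa_j,1,\phi_i,\theta_k')$. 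Finally, at receiver $i$ the pair $(W_1,W_2)$ is seen through $h_{ii}W_i+h_{ji}W_j+(h_{ii}U_i+h_{ji}U_j)+Z_i$, and Lemma~\ref{Lem:FunctionF} applied with $(V_1,V_2,V_3)=(h_{ii}W_i,h_{ji}W_j,h_{ii}U_i+h_{ji}U_j)$---admissible because Lemma~\ref{Lem:FunctionF} requires only $V_3$'s peak and average constraints, not that $V_3$ be $\mathsf{p}$-distributed---gives $\mathsf{I}(W_1,W_2;Y_i)\ge\mathsf{F}(h_{ii}\eta_i,h_{ji}\eta_j,h_{ii}\kappa_i+h_{ji}\kappa_j,\phi_i,\phi_j,\theta_i')$. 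Inserting these three bounds into the reduced region reproduces exactly \eqref{eq:HK-R'-1}--\eqref{eq:HK-R'-2}, completing the containment.

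The \emph{main obstacle} is the bookkeeping of the Fourier--Motzkin elimination together with the redundancy check for the mixed constraints---getting the $\max\{0,\cdot\}$ case split and the inequality $\mathsf{I}(W_i;Y_k|W_j)\le\mathsf{I}(W_1,W_2;Y_k)$ right---and, to a lesser extent, correctly reading off the peak and average-to-peak parameters of each superimposed signal, notably the aggregate-private ratio $\theta_i'$, and handling the degenerate cases $h_{ij}=0$ or $h_{ii}\kappa_i+h_{ji}\kappa_j=0$; the latter are absorbed by the convention $\theta_i'=0$ together with the identity $\mathsf{F}(0,a_2,0,\cdot,\theta_2,\cdot)=\underline{\mathsf{C}}(a_2,\theta_2)$. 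Everything else is a routine substitution.
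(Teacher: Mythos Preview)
Your proof is correct and follows essentially the same route as the paper: choose the entropy-maximizing inputs $\mathsf{p}(\cdot;\kappa_i,\theta_i)$ and $\mathsf{p}(\cdot;\eta_i,\phi_i)$, lower-bound each mutual information in \eqref{eq:HK-R} via Lemma~\ref{Lem:FunctionF}/Corollary~\ref{Cor:FunctionF}, and eliminate the private/common split by Fourier--Motzkin. Your version is in fact slightly more careful than the paper's, since you perform the elimination at the mutual-information level and then verify via the chain rule that the ``mixed'' constraints are redundant---a step the paper leaves implicit when it asserts that Fourier--Motzkin on the $\mathsf{F}$-bounds yields exactly \eqref{eq:HK-R'}.
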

	\begin{proof}
		Choose $(\kappa_i, \eta_i, \theta_i, \phi_i)$, $i=1,2$, such that the constraints in Theorem \ref{thm:HK-innerB} are satisfied and let $U_i\sim \mathsf{p}(u_i;\kappa_i,\theta_i)$ and $W_i\sim \mathsf{p}(w_i;\eta_i,\phi_i)$, which leads to 
		$p(\mathbf{u},\mathbf{w})\in\mathcal{P}_{\rm HK}(\boldsymbol{\kappa}, \boldsymbol{\eta},\boldsymbol{\alpha})$. Then, use \eqref{eq:HK-R} with $p(\mathbf{u},\mathbf{w})$ to obtain $\mathsf{R}_i^{\rm u}$ and $\mathsf{R}_i^{\rm w}$ that satisfy \eqref{eq:HK-R} with equality. Let $\mathsf{R}_i = \mathsf{R}_i^{\rm u} + \mathsf{R}_i^{\rm w}$. Then, we have $(\mathsf{R}_1,\mathsf{R}_2) \in \mathcal{R}_\text{HK}\big(p(\mathbf{u},\mathbf{w})\big)$ is achievable by the HK scheme. Also, since $\tilde{U}_{11}=h_{11}U_1\sim \mathsf{p}(\tilde{u}_{11};h_{11}\kappa_1,\theta_1)$ and $\tilde{U}_{21}=h_{21}U_2\in[0,h_{21}\kappa_2]$ with $\mathbb{E}[\tilde{U}_{21}]\le h_{21}\theta_2\kappa_2$, then
		\begin{align} \label{eq:HK-I1}
			\mathsf{R}_1^{\rm u} &= \mathsf{I}(\tilde{U}_{11};\tilde{U}_{11} + \tilde{U}_{21} + Z_1)  \ge \mathsf{F}(0,h_{11}\kappa_1, h_{21}\kappa_2, 1,\theta_1,\theta_2)
			\end{align}
		by Corollary \ref{Cor:FunctionF}. Similarly, we can obtain $\mathsf{R}_2^{\rm u} \ge \mathsf{F}(0,h_{22}\kappa_2, h_{12}\kappa_1, 1,\theta_2,\theta_1)$. Next, observing that $\tilde{W}_{11}=h_{11}W_1\sim \mathsf{p}(\tilde{w}_{11};h_{11}\eta_1,\phi_1)$ and $h_{11}U_1+h_{21}U_2$ is in $[0,h_{11}\kappa_1+h_{21}\kappa_2]$ with $\mathbb{E}[h_{11}U_1+h_{21}U_2]\le h_{11}\theta_1\kappa_1+h_{21}\theta_2\kappa_2=\theta_1'(h_{11}\kappa_1+h_{21}\kappa_2)$, then
		\begin{align}
			\mathsf{R}_1^{\rm w}\ge\mathsf{I}(W_1; Y_1 | W_2) \ge \mathsf{F}(0, h_{11}\eta_1, h_{11}\kappa_1 + h_{21}\kappa_2,1,\phi_1,\theta_1')
			\end{align}
		by Corollary \ref{Cor:FunctionF}. Similarly,
		\begin{align}
			\mathsf{R}_1^{\rm w} &\ge\mathsf{I}(W_1; Y_2 | W_2) \notag\\ 
			&\ge  \mathsf{F}(0, h_{12}\eta_1, h_{12}\kappa_1 + h_{22}\kappa_2,1,\phi_1,\theta_2')\\*
			\mathsf{R}_2^{\rm w} &\ge\mathsf{I}(W_2; Y_1 | W_1) \notag\\ 
			&\ge  \mathsf{F}(0, h_{21}\eta_2, h_{11}\kappa_1 + h_{21}\kappa_2,1,\phi_2,\theta_1')\\*
			\mathsf{R}_2^{\rm w} &\ge\mathsf{I}(W_2; Y_2 | W_1) \notag\\  
			&\ge  \mathsf{F}(0, h_{22}\eta_2, h_{12}\kappa_1 + h_{22}\kappa_2,1,\phi_2,\theta_2').
			\end{align}
		Using similar arguments and using Lemma \ref{Lem:FunctionF}, we can show that
		\begin{align}
			\mathsf{R}_1^{\rm w}+\mathsf{R}_2^{\rm w}&\ge\mathsf{I}(W_1, W_2; Y_i) \notag \\ 
			&\ge  \mathsf{F}( h_{ii}\eta_i, h_{ji}\eta_j, h_{ii}\kappa_i + h_{ji}\kappa_j, \phi_i,\phi_j,\theta_i'),
			\end{align}
		for $i=1,2$. Plugging these lower bounds into \eqref{eq:HK-R} and using Fourier-Motzkin's elimination, we obtain the region $\mathcal{R}'_\text{HK}(\boldsymbol{\kappa},\boldsymbol{\eta},\boldsymbol{\theta},\boldsymbol{\phi})$ described in \eqref{eq:HK-R'}. Since $(\mathsf{R}_1,\mathsf{R}_2)$ are achievable given any $(\boldsymbol{\kappa},\boldsymbol{\eta},\boldsymbol{\theta},\boldsymbol{\phi})$ that satisfies the constraints in Theorem \ref{thm:HK-innerB}, then rate pairs in $\mathcal{R}'_\text{HK}(\boldsymbol{\kappa}, \boldsymbol{\eta}, \boldsymbol{\theta},\boldsymbol{\phi})$ are all achievable. Therefore,  $\mathcal{G}'_\text{HK}$ defined in  \eqref{GprimeHK} is achievable, which ends the proof. 
		\end{proof}
	
	It is worth to note that $\mathcal{G}'_{\rm TIN} \subseteq \mathcal{G}'_{\rm HK}$ since TIN is a special case of the HK scheme when only private messages are sent, i.e., when we fix $\eta_1=\eta_2=0$.

	\subsection{Computable Outer Bounds}
	
	Since providing side information to the receivers does not shrink the capacity region, this provides an outer bound for the capacity of the IC. In this subsection, two outer bounds for the two-user IM/DD IC are obtained based on two choices of side information.	
	
	\subsubsection{Z-Channels}
	The first outer bound is obtained by providing the interference signal to one of the receivers forming a one-sided IC, also known as a Z-channel \cite{costa1985gaussian,kramer2004outer}. Using Fano's inequality, the sum-rate of the Z-channel can be upper-bounded as
	\begin{subequations}
		\begin{align}
			n(\mathsf{R}_1 + \mathsf{R}_2) &\le \mathsf{I}(X_i^n; Y_i^n) + \mathsf{I}(X_j^n; Y_j^n , X_i^n) + n\epsilon_n \\
			& = \mathsf{I}(X_i^n; Y_i^n) + \mathsf{I}(X_j^n; Y_j^n | X_i^n) + n\epsilon_n \label{eq:z_temp} \\
			& = \mathsf{I}(X_1^n, X_2^n; Y_i^n) + \mathsf{I}(X_j^n; Y_j^n | X_i^n) \notag \\
			& \quad - \mathsf{I}(X_j^n; Y_i^n | X_i^n) + n\epsilon_n,
		\end{align}
	\end{subequations}
	for $i,j\in\{1,2\}$ and $i\neq j$, where $\epsilon_n\ge0$ tends to zero as $n\to\infty$ and the second equality follows from the independence of $X_1^n$ and $X_2^n$. Note that \eqref{eq:z_temp} represents a Z-channel where receiver $j$ has no interference from transmitter $i$. 
	%which follows since $\mathsf{I}(X_i^n; Y_i^n) + \mathsf{I}(X_j^n; Y_j^n | X_i^n) = \mathsf{I}(X_i^n, X_j^n; Y_i^n) + \mathsf{I}(X_j^n; Y_j^n | X_i^n) - \mathsf{I}(X_j^n; Y_i^n | X_i^n)$ where $\mathsf{I}(X_j^n; Y_j^n | X_i^n) - \mathsf{I}(X_j^n; Y_i^n | X_i^n) = \mathsf{I}(X_j^n;h_{jj}X_j^n+Z_j^n) - \mathsf{I}(X_j^n;h_{ji}X_j^n + Z_i^n)$ where $Z_i$ and $Z_j$ has the same variance. Therefore, $\mathsf{I}(X_j^n;h_{jj}X_j^n+Z_j^n) > \mathsf{I}(X_j^n;h_{ji}X_j^n + Z_i^n)$ when $h_{ji}<h_{jj}$ and $\mathsf{I}(X_j^n;h_{jj}X_j^n+Z_j^n) \le \mathsf{I}(X_j^n;h_{ji}X_j^n + Z_i^n)$ when $h_{jj}<h_{ji}$.
	This enables us to describe the following outer bound for the IM/DD IC. Given an arbitrary input distribution $p(\mathbf{x}) \in \mathcal{P}(\boldsymbol{\mathsf{A}},\boldsymbol{\alpha})$, an achievable rate pair $(\mathsf{R}_1,\mathsf{R}_2)$ must satisfy, for $i,j\in\{1,2\}$ and $i\ne j$,
	\begin{subequations}
		\label{eq:Z-R}
		\begin{align}
		 \mathsf{R}_i &\le \frac{1}{n}\mathsf{I}(X_i^n; Y_i^n | X_j^n) + \epsilon_n, \label{eq:Z-R-a} \\
		\mathsf{R}_1 + \mathsf{R}_2 &\le \frac{1}{n}\Bigl[\mathsf{I}(X_i^n, X_j^n; Y_i^n) + \mathsf{I}(X_j^n; Y_j^n | X_i^n) \notag \\ &\quad\quad - \mathsf{I}(X_j^n; Y_i^n | X_i^n)\Bigr]+ \epsilon_n.  \label{eq:Z-R-b}
		\end{align}
	\end{subequations}
	The following theorem provides an easily computable outer bound based on \eqref{eq:Z-R}. 
	
	\begin{thm}\label{thm:Z-outerB}
		The capacity region of the IM/DD IC is outer bounded by $\mathcal{G} \subseteq \mathcal{G}'_{\rm Z}$, where $\mathcal{G}'_{\rm Z}$ is the set of all $(\mathsf{R}_1, \mathsf{R}_2)\in\mathbb{R}_+^2$ that satisfy
		\begin{subequations} \label{eq:Z'-R}
			\begin{align} 
			\mathsf{R}_i &\le \overline{\mathsf{C}}(h_{ii}\mathsf{A},\alpha_i), \\ 
			\mathsf{R}_1 + \mathsf{R}_2 &\le \overline{\mathsf{C}}\big(h_{ii}\mathsf{A}+h_{ji}\mathsf{A}, \alpha_i''\big) \notag \\
			&\quad -\min \Biggl\{ 0,  \frac{1}{2} \log \Biggl(  \max\Biggl\{  \frac{h_{ji}^2}{h_{jj}^2},  \frac{1-\frac{h_{ji}^2}{h_{jj}^2}}{2^{2\overline{\mathsf{C}}(h_{jj}\mathsf{A}, \alpha_j)}} \Biggr\} \Biggr) \Biggr\},  
			\end{align}
		\end{subequations}
		where $\alpha_i''=\frac{h_{ii}\alpha_i+h_{ji}\alpha_j}{h_{ii}+h_{ji}}$, $i,j\in\{1,2\}$, and $i\neq j$.
	\end{thm}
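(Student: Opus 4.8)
The plan is to take the single-letter constraints \eqref{eq:Z-R} (already obtained from Fano's inequality) as the starting point and reduce every mutual-information term there to a point-to-point IM/DD quantity that the upper bound $\overline{\mathsf{C}}$ of Lemma~\ref{lemma:Cp2p} controls. For the per-user constraint \eqref{eq:Z-R-a}, conditioning on $X_j^n$ turns the channel into $Y_i^n=h_{ii}X_i^n+(\text{a known shift})+Z_i^n$, so $\mathsf{I}(X_i^n;Y_i^n|X_j^n)=\mathsf{h}(h_{ii}X_i^n+Z_i^n)-\mathsf{h}(Z_i^n)$. A standard single-letterization — sub-additivity of $\mathsf{h}(h_{ii}X_i^n+Z_i^n)$ under the i.i.d.\ noise, then bounding each per-letter term by $\mathsf{C}\bigl(h_{ii}\mathsf{A},\mathbb{E}[X_i^{(t)}]/\mathsf{A}\bigr)$ and invoking monotonicity and concavity of $\mathsf{C}(\mathsf{A},\cdot)$ in its average-constraint argument together with $\frac1n\sum_t\mathbb{E}[X_i^{(t)}]\le\alpha_i\mathsf{A}$ — yields $\frac1n\mathsf{I}(X_i^n;Y_i^n|X_j^n)\le\mathsf{C}(h_{ii}\mathsf{A},\alpha_i)\le\overline{\mathsf{C}}(h_{ii}\mathsf{A},\alpha_i)$, which is the first line of \eqref{eq:Z'-R}.

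For the sum-rate constraint \eqref{eq:Z-R-b} I would bound its three terms separately. Conditioning on $X_i^n$ makes $Y_j^n$ (resp.\ $Y_i^n$) a scaled copy of $X_j^n$ plus a known shift plus $Z_j^n$ (resp.\ $Z_i^n$), so the last two terms collapse: $\mathsf{I}(X_j^n;Y_j^n|X_i^n)-\mathsf{I}(X_j^n;Y_i^n|X_i^n)=\mathsf{h}(h_{jj}X_j^n+Z_j^n)-\mathsf{h}(h_{ji}X_j^n+Z_i^n)$, the two $\mathsf{h}(Z)$ terms cancelling because $Z_i$ and $Z_j$ have the same variance. The first term, $\mathsf{I}(X_i^n,X_j^n;Y_i^n)=\mathsf{h}(S^n+Z_i^n)-\mathsf{h}(Z_i^n)$ with $S^n=h_{ii}X_i^n+h_{ji}X_j^n\in[0,(h_{ii}+h_{ji})\mathsf{A}]^n$, is handled exactly like the per-user term, now noting $\frac1n\sum_t\mathbb{E}[S^{(t)}]\le h_{ii}\alpha_i\mathsf{A}+h_{ji}\alpha_j\mathsf{A}=(h_{ii}+h_{ji})\alpha_i''\mathsf{A}$, whence $\frac1n\mathsf{I}(X_i^n,X_j^n;Y_i^n)\le\overline{\mathsf{C}}\bigl((h_{ii}+h_{ji})\mathsf{A},\alpha_i''\bigr)$.

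It remains to lower-bound $D\triangleq\frac1n\bigl[\mathsf{h}(h_{ji}X_j^n+Z_i^n)-\mathsf{h}(h_{jj}X_j^n+Z_j^n)\bigr]$ by $\min\bigl\{0,\tfrac12\log\bigl(\max\{g,(1-g)2^{-2\overline{\mathsf{C}}(h_{jj}\mathsf{A},\alpha_j)}\}\bigr)\bigr\}$, $g\triangleq h_{ji}^2/h_{jj}^2$, which is the crux. For $g\le1$ I would write $Z_i^n\overset{d}{=}\sqrt{g}\,Z'^n+\sqrt{1-g}\,Z''^n$ for independent $Z'^n,Z''^n\sim\mathcal{N}(0,\sigma^2I)$, so that $h_{ji}X_j^n+Z_i^n\overset{d}{=}\sqrt{g}\,(h_{jj}X_j^n+Z'^n)+\sqrt{1-g}\,Z''^n$; the EPI then gives $2^{\frac2n\mathsf{h}(h_{ji}X_j^n+Z_i^n)}\ge g\,2^{\frac2n\mathsf{h}(h_{jj}X_j^n+Z_j^n)}+(1-g)\,2\pi e\sigma^2$, and dividing through by $2^{\frac2n\mathsf{h}(h_{jj}X_j^n+Z_j^n)}\le 2\pi e\sigma^2\,2^{2\overline{\mathsf{C}}(h_{jj}\mathsf{A},\alpha_j)}$ (the same single-letter/Lemma~\ref{lemma:Cp2p} estimate once more) gives $2^{2D}\ge g+(1-g)\,2^{-2\overline{\mathsf{C}}(h_{jj}\mathsf{A},\alpha_j)}\ge\max\{g,(1-g)2^{-2\overline{\mathsf{C}}(h_{jj}\mathsf{A},\alpha_j)}\}$. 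For $g>1$ this splitting of $Z_i^n$ is impossible and one only needs $D\ge0$, i.e.\ $\mathsf{h}(h_{ji}X_j^n+Z_i^n)\ge\mathsf{h}(h_{jj}X_j^n+Z_j^n)$: since $h_{ji}\ge h_{jj}$ and the noises are equal in law, this is the monotonicity of a Gaussian channel's output differential entropy in its gain-to-noise ratio — equivalently, the fact that adding Gaussian noise of variance $\nu_2$ to a signal carrying a Gaussian component of variance $\nu_1$ increases the differential entropy by at most $\frac n2\log\frac{\nu_1+\nu_2}{\nu_1}$ — which follows from de~Bruijn's identity and Stam's inequality. Substituting the three bounds into \eqref{eq:Z-R-b}, dividing by $n$ and letting $n\to\infty$ so that $\epsilon_n\to0$ gives the sum-rate line of \eqref{eq:Z'-R}, and exchanging the roles of $i$ and $j$ produces the companion constraints, yielding $\mathcal{G}\subseteq\mathcal{G}'_{\rm Z}$. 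I expect the $g>1$ sub-case to be the delicate point: the input distribution is arbitrary (possibly discrete, so $\mathsf{h}(h_{jj}X_j^n)$ may be $-\infty$), hence the difference of differential entropies cannot be estimated term by term and must be routed entirely through the EPI and the P2P upper bound of Lemma~\ref{lemma:Cp2p}.
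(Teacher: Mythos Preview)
Your proposal is correct and follows essentially the same architecture as the paper: single-letterize \eqref{eq:Z-R-a} and the first term of \eqref{eq:Z-R-b} to invoke Lemma~\ref{lemma:Cp2p}, and for the remaining difference $\mathsf{I}(X_j^n;Y_j^n|X_i^n)-\mathsf{I}(X_j^n;Y_i^n|X_i^n)$ use an EPI-based noise-splitting argument when $h_{ji}<h_{jj}$ and a separate monotonicity argument when $h_{ji}\ge h_{jj}$. The EPI step you describe (split $Z_i^n\overset{d}{=}\sqrt{g}\,Z'^n+\sqrt{1-g}\,Z''^n$) is exactly the paper's computation up to the choice of normalization: the paper scales by $1/h_{jj}$ and $1/h_{ji}$ and introduces $\tilde Z\sim\mathcal{N}\bigl(0,(h_{ji}^{-2}-h_{jj}^{-2})\sigma^2\bigr)$, but the two are algebraically equivalent.

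The one place where your route diverges is the $g>1$ sub-case, and here you are making it harder than it is. The paper simply observes that the difference equals $\mathsf{I}\bigl(X_j^n;X_j^n+Z^n/h_{jj}\bigr)-\mathsf{I}\bigl(X_j^n;X_j^n+Z^n/h_{ji}\bigr)$, and since $h_{ji}\ge h_{jj}$ the first channel is a degraded version of the second, so the data-processing inequality gives $\le 0$ immediately. Your entropy-increase statement (``adding Gaussian noise of variance $\nu_2$ to a signal already carrying a Gaussian component of variance $\nu_1$ raises $\mathsf{h}$ by at most $\tfrac{n}{2}\log\tfrac{\nu_1+\nu_2}{\nu_1}$'') is true and is in fact \emph{equivalent} to that DPI step once you add and subtract $\mathsf{h}(Z^n)$; de~Bruijn/Stam are not needed. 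Your worry about discrete inputs is also unfounded: the quantities at stake are $\mathsf{h}(h\cdot X_j^n+Z^n)$, which are always finite because the output has a density, and the DPI argument never touches $\mathsf{h}(X_j^n)$ itself. So the ``delicate'' case is actually the trivial one.
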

	\begin{proof}
		To prove the first inequality above, from \eqref{eq:Z-R-a} we have
		\begin{subequations}
			\label{eq:Ri-Z'}
			\begin{align}
				\mathsf{R}_i &\le \frac{1}{n}  \mathsf{I}(X_i^n;Y_i^n|X_j^n) + \epsilon_n   \\
				%&=\frac{1}{n}\mathsf{I}(X_i^n;h_{ii}X_i^n+h_{ji}X_j^n+Z_i^n|X_j^n) + \epsilon_n  \notag \\
				&=\frac{1}{n}\mathsf{I}(X_i^n;h_{ii}X_i^n+Z_i^n) + \epsilon_n    \\
				%&\overset{(a)}{\le} \mathsf{C}(h_{ii}\kappa_i, \alpha) + \epsilon_n  \notag\\
				&\le \overline{\mathsf{C}}(h_{ii}\mathsf{A}, \alpha_i) + \epsilon_n, 
			\end{align}
		\end{subequations}
		where the last inequality follows since that, for a memoryless channel described by  $Y_i^{(t)}=X_i^{(t)}+Z_i^{(t)}$, $\mathsf{I}(X_i^n;Y_i^n)\le\sum_{t=1}^{n}\mathsf{I}(X_i^{(t)};Y_i^{(t)})$.
		
		The sum-rate outer bounds in \eqref{eq:Z'-R} is obtained from \eqref{eq:Z-R-b}. Firstly, the first term in \eqref{eq:Z-R-b} satisfies 
		\begin{equation}\label{eq:z'-temp-1}
			%\mathsf{I}(X_i^n, X_j^n; Y_i^n) = \mathsf{I}(h_{ii}X_i^n+h_{ji}X_j^n; h_{ii}X_i^n+h_{ji}X_j^n+Z_i^n) \le n\overline{\mathsf{C}}((h_{ii}+h_{ji})\mathsf{A}, \alpha),
			\mathsf{I}(X_i^n, X_j^n; Y_i^n) \le n\overline{\mathsf{C}}\bigl((h_{ii}+h_{ji})\mathsf{A}, \alpha_i''\bigr),
		\end{equation}
		which is obtained similar to \eqref{eq:Ri-Z'} above. Secondly, for the remaining two terms in \eqref{eq:Z-R-b},  when $h_{ji}\ge h_{jj}$, we have 
		\begin{subequations}
			\begin{align}
				&\quad\; \mathsf{I}(X_j^n; Y_j^n | X_i^n) - \mathsf{I}(X_j^n; Y_i^n | X_i^n) \notag \\  
				& = \mathsf{I}(X_j^n; h_{jj}X_j^n+Z^n) - \mathsf{I}(X_j^n; h_{ji}X_j^n+Z^n) \\
				& =  \mathsf{I}\Bigl(X_j^n; X_j^n+\frac{Z^n}{h_{jj}}\Bigr) - \mathsf{I}\Bigl(X_j^n; X_j^n+\frac{Z^n}{h_{ji}}\Bigr) \\
				& \le 0,\label{eq:z'-temp-2}
			\end{align}
		\end{subequations}
		where $\mathsf{I}\bigl(X_j^n; X_j^n+\frac{Z^n}{h_{jj}}\bigr) \le \mathsf{I}\bigl(X_j^n; X_j^n+\frac{Z^n}{h_{ji}}\bigr)$ follows since the channel with input $X_j^n$ and output $X_j^n+\frac{Z^n}{h_{jj}}$ can be seen as an degraded version of the less noisy one with the same input but with output $X_j^n+\frac{Z^n}{h_{ji}}$. When $h_{ji}< h_{jj}$, we have the following 
		\begin{subequations}
			\begin{align}
				&\quad\; \mathsf{I}(X_j^n; Y_j^n | X_i^n) - \mathsf{I}(X_j^n; Y_i^n | X_i^n) \notag\\ 
				&= \mathsf{h}(Y_j^n|X_i^n) {-}\mathsf{h}(Y_j^n|X_i^n,X_j^n) {-} \mathsf{h}(Y_i^n|X_i^n) {+} \mathsf{h}(Y_i^n|X_i^n,X_j^n) \\
				&\overset{}{=} \mathsf{h}\left(X_j^n+\frac{Z_j^n}{h_{jj}}\right) -\mathsf{h}\left(X_j^n+\frac{Z_j^n}{h_{jj}} + \tilde{Z}^n\right) + n\log\left( \frac{h_{jj}}{h_{ji}} \right) \label{ind:z_temp_a} \\
				&\overset{}{=} {-}\frac{1}{2} \log\Biggl( 1{+} \frac{2^{2n\mathsf{h}(\tilde{Z})}}{2^{2n\mathsf{h}\left(\frac{Z_j}{h_{jj}}\right)}2^{2\mathsf{I}(X_j^n;h_{jj}X_j^n+Z_j^n)}} \Biggr) {+} n\log\left( \frac{h_{jj}}{h_{ji}} \right) \label{ind:z_temp_b} \\
				&\overset{}{\le} -\frac{1}{2} \log\Biggl( \left(\frac{h_{ji}}{h_{jj}}\right)^{2n}+ \biggl(\frac{1-\frac{h_{ji}^2}{h_{jj}^2}}{2^{2\overline{\mathsf{C}}(h_{jj}\mathsf{A}, \alpha_j)}}\biggr)^n \Biggr) \label{ind:z_temp_c} \\
				&\overset{}{\le} -\frac{n}{2} \log\Biggl(\max\Biggl\{  \frac{h_{ji}^2}{h_{jj}^2},  \frac{1-\frac{h_{ji}^2}{h_{jj}^2}}{2^{2\overline{\mathsf{C}}(h_{jj}\mathsf{A}, \alpha_j)}} \Biggr\} \Biggr), \label{eq:Rsum-Z'}
			\end{align}
		\end{subequations}
		where \eqref{ind:z_temp_a} follows by letting $\tilde{Z} \sim \mathcal{N}\bigl(0, (\frac{1}{h_{ji}^2}-\frac{1}{h_{jj}^2})\sigma^2\bigr)$ independent of $Z_j$, \eqref{ind:z_temp_b} follows by lower-bounding $\mathsf{h}\bigl(X_j^n+\frac{Z_j^n}{h_{jj}} + \tilde{Z}^n\bigr)$ with $\mathsf{h}\bigl(X_j^n+\frac{Z_j^n}{h_{jj}}\bigr)$ and $\mathsf{h}(\tilde{Z}^n)$ using the EPI and then substituting $\mathsf{h}\bigl(X_j^n+\frac{Z_j^n}{h_{jj}}\bigr)=\mathsf{I}\bigl(X_j^n;X_j^n+\frac{Z_j^n}{h_{jj}}\bigr) - \mathsf{h}\bigl(\frac{Z_j^n}{h_{jj}}\bigr)$, \eqref{ind:z_temp_c} follows from the entropy of a zero-mean Gaussian noise and since $\mathsf{I}(X_j^n;h_{jj}X_j^n+Z_j^n) \le n\overline{\mathsf{C}}(h_{jj}\mathsf{A}, \alpha_i)$, and \eqref{eq:Rsum-Z'} follows since the logarithm is a monotonically increasing function and both $\frac{h_{ji}^{2}}{h_{jj}^{2}}$ and $1-\frac{h_{ji}^2}{h_{jj}^2}$ are nonnegative when $h_{ji}< h_{jj}$. Note that \eqref{eq:Rsum-Z'} is negative when $h_{ji} \ge h_{jj}$ and positive otherwise, which allows us to combine it together with \eqref{eq:z'-temp-1} and \eqref{eq:z'-temp-2} while letting $n\to\infty$ to obtain the sum-rate bound in \eqref{eq:Z'-R}. This ends the proof. 
	\end{proof}
	
	\subsubsection{Genie-aided Channel}
	The second outer bound is obtained by providing $S_i = h_{ij}X_i + Z_j$ as side information to receiver $i\in\{1,2\}$, $i\neq j$, forming a genie-aided IC similar to \cite{Etkin-Tse-Wang-2008}. Given an input distribution $p(\mathbf{x}) \in \mathcal{P}(\boldsymbol{\mathsf{A}},\boldsymbol{\alpha})$ defined in Def. \ref{Def:TIN_Feasible_Dist}, an achievable $(\mathsf{R}_1,\mathsf{R}_2)$ must satisfy
	%so that given arbitrary input distribution $p(x_1, x_2)$ that satisfies $p(x_1, x_2) = p(x_1)p(x_2)$, $x_i \in \mathbb{R}^+$, and $\mathbb{E}[X_i] = \mu_i \le \mathsf{E}$ ($i=1,2$), $(\mathsf{R}_1,\mathsf{R}_2)$ is outer bounded by
	\begin{subequations}
		\label{eq:Ge-R}
		\begin{align}
		\mathsf{R}_i &\le \mathsf{I}(X_i; Y_i | X_j), \quad i,j\in\{1,2\},\ i\neq j,\\
		\mathsf{R}_1 + \mathsf{R}_2 &\le \mathsf{I}(X_1; Y_1, S_1) + \mathsf{I}(X_2; Y_2, S_2).
		\end{align}
	\end{subequations}
	The following theorem provides an easily computable outer bound based on $\mathcal{G}_\text{ge}$. 
	
	\begin{thm}\label{thm:Ge-outerB}
		The capacity region of the IM/DD IC is outer bounded by $\mathcal{G} \subseteq \mathcal{G}'_{\rm ge}$, where $\mathcal{G}'_{\rm ge}$ is the set of all rate pairs $(\mathsf{R}_1, \mathsf{R}_2)\in\mathbb{R}_+^2$ that satisfy
		\begin{subequations}\label{eq:Ge'-R}
			\begin{align} 
			\mathsf{R}_i &\le \overline{\mathsf{C}}(h_{ii}\mathsf{A},\alpha_i),\quad i,j\in\{1,2\},\ i\neq j, \\
			\mathsf{R}_1 + \mathsf{R}_2 &\le \sum_{\substack{i,j=1\\ i\neq j}}^2\overline{\mathsf{C}}\Biggl(\frac{h_{12}h_{21}}{\sqrt{h_{ii}^2+h_{ij}^2}}\mathsf{A}, \alpha_j\Biggr) + \frac{1}{2}\log
			\biggl(1+\frac{h_{ii}^2}{h_{ij}^2}\biggr),
			\end{align}
		\end{subequations}
	\end{thm}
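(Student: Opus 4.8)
The plan is to run the genie-aided outer bound of \cite{Etkin-Tse-Wang-2008} using the IM/DD P2P bound $\overline{\mathsf{C}}$ in place of the Gaussian one, starting from \eqref{eq:Ge-R}; as in the proof of Theorem~\ref{thm:Z-outerB} I work with the $n$-letter versions of \eqref{eq:Ge-R} (mutual informations normalized by $n$, plus a Fano term $\epsilon_n\to0$), and I take the cross gains $h_{12},h_{21}$ to be positive (the degenerate case being trivial). The single-rate bounds $\mathsf{R}_i\le\overline{\mathsf{C}}(h_{ii}\mathsf{A},\alpha_i)$ come for free: conditioning on $X_j^n$ removes the interference term in $Y_i^n$, so $\tfrac1n\mathsf{I}(X_i^n;Y_i^n\mid X_j^n)=\tfrac1n\mathsf{I}(X_i^n;h_{ii}X_i^n+Z_i^n)\le\overline{\mathsf{C}}(h_{ii}\mathsf{A},\alpha_i)$ by Lemma~\ref{lemma:Cp2p} applied letter-by-letter, exactly as in Theorem~\ref{thm:Z-outerB}.

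For the sum-rate the first step is the identity
\[ \mathsf{I}(X_1^n;Y_1^n,S_1^n)+\mathsf{I}(X_2^n;Y_2^n,S_2^n)=\mathsf{h}(Y_1^n\mid S_1^n)+\mathsf{h}(Y_2^n\mid S_2^n)-n\log(2\pi e\sigma^2). \]
It holds because $Y_i^n=h_{ii}X_i^n+S_j^n$ (recall $S_j=h_{ji}X_j+Z_i$) while $S_i^n=h_{ij}X_i^n+Z_j^n$, so that, given $X_i^n$, the pair $(Y_i^n,S_i^n)$ is a deterministic shift of $(S_j^n,Z_j^n)$, which is independent of $X_i^n$ and satisfies $\mathsf{h}(S_j^n,Z_j^n)=\mathsf{h}(S_j^n)+\mathsf{h}(Z_j^n)$; hence $\mathsf{I}(X_i^n;Y_i^n,S_i^n)=\mathsf{h}(Y_i^n,S_i^n)-\mathsf{h}(S_j^n)-\mathsf{h}(Z_j^n)$, and summing over $i\ne j$ while using $\mathsf{h}(Y_i^n,S_i^n)-\mathsf{h}(S_i^n)=\mathsf{h}(Y_i^n\mid S_i^n)$ and $\mathsf{h}(Z_i^n)=\tfrac n2\log(2\pi e\sigma^2)$ gives the identity.

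The main step bounds each $\mathsf{h}(Y_i^n\mid S_i^n)$. I apply the per-letter, unit-Jacobian (hence differential-entropy-preserving) map $(Y_i^{(t)},S_i^{(t)})\mapsto(Y_i^{(t)}-\tfrac{h_{ii}}{h_{ij}}S_i^{(t)},S_i^{(t)})$. Since $Y_i-\tfrac{h_{ii}}{h_{ij}}S_i=h_{ji}X_j+\big(Z_i-\tfrac{h_{ii}}{h_{ij}}Z_j\big)=:W_i$ (the $X_i$ term cancels), this yields $\mathsf{h}(Y_i^n\mid S_i^n)=\mathsf{h}(W_i^n,S_i^n)-\mathsf{h}(S_i^n)\le\mathsf{h}(W_i^n)$. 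But $W_i$ is the output of an IM/DD P2P channel with input $X_j\in[0,\mathsf{A}]$, $\mathbb{E}[X_j]\le\alpha_j\mathsf{A}$, gain $h_{ji}$, and noise $Z_i-\tfrac{h_{ii}}{h_{ij}}Z_j\sim\mathcal{N}\!\big(0,(1+\tfrac{h_{ii}^2}{h_{ij}^2})\sigma^2\big)$ independent of $X_j$; normalizing the noise variance to $\sigma^2$ turns the gain into $\tfrac{h_{12}h_{21}}{\sqrt{h_{ii}^2+h_{ij}^2}}$, so that $\mathsf{h}(W_i^n)-\tfrac n2\log(2\pi e\sigma^2)=\mathsf{I}(X_j^n;W_i^n)+\tfrac n2\log\!\big(1+\tfrac{h_{ii}^2}{h_{ij}^2}\big)\le n\,\overline{\mathsf{C}}\!\big(\tfrac{h_{12}h_{21}}{\sqrt{h_{ii}^2+h_{ij}^2}}\mathsf{A},\alpha_j\big)+\tfrac n2\log\!\big(1+\tfrac{h_{ii}^2}{h_{ij}^2}\big)$ by Lemma~\ref{lemma:Cp2p} letter-by-letter. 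Substituting both bounds ($i=1,2$) into the identity, dividing by $n$, and letting $n\to\infty$ produces precisely \eqref{eq:Ge'-R}.

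The heart of the argument is these two structural observations: that $Y_i$ consists of $h_{ii}X_i$ plus the very interference-plus-noise signal $S_j$ handed to the other receiver (the cancellation of \cite{Etkin-Tse-Wang-2008}), and that subtracting $\tfrac{h_{ii}}{h_{ij}}S_i$ from $Y_i$ annihilates $X_i$ and leaves a single-user channel from $X_j$ through a more noisy version of its own cross link. Everything after that is renormalizing the leftover noise to variance $\sigma^2$ — which is exactly what creates the effective gain $\tfrac{h_{12}h_{21}}{\sqrt{h_{ii}^2+h_{ij}^2}}$ and the $\tfrac12\log(1+h_{ii}^2/h_{ij}^2)$ penalty — and a letter-by-letter invocation of $\overline{\mathsf{C}}$; the only point requiring care is the bookkeeping of which noise and which average-intensity parameter ($\alpha_j$, belonging to the interfering user) attaches to each term.
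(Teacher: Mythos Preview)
Your proof is correct and follows essentially the same route as the paper: both arrive at the identity $\sum_i \mathsf{I}(X_i;Y_i,S_i)=\sum_i \mathsf{h}(Y_i\mid S_i)-\mathsf{h}(Z_i)$ (the paper via the chain rule $\mathsf{I}(X_i;S_i)+\mathsf{I}(X_i;Y_i\mid S_i)$ and the cancellations $\mathsf{h}(S_i\mid X_i)=\mathsf{h}(Z_j)$, $\mathsf{h}(Y_i\mid X_i,S_i)=\mathsf{h}(S_j)$; you via $\mathsf{h}(Y_i,S_i\mid X_i)=\mathsf{h}(S_j)+\mathsf{h}(Z_j)$), then both subtract $\tfrac{h_{ii}}{h_{ij}}S_i$ from $Y_i$, drop the conditioning, and invoke Lemma~\ref{lemma:Cp2p} on the resulting single-user channel. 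Your $n$-letter bookkeeping is a little more explicit than the paper's single-letter presentation, but the argument is the same.
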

	\begin{proof}
		The proof of the first inequality in \eqref{eq:Ge'-R} is the same as that of \eqref{eq:Z'-R}. The second inequality is obtained from $\mathsf{R}_1+\mathsf{R}_2 \le \mathsf{I}(X_1; Y_1, S_1) + \mathsf{I}(X_2; Y_2, S_2)$ in \eqref{eq:Ge-R} as follows:
		\begin{subequations}
			\begin{align}
				&\quad\; \sum_{i=1}^2 \mathsf{I}(X_i; Y_i, S_i)	\notag \\
				&= \sum_{i=1}^2 \mathsf{I}(X_i;S_i) + \mathsf{I}(X_i;Y_i{|}S_i)\\
				%&= \sum_{i=1}^2 \mathsf{h}(S_i) -\mathsf{h}(S_i|X_i) + \mathsf{h}(Y_i|S_i) -\mathsf{h}(Y_i|X_i,S_i)\notag\\
				&\overset{}{=} \sum_{i=1}^2 \mathsf{h}(Y_i|S_i)- \mathsf{h}(Z_i)  \label{ind:ge_temp_a} \\
				%&= \sum_{{i,j=1,\ i\neq j}}^2 \mathsf{h}\left(Y_i - \frac{h_{ii}}{h_{ij}}S_i\Big\rvert S_i\right)   -\mathsf{h}(Z_i) \notag\\
				%&\overset{(b)}{\le} \sum_{{i,j=1,\ i\neq j}}^2 \mathsf{h}\left( h_{ji}X_j+Z_i-\frac{h_{ii}}{h_{ij}}Z_j \right) -\mathsf{h}(Z_i) \notag\\
				%&= \sum_{{i,j=1,\ i\neq j}}^2 \mathsf{h}(h_{ji}X_j+\sqrt{1+\frac{h_{ii}^2}{h_{ij}^2}}Z_j) -\mathsf{h}(Z_i)\notag\\
				%&\overset{(c)}{=} \sum_{{i,j=1,\ i\neq j}}^2 \mathsf{h}\left( \frac{h_{12}h_{21}}{\sqrt{h_{ii}^2+h_{ij}^2}}X_j + Z_j \right)  + \frac{1}{2}\log\left( 1+\frac{h_{ii}^2}{h_{ij}^2} \right)  -\mathsf{h}(Z_i)\notag\\
				&\overset{}{=}\sum_{\substack{i,j=1,\\ i\neq j}}^2 I\biggl( X_j; \frac{h_{12}h_{21}}{\sqrt{h_{ii}^2+h_{ij}^2}}X_j + Z_j \biggr) + \frac{1}{2}\log\biggl(  1+\frac{h_{ii}^2}{h_{ij}^2} \biggr)   \label{ind:ge_temp_b}\\
				&\overset{}{\le} \sum_{\substack{i,j=1,\\ i\neq j}}^2 \overline{\mathsf{C}} \Biggl( \frac{h_{12}h_{21}}{\sqrt{h_{ii}^2+h_{ij}^2}}\mathsf{A}, \alpha \Biggr)  + \frac{1}{2}\log \biggl( 1+\frac{h_{ii}^2}{h_{ij}^2} \biggr),
			\end{align}
		\end{subequations}
		where \eqref{ind:ge_temp_a} follows since $\mathsf{h}(S_i|X_i)=\mathsf{h}(Z_j)$ and $\mathsf{h}(Y_i|X_i,S_i)=\mathsf{h}(S_j)$, and \eqref{ind:ge_temp_b} follows from  $\mathsf{h}(Y_i|S_i)=\mathsf{h}(Y_i-\frac{S_i}{h_{ij}}|S_i)$ and dropping the condition, then using the properties $\mathsf{h}(aX)=\mathsf{h}(X) + \log(a)$ for $a>0$ and $\mathsf{h}(X)=\mathsf{I}(X;X+Z)+\mathsf{h}(Z)$. This ends the proof.
	\end{proof}

	\subsection{Numerical Results on the Obtained Capacity Bounds}
	
	We now compare the obtained inner and outer bounds numerically. In Fig. \ref{fig:bounds}, we plot the bounds under $\frac{\mathsf{A}}{\sigma}=10^3$, $\alpha_1=\alpha_2=\alpha = 0.4$ as an example. Let $h_{11}=h_{22}=1$ and vary $h_{12}=h_{21}=g$ within $\{0.01, 0.6, 1, 10\}$. It can be seen that when interference is very weak ($g=0.01$), the TIN and HK schemes achieve almost the same rate region. But when interference becomes larger ($g=0.6$, $1$, or $10$), the achievable region of the HK scheme is strictly larger than the one achieved by TIN. Moreover, it is worth to note that a larger $g$ enables the HK scheme to achieve a larger sum-rate and smaller gap to the upper bound $\mathcal{G}'_{\rm Z}$, and by comparing $g=0.6,1,10$, it can be seen that a stronger interference can be less `harmful' than moderate interference. As the obtained capacity bounds can be easily computed, they can serve as simple performance benchmarks.

	\begin{figure*}
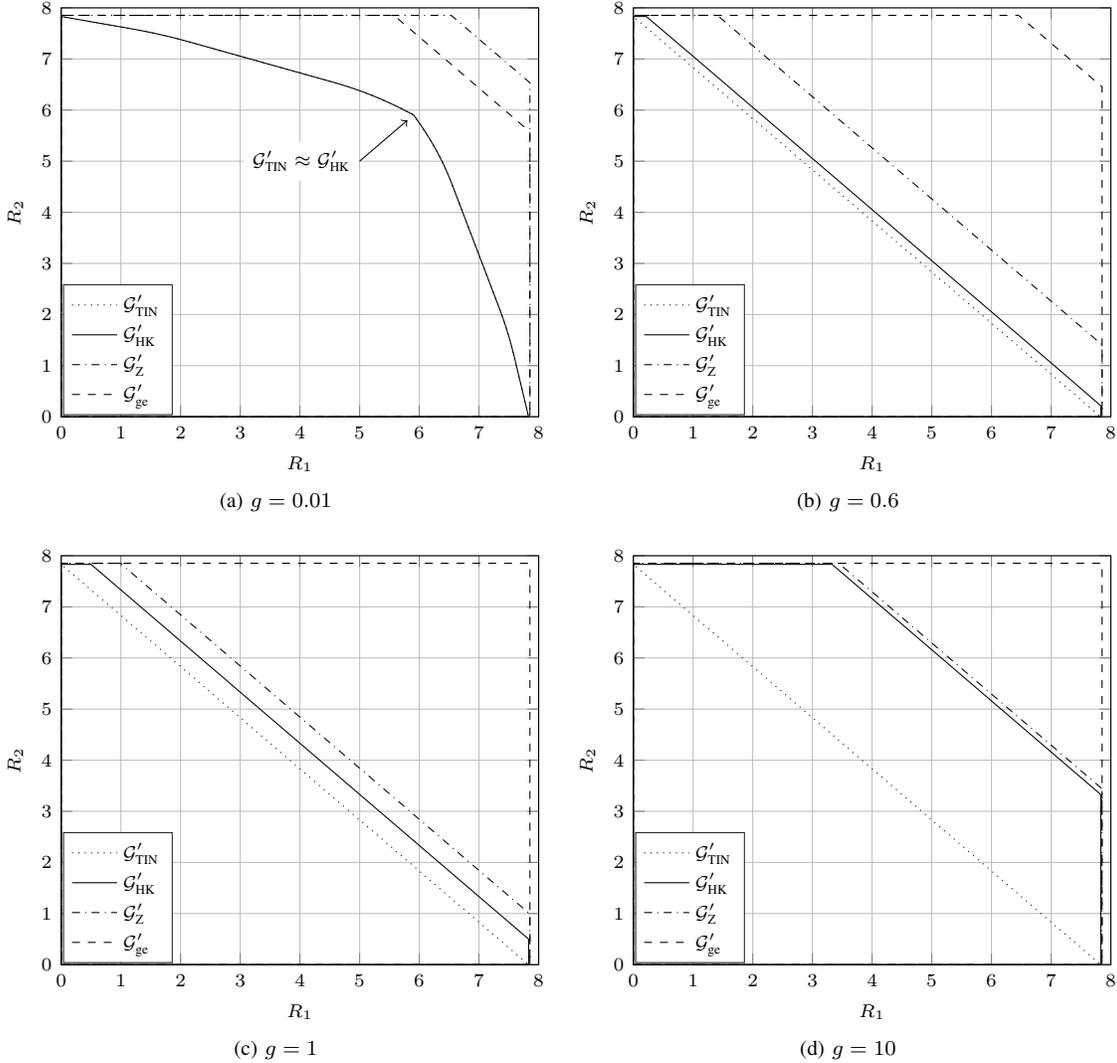

		\centering
		\subfloat[$g=0.01$]{
			\centering
			\input{fig_cmpG_A1e3_g_0p01}
		}
		\subfloat[$g=0.6$]{
			\centering
			\input{fig_cmpG_A1e3_g_0p6}
		}
		
		\subfloat[$g=1$]{
			\centering
			\input{fig_cmpG_A1e3_g_1}
			\label{fig:bounds_g1}
		}
		\subfloat[$g=10$]{
			\centering
			\input{fig_cmpG_A1e3_g_10}
			\label{fig:bounds_g10}
		}
		\caption{Comparison of the obtained inner and outer bounds with $\frac{\mathsf{A}}{\sigma}=10^3$ and $\alpha = 0.4$.}
		\label{fig:bounds}
	\end{figure*}

	In addition to the rate regions in Fig. \ref{fig:bounds}, we plot the quantity $\gamma'(\delta, \mathsf{A}, \alpha)\triangleq\frac{\mathsf{R}_{\rm sum}}{\log\left( 1 + \frac{\rho(\alpha)\mathsf{A}^2}{\sigma^2} \right)}$ versus $\delta=\frac{\log(\frac{g\mathsf{A}}{\sigma})}{\log(\frac{\mathsf{A}}{\sigma})}$ in Fig. \ref{fig:dof-sim}, where $\frac{\mathsf{A}}{\sigma}=10^5$, $\alpha=0.4$, and $\mathsf{R}_{\rm sum}$ is the maximum sum-rate corresponding to each of the inner/outer bounds $\mathcal{G}'_{\rm TIN}$, $\mathcal{G}'_{\rm HK}$, $\mathcal{G}'_{\rm Z}$ and $\mathcal{G}'_{\rm ge}$. We shall see later that the quantity  $\gamma'(\delta, \mathsf{A}, \alpha)$ is related to the GDoF of the IM/DD IC \eqref{eq:dof}. 
	Denote by $\gamma'_{\rm TIN}$, $\gamma'_{\rm HK}$, $\gamma'_{\rm Z}$ and $\gamma'_{\rm ge}$ the $\gamma'(\delta, \mathsf{A}, \alpha)$ value calculated by using $\mathcal{G}'_{\rm TIN}$, $\mathcal{G}'_{\rm HK}$, $\mathcal{G}'_{\rm Z}$ and $\mathcal{G}'_{\rm ge}$, respectively. 
	Overall, these quantities outline a `W' shape in Fig. \ref{fig:dof-sim} between $\max\{\gamma'_{\rm TIN},\gamma'_{\rm HK}\}$ and $\min\{\gamma'_{\rm Z},\gamma'_{\rm ge}\}$, similar to the GDoF as we shall see later (Fig. \ref{fig:dof}). Besides, It can be seen that $\gamma'_{\rm ge}<\gamma'_{\rm Z}$ when $\delta\in(0,0.67)$, and $\gamma'_{\rm Z}<\gamma'_{\rm ge}$ when $\delta\in(0.67,2)$. It can also be seen that $\gamma'_{\rm HK}\ge\gamma'_{\rm TIN}$ for all values of $\delta$ with $\gamma'_{\rm HK}=\gamma'_{\rm TIN}$ when $\delta\in(0,0.5)$, and the largest gap between $\gamma'_{\rm HK}$ and $\min\{\gamma'_{\rm Z},\gamma'_{\rm ge}\}$ occurs when $\delta\in(0.5,1)$, while $\gamma'_{\rm HK}\approx\gamma'_{\rm Z}$ when $\delta\ge1$. 
	
	To compare these results with the rates achieved by using PAM as in \cite[Proposition 5]{dytso2016interference}, we plot $\gamma'_{\rm PAM, \mathsf{m}}$ that is achieved by letting transmitter $i\in\{1,2\}$ only transmit a uniform $N_i$-PAM signal from $[0, 2\alpha\mathsf{A}]$. To evaluate the achievable rate, we fix $N_1=N_2=N$ that is optimally chosen through exhaustively searching over $\{2,3,\dots,10^5\}$, we use no time sharing, and we use \cite[Proposition 2 \& 3]{dytso2016interference} with outage measure (denoted here by $\mathsf{m}$) of $0.1$, $1$, and $10$, respectively, to bound the equivalent constellation seen at the receiver (cf. details in \cite[Sec. III]{dytso2016interference}). 
	It can be seen that in the  weak interference regime\footnote{The weak, moderate, and strong interference regimes will be defined later in Sec. \ref{sec:gdof} through GDoF.} ($\delta\in(0.5,1)$), $\gamma'_{\rm PAM, \mathsf{m}}$ is close to $\gamma'_{\rm ge}$ for all the tested $\mathsf{m}$, which shows the tightness of $\gamma'_{\rm ge}$. But $\gamma'_{\rm PAM, \mathsf{m}}$ always has a much larger gap to the upper bound than the obtained lower bound $\max\{\gamma'_{\rm TIN},\gamma'_{\rm HK}\}$ in the moderate and strong interference regimes. One cause of the rate loss is that the peak intensity of the PAM signal is only up to $2\alpha\mathsf{A}$. Although the capacity bounds of the Gaussian IC from \cite[Proposition 5]{dytso2016interference} are numerically shown to be less tight in the moderate and strong interference regime, it highlights the potential of a practical design that only transmits PAM signals with no time sharing for the IM/DD IC in the weak interference regime. 
	%It can be seen that $\gamma'_{\rm PAM, \mathsf{m}}$ always has a larger gap to the GDoF upper bound than the obtained lower bound $\max\{\gamma'_{\rm TIN},\gamma'_{\rm HK}\}$ at all interference regimes. Although the capacity bounds of the Gaussian IC from \cite[Proposition 5]{dytso2016interference} are numerically shown to be less tight and can not be applied in the IM/DD ICs under a general average power constraint as discussed in Sec. I, it highlights the potential of a practical design that only transmits PAM signals with no time sharing for the IM/DD IC when $\alpha=0.5$. 
	
	\begin{figure*}[!tbp]
		\centering
%		\subfloat[$\gamma'(\delta, 10^5, 0.5)$]{
%			\centering
%			\input{fig_cmp_sumRate_avg1E5}
%			\label{fig:dof-sim}
%		}
		\subfloat[$\gamma'(\delta, 10^5, 0.4)$]{
			\centering
			\input{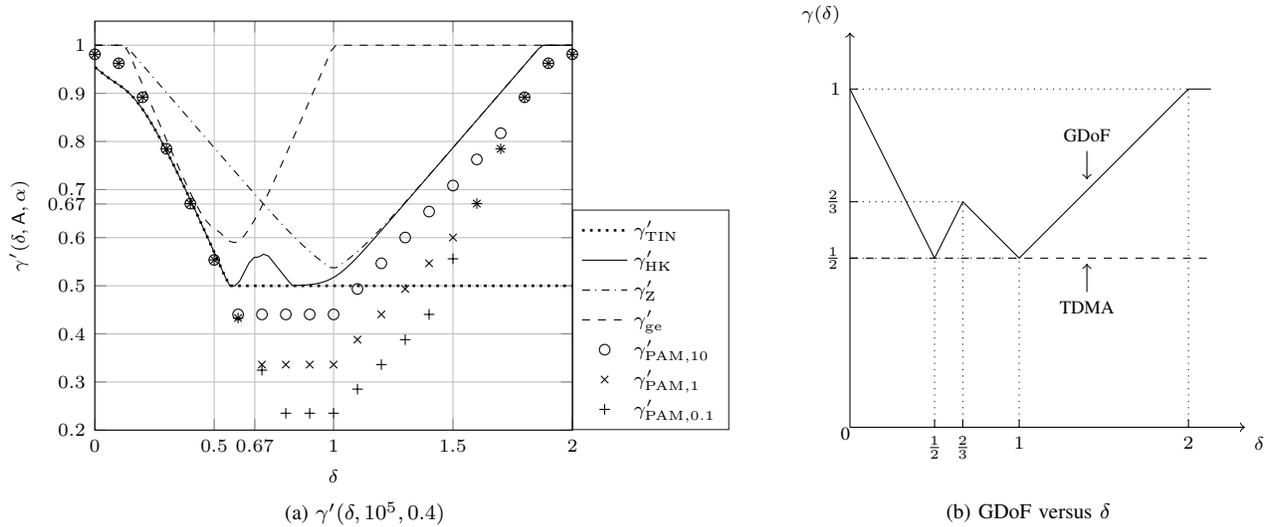}
			\label{fig:dof-sim}
		}\hspace{1 em}
		\subfloat[GDoF versus $\delta$]{
		\centering\raisebox{2ex}[0pt][0pt]{
		\begin{tikzpicture}[scale=1.5]
		\draw[->] (0,0) -- (3.5,0) node[anchor=north west] {\scriptsize $\delta$};
		\draw[->] (0,0) -- (0,3.5) node[anchor=south east] {\scriptsize $\gamma(\delta)$};
		\node[anchor = north east, inner sep=0, outer sep=0] at (0,0) {\scriptsize 0};
		\draw[-] (0,3)--(0.75,1.5)--(1,2)--(1.5,1.5)--(3,3)--(3.2,3);
		
		\draw (.75,2pt) -- (.75,0) node[anchor=north] {\scriptsize $\frac{1}{2}$};
		\draw (1,2pt) -- (1,0) node[anchor=north] {\scriptsize $\frac{2}{3}$};
		\draw (1.5,2pt) -- (1.5,0) node[anchor=north] {\scriptsize $1$};
		\draw (3,2pt) -- (3,0) node[anchor=north] {\scriptsize $2$};
		\draw (2pt,1.5) -- (0,1.5) node[anchor=east] {\scriptsize $\frac{1}{2}$};
		\draw (2pt,2) -- (0,2) node[anchor=east] {\scriptsize $\frac{2}{3}$};
		\draw (2pt,3) -- (0,3) node[anchor=east] {\scriptsize $1$};
		
		\draw[dashed] (0,1.5)--(3.2,1.5);
		\draw[->] (2.1, 1.2) node[anchor=north] {\scriptsize TDMA} -- (2.1, 1.45) ;
		
		\draw[->] (2.1, 2.45) node[anchor=south] {\scriptsize GDoF} -- (2.1, 2.2) ;		
		
		\draw[dotted] (0,3)--(3,3); 
		\draw[dotted] (0,2)--(1,2);
		\draw[dotted] (0,1.5)--(1.5,1.5);
		\draw[dotted] (.75,0)--(.75,1.5);
		\draw[dotted] (1,0)--(1,2);
		\draw[dotted] (1.5,0)--(1.5,1.5);
		\draw[dotted] (3,0)--(3,3);
		\end{tikzpicture}
		\label{fig:dof}
		}}
		\caption{Numerical and theoretical results on the GDoF of the symmetric IM/DD IC. }
			%(a) Numerical evaluation of $\gamma'(\delta, \mathsf{A})$ vesus $\delta$ with $\frac{\mathsf{A}}{\sigma}=10^5$ and $\alpha=0.5$. (b) GDoF versus $\delta$.
		\label{fig:dof_figs}
	\end{figure*}

	\section{Asymptotic Analysis}
	\label{Sec:Asymptotics}
	In addition to a numerical comparison, an analytical comparison is of equal importance as it can provide additional insight. As the obtained computable capacity bounds, $\mathcal{G}'_{\rm TIN}$, $\mathcal{G}'_{\rm HK}$, $\mathcal{G}'_{\rm Z}$ and $\mathcal{G}'_{\rm ge}$, are all based on $\overline{\mathsf{C}}(\cdot)$, we can compare these regions asymptotically by relying on the property (i) in Remark \ref{rmk:property}, i.e.,  $\lim_{\frac{\mathsf{A}}{\sigma} \rightarrow \infty}\bigl[\overline{\mathsf{C}}(\mathsf{A},\alpha) - \underline{\mathsf{C}}(\mathsf{A},\alpha)\bigr] = 0$ \cite{Lapidoth-Moser-Wigger-2009}.
	%With the obtained computable capacity inner and outer bounds, we are able to compare them numerically by specifying the parameters, however, analytical comparison is of more interests in capacity studies as it can provide insights under general conditions. Because of the complexity of $\overline{\mathsf{C}}(\cdot)$, to compare the obtained inner and outer bounds analytically, we need to simplify $\overline{\mathsf{C}}(\cdot)$ first. This can be realized by making use of the good asymptotic performance of $\overline{\mathsf{C}}(\cdot)$ which can approximate $\underline{\mathsf{C}}(\cdot)$ under certain conditions. 
	In this section, we rely on property (i) in Remark \ref{rmk:property} in comparing the inner and outer bounds in the strong and the weak interference regimes and to characterize the asymptotic sum-rate capacity when $\frac{\mathsf{A}}{\sigma}\rightarrow\infty$ and the GDoF, where we focus on the case $\alpha_1=\alpha_2$ with general $h_{ij}$, i.e., the two transmitters are identical, as defined in Sec. \ref{sec:model}.
	%In this section, we state two major results in the asymptotic analysis and their proofs, starting with the asymptotic capacity region in the strong interference regime and then the degree of freedom of symmetric rate. 
	
	\subsection{Approximate Asymptotic Capacity Region in the Strong Interference Regime}
	
	The strong interference regime is defined as the regime where $\mathsf{I}(X_1; Y_1|X_2) \le \mathsf{I}(X_1; Y_2|X_2)$ and $\mathsf{I}(X_2; Y_2|X_1) \le \mathsf{I}(X_2; Y_1|X_1)$ \cite[Sec. 6.3]{book-NIT}, leading to $\frac{h_{12}}{h_{11}}\ge1$ and $\frac{h_{21}}{h_{22}}\ge1$. The following theorem presents the asymptotic capacity region in the strong interference regime, which shows that when $\frac{\mathsf{A}}{\sigma}$ is sufficiently large, the simplified HK scheme with only common messages can achieve the capacity region within half a bit.
	
	\begin{thm} \label{thm:SI-Cap}
		In the strong interference regime, given $\alpha_1=\alpha_2=\alpha\in(0,1]$ and $(\mathsf{R}_1,\mathsf{R}_2) \in \mathcal{G}'_Z$ and $\epsilon>0$, the rate pair $(\mathsf{R}_1-\frac{1}{2}-\epsilon,\mathsf{R}_2-\frac{1}{2}-\epsilon)$ can be achieved when $\frac{\mathsf{A}}{\sigma}$ is sufficiently large.
	\end{thm}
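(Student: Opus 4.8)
The plan is to sandwich the capacity region between the outer bound $\mathcal{G}'_{\rm Z}$ of Theorem~\ref{thm:Z-outerB} and the region $\mathcal{R}_{\rm SI}$ achieved by the simplified HK scheme of Theorem~\ref{thm:HK-innerB} restricted to common messages only, and then to show that in the strong interference regime these two regions differ by at most $\frac12$ bit per user once $\frac{\mathsf{A}}{\sigma}$ is large. \emph{Inner region.} Let transmitter $i$ send $W_i\sim\mathsf{p}(w_i;\mathsf{A},\alpha)$ (no private message) and let both receivers decode $(W_1,W_2)$; then the achievable region is the intersection over $i\in\{1,2\}$ of the MAC region seen at receiver $i$. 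Since scaling a $\mathsf{p}(\cdot;\mathsf{A},\alpha)$ variable by any $h>0$ yields a $\mathsf{p}(\cdot;h\mathsf{A},\alpha)$ variable, Corollary~\ref{Cor:FunctionF} gives $\mathsf{I}(W_i;Y_i|W_j)\ge\underline{\mathsf{C}}(h_{ii}\mathsf{A},\alpha)$ and $\mathsf{I}(W_i;Y_j|W_j)\ge\underline{\mathsf{C}}(h_{ij}\mathsf{A},\alpha)$; because $h_{ij}\ge h_{ii}$ in the strong interference regime and $\underline{\mathsf{C}}$ is nondecreasing in its first argument (Remark~\ref{rmk:property}(iii)), the home receiver is the bottleneck, so $\mathsf{R}_i\le\underline{\mathsf{C}}(h_{ii}\mathsf{A},\alpha)$ is admissible for $i=1,2$. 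Applying the EPI to $\mathsf{h}(h_{ii}W_i+h_{ji}W_j+Z_i)$ as in the proof of Lemma~\ref{Lem:FunctionF} (but with no third interferer) and using $2^{2\mathsf{h}(h_{ki}W_k)}=2\pi e\rho(\alpha)h_{ki}^2\mathsf{A}^2$ gives $\mathsf{I}(W_1,W_2;Y_i)\ge\frac12\log\!\bigl(1+\rho(\alpha)(h_{ii}^2+h_{ji}^2)\mathsf{A}^2/\sigma^2\bigr)$. Hence $\mathcal{R}_{\rm SI}$, the set of $(\mathsf{R}_1,\mathsf{R}_2)$ with $\mathsf{R}_i\le\underline{\mathsf{C}}(h_{ii}\mathsf{A},\alpha)$ for $i=1,2$ and $\mathsf{R}_1+\mathsf{R}_2\le\min_{i\ne j}\frac12\log\!\bigl(1+\rho(\alpha)(h_{ii}^2+h_{ji}^2)\mathsf{A}^2/\sigma^2\bigr)$, is achievable.

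\emph{Outer region and gap.} In the strong interference regime $h_{12}\ge h_{11}$ and $h_{21}\ge h_{22}$, so $h_{ji}\ge h_{jj}$ for $j\ne i$; thus $h_{ji}^2/h_{jj}^2\ge1$ forces the $\min\{0,\cdot\}$ term in the sum-rate bound of Theorem~\ref{thm:Z-outerB} to vanish, and $\alpha_1=\alpha_2=\alpha$ gives $\alpha_i''=\alpha$, so $\mathcal{G}'_{\rm Z}$ reduces to the set of $(\mathsf{R}_1,\mathsf{R}_2)$ with $\mathsf{R}_i\le\overline{\mathsf{C}}(h_{ii}\mathsf{A},\alpha)$ and $\mathsf{R}_1+\mathsf{R}_2\le\min_{i\ne j}\overline{\mathsf{C}}\bigl((h_{ii}+h_{ji})\mathsf{A},\alpha\bigr)$. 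I then compare $\mathcal{G}'_{\rm Z}$ and $\mathcal{R}_{\rm SI}$ constraint by constraint. For the single-rate constraints, $\overline{\mathsf{C}}(h_{ii}\mathsf{A},\alpha)-\underline{\mathsf{C}}(h_{ii}\mathsf{A},\alpha)=\varepsilon(h_{ii}\mathsf{A},\alpha)\to0$ by Remark~\ref{rmk:property}(i). For the sum-rate constraint, if $i^{\star}$ attains the inner minimum, then the $i^{\star}$-th outer constraint together with $\overline{\mathsf{C}}=\underline{\mathsf{C}}+\varepsilon$ and $\underline{\mathsf{C}}(x\mathsf{A},\alpha)=\frac12\log\!\bigl(1+\rho(\alpha)x^2\mathsf{A}^2/\sigma^2\bigr)$ yields $\mathsf{R}_1+\mathsf{R}_2\le\frac12\log\!\bigl(1+\rho(\alpha)(h_{i^{\star}i^{\star}}+h_{ji^{\star}})^2\mathsf{A}^2/\sigma^2\bigr)+\varepsilon\bigl((h_{i^{\star}i^{\star}}+h_{ji^{\star}})\mathsf{A},\alpha\bigr)$; using $\frac{1+cx}{1+cy}\le\frac{x}{y}$ for $x\ge y>0$ and $(a+b)^2\le2(a^2+b^2)$, the first term exceeds $\frac12\log\!\bigl(1+\rho(\alpha)(h_{i^{\star}i^{\star}}^2+h_{ji^{\star}}^2)\mathsf{A}^2/\sigma^2\bigr)$ by at most $\frac12$. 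Therefore, given $\epsilon>0$, for $\frac{\mathsf{A}}{\sigma}$ large enough that all the finitely many quantities $\varepsilon(\cdot,\alpha)$ above are $<\epsilon$, every $(\mathsf{R}_1,\mathsf{R}_2)\in\mathcal{G}'_{\rm Z}$ satisfies $(\mathsf{R}_1-\frac12-\epsilon,\mathsf{R}_2-\frac12-\epsilon)\in\mathcal{R}_{\rm SI}$, which is achievable, proving the claim.

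The main obstacle is the sum-rate comparison: one must line up the two separate minimizations over $i$ — the bottleneck receiver for $\mathcal{R}_{\rm SI}$ and for $\mathcal{G}'_{\rm Z}$ need not coincide — and isolate the single source of looseness, namely the factor $2$ inside the logarithm, i.e.\ exactly $\frac12$ bit, while absorbing every $\overline{\mathsf{C}}$-versus-$\underline{\mathsf{C}}$ discrepancy (at the several arguments $h_{ii}\mathsf{A}$ and $(h_{ii}+h_{ji})\mathsf{A}$) into $\epsilon$ through the vanishing of $\varepsilon$ at fixed gains from Remark~\ref{rmk:property}(i). A lesser point requiring care is the EPI lower bound on $\mathsf{I}(W_1,W_2;Y_i)$ in the inner step (which must be argued directly, without the $\overline{\mathsf{C}}(0,\cdot)$ factor that would appear from a verbatim application of Lemma~\ref{Lem:FunctionF}) and the verification that strong interference genuinely makes the intended receiver the binding one in the compound MAC.
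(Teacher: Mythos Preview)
Your proposal is correct and follows essentially the same route as the paper: both compare $\mathcal{G}'_{\rm Z}$ against the HK inner bound specialized to common messages only ($\kappa_i=0$, $\eta_i=\mathsf{A}$), use Remark~\ref{rmk:property}(i) to replace each $\overline{\mathsf{C}}$ by $\underline{\mathsf{C}}$ at the cost of a vanishing term, and bound the remaining sum-rate discrepancy by $\frac12\log\!\frac{(h_{ii}+h_{ji})^2}{h_{ii}^2+h_{ji}^2}\le\frac12$. The only cosmetic differences are that the paper keeps $\phi_i$ free and then optimizes (arriving at $\phi_i=\alpha$ via Remark~\ref{rmk:property}(ii)) whereas you fix $\phi_i=\alpha$ at the outset, and that the paper invokes $\mathsf{F}(\cdot,\cdot,0,\cdot,\cdot,0)$ directly (since $\overline{\mathsf{C}}(0,\cdot)=0$) where you redo the EPI step by hand; your extra care about the possibly non-coinciding minimizers in the two sum-rate bounds is valid but the paper sidesteps it by bounding the gap term-by-term for each $i$.
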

	\begin{proof}
		We compare the outer bound $\mathcal{G}'_{\rm Z}$ in Theorem \ref{thm:Z-outerB} and the inner bound $\mathcal{G}'_{\rm HK}$ in Theorem \ref{thm:HK-innerB}. Given $\frac{h_{12}}{h_{11}}\ge1$, $\frac{h_{21}}{h_{22}}\ge1$ (strong interference) and $(\mathsf{R}_1,\mathsf{R}_2)\in\mathcal{G}_{\rm Z}'$, then for $\frac{\mathsf{A}}{\sigma}$ sufficiently large and any $\epsilon>0$, $(\mathsf{R}_1-\epsilon,\mathsf{R}_2-\epsilon)$ is in the region described by
		\begin{subequations}
			\label{eq:Z-R-SI}
			\begin{align}
			\mathsf{R}_i &\le \frac{1}{2}\log \left( 1+ \frac{\rho(\alpha)h_{ii}^2\mathsf{A}^2}{\sigma^2} \right) ,\label{eq:Z-R-SI-i} \\
			\mathsf{R}_1 + \mathsf{R}_2 &\le \frac{1}{2}\log \left( 1+ \frac{\rho(\alpha)(h_{ii}+h_{ji})^2\mathsf{A}^2}{\sigma^2} \right), \label{eq:Z-R-SI-sum}
			\end{align}
		\end{subequations}
		where $i,j\in\{1,2\}$ and $i\neq j$, which follows since $\lim_{\frac{\mathsf{A}}{\sigma} \rightarrow \infty}\varepsilon(\mathsf{A},\alpha_i) = 0$ (property (i) in Remark \ref{rmk:property}).
		
		For $\mathcal{G}'_{\rm HK}$, let $\kappa_i=0$, $\eta_i = \mathsf{A}$ ($i=1,2$) (i.e., the transmitter only transmits the common message with the max peak intensity). Then, we obtain the following rate region
		\begin{subequations}
			\begin{align}
			\mathsf{R}_i &\le  \mathsf{F}(0, h_{ii}\mathsf{A}, 0, 1,\phi_i,0 ), \label{eq:HK-R-SI-i}	\\
			\mathsf{R}_1+\mathsf{R}_2 &\le \mathsf{F}( h_{ii}\mathsf{A}, h_{ji}\mathsf{A}, 0, \phi_i,\phi_j,0 ),  \label{eq:HK-R-SI-sum}
			\end{align}
		\end{subequations}
		where $i,j\in\{1,2\}$, $i\neq j$, and $\phi_i\in(0,\alpha]$ since only common messages are transmitted. Then, the minimum gap between \eqref{eq:Z-R-SI-sum} and \eqref{eq:HK-R-SI-sum} can be upper-bounded as follows,
		\begin{subequations}
			\label{eq:SI-temp-1}
			\begin{IEEEeqnarray}{rCl}
			&& \min_{\phi_i, \phi_j} $\Big\{ $\frac{1}{2}\log \left( 1+ \frac{\rho(\alpha)(h_{ii}+h_{ji})^2\mathsf{A}^2}{\sigma^2} \right) \\ 
			&& \IEEEeqnarraymulticol{1}{r}{- \mathsf{F}( h_{ii}\mathsf{A}, h_{ji}\mathsf{A}, 0, \phi_i,\phi_j,0 )} \Big\} \notag \\ 
			&=& \min_{\phi_i, \phi_j}\frac{1}{2}\log \left( \frac{1  + \rho(\alpha)(h_{ii}+h_{ji})^2 \frac{\mathsf{A}^2}{\sigma^2}}{1  + \bigl(\rho(\phi_i)h_{ii}^2+\rho(\phi_j)h_{ji}^2\bigr) \frac{\mathsf{A}^2}{\sigma^2}} \right)  \\
			&=& \frac{1}{2}\log \left( \frac{1  + \rho(\alpha)(h_{ii}+h_{ji})^2 \frac{\mathsf{A}^2}{\sigma^2}}{1  + \rho(\alpha)(h_{ii}^2+h_{ji}^2) \frac{\mathsf{A}^2}{\sigma^2}} \right)  \\
			&<& \frac{1}{2}\log \left( \frac{(h_{ii}+h_{ji})^2}{h_{ii}^2+h_{ji}^2} \right)  \\
			&\le& \frac{1}{2},
			\end{IEEEeqnarray}
		\end{subequations}
		for $i,j\in\{1,2\}$, $i\neq j$, where the solution of the minimization is $\phi_1=\phi_2=\alpha$ since $\rho(\alpha)$ is monotonically non-decreasing (Property (ii) in Remark \ref{rmk:property}). By substituting $\phi_1=\phi_2=\alpha$ into \eqref{eq:HK-R-SI-i}, we obtain that \eqref{eq:Z-R-SI-i} is achievable, according to the definition of the function $\mathsf{F}$ in \eqref{eq:F}.
		Thus, if $(\mathsf{R}_1,\mathsf{R}_2) \in \mathcal{G}'_Z$, then $(\mathsf{R}_1-\frac{1}{2}-\epsilon,\mathsf{R}_2-\frac{1}{2}-\epsilon) \in \mathcal{G}'_{HK}$, i.e., $(\mathsf{R}_1,\mathsf{R}_2)$ is an achievable rate pair. This concludes the proof.
	\end{proof}
	
	Theorem \ref{thm:SI-Cap} is confirmed for the symmetric IC by the numerical results in Fig. \ref{fig:bounds_g1} and \ref{fig:bounds_g10}, where $\mathcal{G}'_{\rm HK}$ and $\mathcal{G}'_{\rm Z}$ are within a gap $<0.5$ bits when $g\ge1$. Note that the gap tends to 0 as interference becomes stronger, which can be proved by increasing $h_{ji}$ while $h_{ii}$ is held constant in \eqref{eq:SI-temp-1} and observing that the gap decreases to zero.
	
	\subsection{Approximate Asymptotic Sum-Rate Capacity of the Symmetric IC with Weak Interference}
	
	As shown in the last subsection, when $\frac{h_{ji}}{h_{jj}}\ge1$, $i,j\in\{1,2\}$, $i\ne j$, i.e., in the strong interference regime, the simplified HK scheme with only common messages can achieve the capacity within a constant gap with a large $\frac{\mathsf{A}}{\sigma}$. In this subsection, we look into the weak interference regime, i.e., $\frac{h_{ji}}{h_{jj}}<1$, $i,j\in\{1,2\}$, $i\ne j$. We will show that transmitting private messages in this regime achieves the sum capacity within a constant gap. Note that, when only private messages are transmitted, the HK scheme is equivalent to the TIN scheme. 
	%we look into a different interference setting, $\frac{h_{ji}}{h_{jj}}<1$, $i,j\in\{1,2\}$, $i\ne j$, where only transmitting private messages, which is equivalent to the TIN scheme, achieves the sum capacity within a constant gap. 
	
	We have the following avhievable sum-rate through TIN:
	\begin{align}
		\mathsf{R}_{\rm sum, TIN} &= \max_{\substack{ \kappa_i\in[0,\mathsf{A}],\\\theta_i\kappa_i\in[0,\alpha_i\mathsf{A}],\\ i=1,2}}\sum_{\substack{i,j=1\\ i\neq j}}^2\mathsf{F}(0,h_{ii}\kappa_i, h_{ji}\kappa_j, 1,\theta_i,\theta_j) 
	\end{align} 
	Then, using the asymptotic behavior of function $\mathsf{F}$ shown in \eqref{F_Asymptotic}, we have that, for any $\theta_i$ such that $\theta_i\kappa_i\in[0,\alpha_i\mathsf{A}]$ and $\kappa_i=\lambda_i\mathsf{A}$ for all $\lambda_i\in(0,1]$, $i=1,2$,
	\begin{multline}
		\lim_{\frac{\mathsf{A}}{\sigma}\to\infty} \Biggl[ \sum_{\substack{i,j=1\\ i\neq j}}^2\mathsf{F}(0,h_{ii}\kappa_i, h_{ji}\kappa_j, 1,\theta_i,\theta_j) \\ - \frac{1}{2}\log\biggl(1+\frac{\rho(\theta_i)h_{ii}^2\kappa_i^2}{\rho(\theta_j)h_{ji}^2\kappa_j^2} \biggr) \Biggr] = 0. 
	\end{multline}
	If we choose $\lambda_1$, $\lambda_2$, $\theta_1$, and $\theta_2$ such that $\frac{\rho(\theta_1)\kappa_1^2}{\rho(\theta_2)\kappa_2^2}=\frac{h_{21}^2}{h_{12}^2}$, which are guaranteed to exist since $\frac{\rho(\theta_1)\lambda_1}{\rho(\theta_2)\lambda_2}\in[0,\infty)$, then
	\begin{multline}
		\lim_{\frac{\mathsf{A}}{\sigma}\to\infty} \Biggl[\sum_{\substack{i,j=1\\ i\neq j}}^2\mathsf{F}(0,h_{ii}\kappa_i, h_{ji}\kappa_j, 1,\theta_i,\theta_j) \\ - \frac{1}{2}\log\biggl(1+\frac{h_{ii}^2}{h_{ij}^2} \biggr)\Biggr] = 0,
	\end{multline}
	which shows that
	\begin{equation}\label{eq:TIN_sum_weak}
		\lim_{\frac{\mathsf{A}}{\sigma}\to\infty} \Biggl[\mathsf{R}_{\rm sum, TIN} - \log\biggl(1+\frac{h_{ii}^2}{h_{ij}^2} \biggr)\Biggr] \ge 0.
	\end{equation}
	On the other hand, from $\mathcal{G}_{\rm ge}'$ in \eqref{eq:Ge'-R}, we have 
	\begin{align}
		\mathsf{C}_{\rm sum} &\le \sum_{\substack{i,j=1\\ i\neq j}}^2\overline{\mathsf{C}}\biggl(\frac{h_{12}h_{21}}{\sqrt{h_{ii}^2+h_{ij}^2}}\mathsf{A}, \alpha_j\biggr) + \frac{1}{2}\log
		\biggl(1+\frac{h_{ii}^2}{h_{ij}^2}\biggr).  \label{eq:Ge_sum_weak}
	\end{align}
	By comparing \eqref{eq:Ge_sum_weak} with \eqref{eq:TIN_sum_weak}, the gap between the achievable sum-rate $\mathsf{R}_{\rm sum, TIN}$ and the upper bound in \eqref{eq:Ge_sum_weak} satisfies
	\begin{equation}\label{eq:gap_sum_rate_weak}
		\lim_{\frac{\mathsf{A}}{\sigma}\to\infty} \Biggl[\mathsf{C}_{\rm sum} - \mathsf{R}_{\rm sum, TIN} - \sum_{\substack{i,j=1\\ i\neq j}}^2\overline{\mathsf{C}}\biggl(\frac{h_{12}h_{21}}{\sqrt{h_{ii}^2+h_{ij}^2}}\mathsf{A}, \alpha_j\biggr)\Biggr] \le 0. 
	\end{equation}
	The quantity $\frac{h_{12}h_{21}}{\sqrt{h_{ii}^2+h_{ij}^2}}\mathsf{A}$ determines how large the gap shown in \eqref{eq:gap_sum_rate_weak} between $\mathsf{R}_{\rm sum, TIN}$ and $\mathsf{C}_{\rm sum}$ is. We can expect that under some condition, the gap can be bounded. The following theorem uses this result to bound the gap for the symmetric IM/DD IC in the weak interference regime, $h_{12}=h_{21}\in\bigl(0,(\frac{\mathsf{A}}{\sigma})^{-\frac{1}{2}}\bigr)$, which will be justified later in Remark \ref{rmk:regimes} through the GDoF of the symmetric IC.

	\begin{thm}
		Define $\mathsf{G}_{\nicefrac{\mathsf{A}}{\sigma}}$ to be a function of $\frac{\mathsf{A}}{\sigma}$ such that $\mathsf{G}_{\nicefrac{\mathsf{A}}{\sigma}}\in\bigl(0,(\frac{\mathsf{A}}{\sigma})^{-\frac{1}{2}}\bigr)$ and $\lim_{\frac{\mathsf{A}}{\sigma}\to\infty}\frac{\mathsf{G}_{\nicefrac{\mathsf{A}}{\sigma}}\mathsf{A}}{\sigma}=\infty$. For a symmetric IM/DD IC where $\alpha_1=\alpha_2=\alpha\in(0,1]$, $h_{11}=h_{22}=1$, and $h_{12}=h_{21}=\mathsf{G}_{\nicefrac{\mathsf{A}}{\sigma}}$, the asymptotic achievable rate of TIN is within a gap $\lessapprox 1.12$ bits/transmission of the sum capacity, i.e., $\lim_{\frac{\mathsf{A}}{\sigma}\to\infty}\bigl[\mathsf{C}_{\rm sum} - \mathsf{R}_{\rm sum, TIN}\bigr]\le2\max_{x,\alpha\in(0,1]}\overline{\mathsf{C}}(x, \alpha)\lessapprox1.12$ bits/transmission.
	\end{thm}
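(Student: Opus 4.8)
The plan is to obtain the statement as a direct specialization of the general gap estimate \eqref{eq:gap_sum_rate_weak} derived above for an arbitrary IM/DD IC. Two observations make the reduction work. First, \eqref{eq:gap_sum_rate_weak} applies to the present channel precisely because the hypothesis $\lim_{\frac{\mathsf{A}}{\sigma}\to\infty}\frac{\mathsf{G}_{\nicefrac{\mathsf{A}}{\sigma}}\mathsf{A}}{\sigma}=\infty$ forces the interfering term $h_{ji}\kappa_j=\mathsf{G}_{\nicefrac{\mathsf{A}}{\sigma}}\lambda_j\mathsf{A}$ to grow without bound relative to $\sigma$, which is exactly the condition under which \eqref{F_Asymptotic}, and hence \eqref{eq:TIN_sum_weak} and \eqref{eq:gap_sum_rate_weak}, are valid. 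Second, substituting $h_{11}=h_{22}=1$, $h_{12}=h_{21}=\mathsf{G}_{\nicefrac{\mathsf{A}}{\sigma}}$ and $\alpha_1=\alpha_2=\alpha$ into the residual term of \eqref{eq:gap_sum_rate_weak} collapses it to
\begin{equation*}
\sum_{\substack{i,j=1\\ i\neq j}}^2\overline{\mathsf{C}}\!\left(\frac{h_{12}h_{21}}{\sqrt{h_{ii}^2+h_{ij}^2}}\mathsf{A},\,\alpha_j\right)=2\,\overline{\mathsf{C}}\!\left(\frac{\mathsf{G}_{\nicefrac{\mathsf{A}}{\sigma}}^2}{\sqrt{1+\mathsf{G}_{\nicefrac{\mathsf{A}}{\sigma}}^2}}\,\mathsf{A},\,\alpha\right),
\end{equation*}
so that $\lim_{\frac{\mathsf{A}}{\sigma}\to\infty}\bigl[\mathsf{C}_{\rm sum}-\mathsf{R}_{\rm sum, TIN}\bigr]\le 2\,\limsup_{\frac{\mathsf{A}}{\sigma}\to\infty}\overline{\mathsf{C}}\bigl(\frac{\mathsf{G}_{\nicefrac{\mathsf{A}}{\sigma}}^2}{\sqrt{1+\mathsf{G}_{\nicefrac{\mathsf{A}}{\sigma}}^2}}\mathsf{A},\alpha\bigr)$; the left-hand side is non-negative since TIN is always achievable, so this is the quantity we want to bound.

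The key step is to control the argument of $\overline{\mathsf{C}}$ uniformly in $\frac{\mathsf{A}}{\sigma}$. Since $\mathsf{G}_{\nicefrac{\mathsf{A}}{\sigma}}<\bigl(\frac{\mathsf{A}}{\sigma}\bigr)^{-1/2}$ we have $\mathsf{G}_{\nicefrac{\mathsf{A}}{\sigma}}^2\frac{\mathsf{A}}{\sigma}<1$, and because $\sqrt{1+\mathsf{G}_{\nicefrac{\mathsf{A}}{\sigma}}^2}>1$,
\begin{equation*}
\frac{\mathsf{G}_{\nicefrac{\mathsf{A}}{\sigma}}^2}{\sqrt{1+\mathsf{G}_{\nicefrac{\mathsf{A}}{\sigma}}^2}}\cdot\frac{\mathsf{A}}{\sigma}<\mathsf{G}_{\nicefrac{\mathsf{A}}{\sigma}}^2\cdot\frac{\mathsf{A}}{\sigma}<1 ,
\end{equation*}
i.e., the first argument of $\overline{\mathsf{C}}$ stays below $\sigma$; this is the only place the weak-interference exponent $-\tfrac{1}{2}$ is used, and it is what keeps the genie-leakage term of $\mathcal{G}'_{\rm ge}$ bounded. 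Since $\overline{\mathsf{C}}(x,\alpha)$ depends on $x$ only through $x/\sigma$ (cf. \eqref{eq:ci}), it follows that $\overline{\mathsf{C}}\bigl(\frac{\mathsf{G}_{\nicefrac{\mathsf{A}}{\sigma}}^2}{\sqrt{1+\mathsf{G}_{\nicefrac{\mathsf{A}}{\sigma}}^2}}\mathsf{A},\alpha\bigr)\le\max_{x,\alpha\in(0,1]}\overline{\mathsf{C}}(x,\alpha)$ for every $\frac{\mathsf{A}}{\sigma}$ (here $x$ stands for $x/\sigma$), and combining with the display of the previous paragraph gives $\lim_{\frac{\mathsf{A}}{\sigma}\to\infty}\bigl[\mathsf{C}_{\rm sum}-\mathsf{R}_{\rm sum, TIN}\bigr]\le 2\max_{x,\alpha\in(0,1]}\overline{\mathsf{C}}(x,\alpha)$.

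It remains to certify the numerical value $\max_{x,\alpha\in(0,1]}\overline{\mathsf{C}}(x,\alpha)\lessapprox0.56$, which yields the claimed $\lessapprox1.12$ bits/transmission. A crude analytic bound via properties (i) and (ii) of Remark~\ref{rmk:property} already gives, for $x/\sigma\le1$, $\overline{\mathsf{C}}(x,\alpha)<\underline{\mathsf{C}}(x,\alpha)+0.7=\tfrac{1}{2}\log\!\bigl(1+\rho(\alpha)x^2/\sigma^2\bigr)+0.7\le\tfrac{1}{2}\log\!\bigl(1+\tfrac{1}{2\pi e}\bigr)+0.7<0.75$, i.e., a (weaker) overall gap of at most $1.5$ bits; to sharpen it to $1.12$ one evaluates $c_1$ and $c_2$ from \eqref{eq:ci} over $x/\sigma\in(0,1]$ and $\alpha\in(0,1]$, noting that $c_2$ (the $\alpha\in[\tfrac{1}{2},1]$ branch) does not depend on $\alpha$ and numerically dominates $c_1$, and optimizes over $x/\sigma$. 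The whole argument is thus a one-line specialization of \eqref{eq:gap_sum_rate_weak} plus this numerical optimization; the main obstacle is simply keeping straight that the two hypotheses on $\mathsf{G}_{\nicefrac{\mathsf{A}}{\sigma}}$ play distinct roles — $\frac{\mathsf{G}_{\nicefrac{\mathsf{A}}{\sigma}}\mathsf{A}}{\sigma}\to\infty$ to invoke \eqref{eq:gap_sum_rate_weak}, and $\mathsf{G}_{\nicefrac{\mathsf{A}}{\sigma}}<\bigl(\frac{\mathsf{A}}{\sigma}\bigr)^{-1/2}$ to keep $\overline{\mathsf{C}}$ bounded — together with carrying out the numerical certification.
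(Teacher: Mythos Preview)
Your proof is correct and follows essentially the same approach as the paper: specialize \eqref{eq:gap_sum_rate_weak} to the symmetric setting, use $\mathsf{G}_{\nicefrac{\mathsf{A}}{\sigma}}<(\tfrac{\mathsf{A}}{\sigma})^{-1/2}$ to show the argument of $\overline{\mathsf{C}}$ satisfies $\frac{\mathsf{G}_{\nicefrac{\mathsf{A}}{\sigma}}^2\mathsf{A}}{\sqrt{1+\mathsf{G}_{\nicefrac{\mathsf{A}}{\sigma}}^2}\,\sigma}\in(0,1)$, and then numerically evaluate $\max_{x,\alpha\in(0,1]}\overline{\mathsf{C}}(x,\alpha)$. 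The paper additionally identifies the maximizer as $(x/\sigma,\alpha)=(1,0.5)$, and your clarification of the distinct roles of the two hypotheses on $\mathsf{G}_{\nicefrac{\mathsf{A}}{\sigma}}$, along with the crude analytic $1.5$-bit bound, are useful additions not in the paper.
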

	\begin{proof}		
		This result can be obtained by substituting $h_{11}=h_{22}=1$, and $h_{12}=h_{21}=\mathsf{G}_{\nicefrac{\mathsf{A}}{\sigma}}$, into \eqref{eq:gap_sum_rate_weak} and noting that $\frac{\mathsf{G}_{\nicefrac{\mathsf{A}}{\sigma}}^2\mathsf{A}}{\sqrt{\mathsf{G}_{\nicefrac{\mathsf{A}}{\sigma}}^2+1}\sigma}\in(0,1]$ when $\mathsf{G}_{\nicefrac{\mathsf{A}}{\sigma}}\in\bigl(0,(\frac{\mathsf{A}}{\sigma})^{-\frac{1}{2}}\bigr)$. Then, using numerical evaluation we obtain $\max_{x,\alpha\in(0,1]}\overline{\mathsf{C}}(x, \alpha)=\overline{\mathsf{C}}(1, 0.5)\lessapprox0.56$ bits/transmission. This ends the proof of the second result.
	\end{proof}

	After seeing the constant-gap tightness in the strong and the weak interference regime, in the next subsection, we show that our obtained capacity inner and outer bounds are tight in the sense of the sum-rate GDoF for the symmetric IC.

	\subsection{Generalized Degrees of Freedom (GDoF)} \label{sec:gdof}
	
	Consider the symmetric IM/DD IC, $\alpha_1=\alpha_2=\alpha$, $h_{11}=h_{22}=1$, and $h_{12}=h_{21}=g$. To obtain the GDoF, we define an interference strength parameter $\delta=\frac{\log\left(\frac{g\mathsf{A}}{\sigma}\right)}{\log\left(\frac{\mathsf{A}}{\sigma}\right)}$ which reflects the strength of the interference-to-noise ratio (INR) relative to the signal-to-noise ratio (SNR). We focus on $\delta>0$, i.e., $g\mathsf{A}>\sigma$, which can be interpreted as the interference-limited regime. This leads to $g=(\frac{\mathsf{A}}{\sigma})^{\delta-1}$ and $\lim_{\frac{\mathsf{A}}{\sigma}\to\infty}\frac{g\mathsf{A}}{\sigma}=\infty$. Then, the GDoF of the symmetric IM/DD IC can be defined as \cite[eq. (20)]{Etkin-Tse-Wang-2008}
	\begin{align}\label{eq:dof}
	\gamma (\delta) &\triangleq \lim_{  \substack{ \frac{\mathsf{A}}{\sigma} \rightarrow\infty } } \frac{  \mathsf{C}_{\rm sum} }{ 2\mathsf{C}(\mathsf{A},\alpha) } =\lim_{\substack{\frac{\mathsf{A}}{\sigma}\rightarrow\infty}} \frac{  \mathsf{C}_{\rm sum} }{ 2\log\left( \frac{\mathsf{A}}{\sigma}  \right) },
	\end{align}
	where the last equality holds for all $\alpha\in(0,1]$ since $\lim_{\frac{\mathsf{A}}{\sigma}\to\infty} \bigl[\mathsf{C}(\mathsf{A},\alpha)-\underline{\mathsf{C}}(\mathsf{A},\alpha)\bigr]=0$ from Property (i) in Remark \ref{rmk:property}. It is worth to mention that the GDoF can be interpreted as the prelog factor to approximate the sum-rate capacity for various interference strengths at high $\frac{\mathsf{A}}{\sigma}$.
	The theorem below characterizes the GDoF of the symmetric IM/DD IC, which is depicted and verified in Fig. \ref{fig:dof}.
	
	\begin{thm} \label{thm:dof}
		The GDoF of the symmetric IM/DD IC is 
		\begin{equation}
				\gamma(\delta) = \min\Bigl\{\max\bigl\{1- \delta, \delta\bigr\},\max\bigl\{1- \frac{\delta}{2}, \frac{\delta}{2} \bigr\}, 1\Bigr\}.
		\end{equation}
	\end{thm}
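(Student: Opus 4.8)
The plan is to pin down $\gamma(\delta)$ by sandwiching it between the asymptotic normalized sum-rates of the two computable inner bounds $\mathcal{G}'_{\rm TIN},\mathcal{G}'_{\rm HK}$ and the two computable outer bounds $\mathcal{G}'_{\rm Z},\mathcal{G}'_{\rm ge}$, and then verifying that the best inner bound and the best outer bound already coincide with the stated formula. The only tool needed is an elementary consequence of Remark~\ref{rmk:property}(i): writing $\ell\triangleq\log(\mathsf{A}/\sigma)$, for any $\alpha\in(0,1]$ and any $b>0$ whose exponent $p\triangleq\lim_{\mathsf{A}/\sigma\to\infty}\frac{\log(b/\sigma)}{\ell}$ exists, one has $\lim\frac{\overline{\mathsf{C}}(b,\alpha)}{\ell}=\lim\frac{\underline{\mathsf{C}}(b,\alpha)}{\ell}=\max\{p,0\}$, since $\underline{\mathsf{C}}(b,\alpha)=\frac12\log(1+\rho(\alpha)b^2/\sigma^2)$ and $\overline{\mathsf{C}}-\underline{\mathsf{C}}\to0$; in particular an argument of order $\sigma$ keeps $\overline{\mathsf{C}}$ bounded and hence contributes $0$ after normalization. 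Combined with \eqref{F_Asymptotic}, and with the substitution $g=(\mathsf{A}/\sigma)^{\delta-1}$ (so $h_{ji}\mathsf{A}/\sigma$ has exponent $\delta$ and $1/g$ has exponent $1-\delta$), every sum-rate bound reduces, after dividing by $2\ell$, to a $\min/\max$ of affine functions of $\delta$; the GDoF limit \eqref{eq:dof} is well defined because $\mathsf{C}(\mathsf{A},\alpha)\sim\ell$.

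\textbf{Converse.} For $\mathcal{G}'_{\rm Z}$ in the symmetric case, the per-user bound $\mathsf{R}_i\le\overline{\mathsf{C}}(\mathsf{A},\alpha)$ contributes $1$ to the normalized sum; in the sum bound of \eqref{eq:Z'-R} the argument $(1+g)\mathsf{A}$ has exponent $\max\{\delta,1\}$ and, since $g^2=(\mathsf{A}/\sigma)^{2(\delta-1)}$ dominates the other entry inside the inner $\max$ for $\delta>0$, the correction $-\min\{0,\frac12\log\max\{g^2,\cdots\}\}$ equals $0$ for $\delta\ge1$ and $(1-\delta)\ell+o(\ell)$ for $\delta<1$; this yields $\gamma_{\rm Z}(\delta)=\min\{\max\{1-\frac{\delta}{2},\frac{\delta}{2}\},1\}$. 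For $\mathcal{G}'_{\rm ge}$, the argument $\frac{h_{12}h_{21}}{\sqrt{h_{ii}^2+h_{ij}^2}}\mathsf{A}=\frac{g^2}{\sqrt{1+g^2}}\mathsf{A}$ contributes $(2\delta-1)^+\ell$ when $\delta\le1$ and $\delta\ell$ when $\delta\ge1$, while $\frac12\log(1+h_{ii}^2/h_{ij}^2)=\frac12\log(1+g^{-2})$ contributes $(1-\delta)^+\ell$; summing the two identical receiver terms, normalizing, and capping by the per-user bound gives $\gamma_{\rm ge}(\delta)=\min\{\max\{1-\delta,\delta\},1\}$. Hence $\gamma(\delta)\le\min\{\gamma_{\rm Z}(\delta),\gamma_{\rm ge}(\delta)\}=\min\{\max\{1-\delta,\delta\},\max\{1-\frac{\delta}{2},\frac{\delta}{2}\},1\}$.

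\textbf{Achievability.} I evaluate the inner bounds in three regimes. For $\delta\in(0,\frac12]$, the TIN point with $\kappa_1=\kappa_2=\mathsf{A}$, $\theta_1=\theta_2=\alpha$ gives, by \eqref{F_Asymptotic}, $\mathsf{R}_i=\mathsf{F}(0,\mathsf{A},g\mathsf{A},1,\alpha,\alpha)\sim\frac12\log(1+g^{-2})\sim(1-\delta)\ell$, so $\gamma_{\rm TIN}(\delta)=1-\delta=\gamma(\delta)$ on this interval. For $\delta\in[\frac12,1)$, I use the HK scheme with private peak $\kappa_i\asymp\sigma/g$ (so the private interference reaches the unintended receiver near the noise floor, giving $\kappa_i/\sigma$ exponent $1-\delta$) and common peak $\eta_i=\mathsf{A}-\kappa_i\sim\mathsf{A}$, with $\theta_i=\phi_i=\alpha$ so that $\theta_i\kappa_i+\phi_i\eta_i=\alpha\mathsf{A}$. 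Then in \eqref{eq:HK-R'-1}--\eqref{eq:HK-R'-2}: the private $\mathsf{F}$-term has exponent $1-\delta$; the common-at-own-receiver term has exponent $\delta$ and the common-at-other-receiver term exponent $2\delta-1$, so \eqref{eq:HK-R'-1} forces $\mathsf{R}_i\lesssim\delta\ell$, whereas \eqref{eq:HK-R'-2} bounds $\mathsf{R}_1+\mathsf{R}_2$ by a quantity of exponent $2-\delta$; the symmetric (equal-rate) operating point of the resulting polytope attains normalized sum $\min\{\delta,1-\frac{\delta}{2}\}$, equal to $\gamma(\delta)$ on $[\frac12,1)$. For $\delta\ge1$ (strong interference) I take the common-only choice $\kappa_i=0$, $\eta_i=\mathsf{A}$, $\phi_i=\alpha$ of Theorem~\ref{thm:SI-Cap}: \eqref{eq:HK-R'-1} gives $\mathsf{R}_i\lesssim\ell$ and \eqref{eq:HK-R'-2} gives $\mathsf{R}_1+\mathsf{R}_2\lesssim\mathsf{F}(\mathsf{A},g\mathsf{A},0,\alpha,\alpha,0)\sim\delta\ell$, so the normalized sum is $\min\{\frac{\delta}{2},1\}=\gamma(\delta)$ on $[1,\infty)$. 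In each regime the achieved value equals the claimed $\gamma(\delta)$ and meets the converse, so no other parameter choice in $\mathcal{G}'_{\rm TIN}\cup\mathcal{G}'_{\rm HK}$ can do better; continuity of all expressions handles the endpoints $\delta\in\{\frac12,\frac23,1,2\}$. Assembling the converse and achievability gives $\gamma(\delta)=\min\{\max\{1-\delta,\delta\},\max\{1-\frac{\delta}{2},\frac{\delta}{2}\},1\}$.

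\textbf{Expected main difficulty.} The subtle part is the intermediate HK regime $\delta\in[\frac12,1)$: one must choose the private/common peak \emph{and} average-to-peak splits so that all $\mathsf{F}(\cdot)$ terms in \eqref{eq:HK-R'-1}--\eqref{eq:HK-R'-2} acquire the intended exponents, determine which of the individual and sum constraints is active as $\delta$ passes through $\frac23$ (below $\frac23$ the sum rate is limited by the two individual bounds, above $\frac23$ by the genuine sum constraint), and verify that the $\overline{\mathsf{C}}$–$\underline{\mathsf{C}}$ gap really vanishes after normalization for every argument that occurs — especially the $\Theta(\sigma)$ arguments (e.g.\ $h_{ji}\kappa_j$), where \eqref{F_Asymptotic} does not apply and one instead uses that $\overline{\mathsf{C}}(\Theta(\sigma),\alpha)=O(1)$. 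The converse side is comparatively routine once the dominant entry of the inner $\max$ in \eqref{eq:Z'-R} is identified.
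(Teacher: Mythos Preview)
Your plan is correct and follows essentially the same route as the paper: the converse combines the Z-channel bound (giving $\min\{\max\{1-\frac{\delta}{2},\frac{\delta}{2}\},1\}$) with the genie-aided bound (giving $\max\{1-\delta,\delta\}$), and the achievability splits into the same three regimes with the same parameter choices --- TIN at full peak for $\delta\le\frac12$, the HK scheme with private peak $\kappa_i\asymp\sigma/g$ (equivalently $\beta=c(g\mathsf{A}/\sigma)^{-1}$ in the paper's notation) for $\delta\in[\frac12,1)$, and the common-only HK scheme for $\delta\ge1$. Your identification of the $\Theta(\sigma)$ subtlety (that \eqref{F_Asymptotic} does not directly apply when $a_3=h_{ji}\kappa_j=c\sigma$ and one must instead use boundedness of $\overline{\mathsf{C}}(c\sigma,\alpha)$) is exactly the point the paper handles implicitly, so you have the right picture of where the care is needed.
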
	
	\begin{proof}
		We give an outline of the proof here, and relegate details to the following subsections. First, to derive a GDoF upper bound, we use the Z-channel bound in Theorem \ref{thm:Z-outerB} to show that $\gamma(\delta) \le \min\left\{\max \left\{ \frac{\delta}{2} , 1-\frac{\delta}{2}  \right\}, 1 \right\}$ as detailed in Sec. \ref{sec:gdof-Z}, and the genie-aided bound in Theorem \ref{thm:Ge-outerB} to show that $\gamma(\delta)\leq\max\left\{1-\delta,\delta\right\}$ as detailed in Sec. \ref{sec:gdof-ge}. Then, we show that the HK scheme achieves the obtained GDoF upper bound in Sec. \ref{sec:gdof-hk}, which proves the theorem. 
	\end{proof}
	
	\begin{remark} \label{rmk:regimes}
		The GDoF given in Theorem \ref{thm:dof} helps to define the weak, moderate, and strong interference regimes for the symmetric IM/DD IC as the regimes where $g\in\left(0, (\frac{\mathsf{A}}{\sigma})^{-\frac{1}{2}} \right)$, $\left[ (\frac{\mathsf{A}}{\sigma})^{-\frac{1}{2}}, 1\right)$, and $[1,\infty)$, respectively. Compared to the GDoF of the standard Gaussian IC \cite[(25)]{Etkin-Tse-Wang-2008}, apart from the different definitions of $\text{SNR}$ and $\text{INR}$ (defined here as $\frac{\mathsf{A}}{\sigma}$ and $\frac{g\mathsf{A}}{\sigma}$, respectively), we can see that the GDoF of the symmetric IM/DD IC and the symmetric Gaussian IC are the same.
\end{remark}
	
	\begin{remark}
		Although the derivation of GDoF $\gamma(\delta)$ requires $g$ to be dependent on $\frac{\mathsf{A}}{\sigma}$, the approximation $\mathsf{C}_{\rm sum}\approx2\gamma(\delta)\log(\frac{\mathsf{A}}{\sigma})$ applies for any $g>0$ when $\frac{\mathsf{A}}{\sigma}$ is sufficiently large and $\delta$ is calculated as $\frac{\log(\frac{g\mathsf{A}}{\sigma})}{\log(\frac{\mathsf{A}}{\sigma})}$, whereby the approximation becomes more precise as $\frac{\mathsf{A}}{\sigma}$ grows. Fig. \ref{fig:dof_figs} shows an example.
	\end{remark}

	Next, we will derive the upper and lower bounds of $\gamma(\delta)$ through $\mathcal{G}'_{\rm Z}$, $\mathcal{G}'_{\rm ge}$, $\mathcal{G}'_{\rm TIN}$, and $\mathcal{G}'_{\rm HK}$, where the fact that 
	\begin{align}\label{Asymp_Cap}
	\lim_{\frac{\mathsf{A}}{\sigma}\to\infty}\frac{\overline{\mathsf{C}}(\mathsf{A},\alpha)}{\log\left(\frac{\mathsf{A}}{\sigma}\right)}=\lim_{\frac{\mathsf{A}}{\sigma}\to\infty}\frac{\underline{\mathsf{C}}(\mathsf{A},\alpha)}{\log\left(\frac{\mathsf{A}}{\sigma}\right)}=1,
	\end{align}
	 the asymptotic behavior of the function $\mathsf{F}$ shown in \eqref{F_Asymptotic}, and properties (i)-(iii) listed in Sec. \ref{sec:p2p-bounds} will be extensively used.
	
	\subsubsection{Z-Channel GDoF Upper Bound} \label{sec:gdof-Z}
	Recall $\mathcal{G}'_{\rm Z}$ \eqref{eq:Z'-R} in Theorem \ref{thm:Z-outerB}, from which we have
	\begin{align}
	\mathsf{C}_{\rm sum} &\le \max_{(\mathsf{R}_1,\mathsf{R}_2) \in \mathcal{G}'_{\rm Z}} \mathsf{R}_1 + \mathsf{R}_2 \notag \\ 
	& \le \overline{\mathsf{C}}\bigl((1+g)\mathsf{A},\alpha\bigr) \notag \\
	&\quad - \min \Biggl\{ 0,  \frac{1}{2} \log \Biggl(  \max\biggl\{ g^2, \frac{1-g^2}{2^{ 2\overline{\mathsf{C}}(\mathsf{A},\alpha) }} \biggr\} \Biggr) \Biggr\}.  %\\
	\end{align}
	
	When $\delta\ge1$ (i.e., $g\ge 1$), $\mathsf{C}_{\rm sum}\le \overline{\mathsf{C}}((1+g)\mathsf{A},\alpha)$, and we have
	\begin{align} 
	\gamma(\delta)
	\le \lim_{\frac{\mathsf{A}}{\sigma}\to\infty} \frac{ \overline{\mathsf{C}}((1+g)\mathsf{A},\alpha)}{ 2\log \bigl( \frac{ \mathsf{A}}{\sigma} \bigr)  } =\lim_{\frac{\mathsf{A}}{\sigma}\to\infty} \frac{ \log \bigl( \frac{(g+1)\mathsf{A}}{\sigma} \bigr)  }{ 2\log \bigl( \frac{ \mathsf{A}}{\sigma} \bigr)  } = \frac{\delta}{2},
	\end{align}
	which follows using \eqref{Asymp_Cap} and since $g=(\frac{\mathsf{A}}{\sigma})^{\delta-1}$.
	On the other hand, when $\delta<1$ (i.e., $g< 1$), we have $\mathsf{C}_{\rm sum} \le \overline{\mathsf{C}}\bigl((1+g)\mathsf{A},\alpha\bigr)-\log(g)$ since $\max\left\{ x, y \right\}\ge x$, and
	\begin{subequations}
		\begin{align}
			\gamma(\delta)&\le \lim_{\frac{\mathsf{A}}{\sigma} \rightarrow \infty}\frac{\overline{\mathsf{C}}\bigl((1+g)\mathsf{A},\alpha\bigr)-\log(g)}{2\log\bigl(\frac{\mathsf{A}}{\sigma}\bigr)} \\
			&= \lim_{\frac{\mathsf{A}}{\sigma} \rightarrow \infty}\frac{\log\bigl(\frac{(1+g)\mathsf{A}}{\sigma}\bigr)-\log(g)}{2\log\bigl(\frac{\mathsf{A}}{\sigma}\bigr)} \\
			&= 1- \frac{\delta}{2},
		\end{align}
	\end{subequations}
	which follows using \eqref{Asymp_Cap} and since $g=(\frac{\mathsf{A}}{\sigma})^{\delta-1}$. 
	Thus, we conclude that
	\begin{align}\label{eq:GDOFZ}
	\gamma(\delta) \le \min\biggl\{\max \Bigl\{ \frac{\delta}{2} , 1-\frac{\delta}{2}  \Bigr\}, 1 \biggr\}.
	\end{align}
	
	\subsubsection{Genie-Aided Channel GDoF Upper Bound} \label{sec:gdof-ge}
	From $\mathcal{G}'_{\rm ge}$ in Theorem \ref{thm:Ge-outerB}, we have 
	\begin{subequations}
		\begin{align}
		\mathsf{C}_{\rm sum} & \le \max_{(\mathsf{R}_1,\mathsf{R}_2) \in \mathcal{G}'_{\rm ge}} \mathsf{R}_1 + \mathsf{R}_2 \\
		&\le 2\overline{\mathsf{C}}\Bigl(\frac{g^2}{\sqrt{g^2+1}}\mathsf{A}, \alpha\Bigr) + \log\Bigl(1 + \frac{1}{g^2}\Bigr)  \\
		%& = 2\underline{\mathsf{C}}\Bigl(\frac{g^2}{\sqrt{g^2+1}}\mathsf{A}, \alpha\Bigr) + 2\nu + \log\Bigl(1 + \frac{1}{g^2}\Bigr) \\
		& = \log\Bigl( 1+\frac{1}{g^2} + \frac{\rho(\alpha)g^2\mathsf{A}^2}{\sigma^2} \Bigr) + 2\max_{x\in\mathbb{R}_+}\varepsilon(x,\alpha),
		\end{align}
	\end{subequations}
	where $\max_{x\in\mathbb{R}_+}\varepsilon(x,\alpha)<0.7$ bits (Property (i) in Remark \ref{rmk:property}). Then, we have
	\begin{subequations}
		\begin{align}
			\gamma(\delta) &\le 
			\lim_{\frac{\mathsf{A}}{\sigma}\to\infty} \frac{\log\Bigl( 1+\frac{1}{g^2} + \frac{\rho(\alpha)g^2\mathsf{A}^2}{\sigma^2} \Bigr) + 2\max_{x\in\mathbb{R}_+}\varepsilon(x,\alpha)}{2\log(\frac{\mathsf{A}}{\sigma})} \\
			&\leq
			\lim_{\frac{\mathsf{A}}{\sigma}\to\infty} \max \biggl\{ -\frac{\log(g)}{\log(\frac{\mathsf{A}}{\sigma})}, \frac{\log(\frac{g\mathsf{A}}{\sigma})}{\log(\frac{\mathsf{A}}{\sigma})} \biggr\} \\
			&\overset{}{=} \max\{1-\delta, \delta\},\label{eq:GDOFGA}
		\end{align}
	\end{subequations}
	which follows since $\log\left( 1+x + y \right) \le \max\bigl\{\log(1+ 2x), \log(1+2y)\bigr\}$ and using and $g=(\frac{\mathsf{A}}{\sigma})^{\delta-1}$. 
	
	Combining \eqref{eq:GDOFZ} and \eqref{eq:GDOFGA}, we obtain 
	\begin{align}
	\gamma(\delta) \le \min\Bigl\{\max\bigl\{1- \delta, \delta\bigr\}, \max\bigl\{1- \frac{\delta}{2}, \frac{\delta}{2} \bigr\}, 1\Bigr\},
	\end{align} 
	which coincides with the GDoF in Theorem \ref{thm:dof}. Next, we show that this upper bound is achievable.

	\subsubsection{Achievable GDoF of the HK Scheme} \label{sec:gdof-hk}
	We first look at the case when only a private message is used in the HK scheme, which is equivalent to TIN. From Theorem \ref{thm:TIN-innerB} we have
	\begin{align}
		\mathsf{C}_{\rm sum} \ge \max_{(\mathsf{R}_1,\mathsf{R}_2) \in \mathcal{G}'_{\rm TIN}} \mathsf{R}_1 + \mathsf{R}_2  
		\ge 2\mathsf{F}(0,\mathsf{A}, g\mathsf{A}, 1,\alpha,\alpha),  %\\
	\end{align}
	which follows by dropping time-sharing and setting $\kappa_1=\kappa_2=\mathsf{A}$ and $\theta_1=\theta_2=\alpha$. 
	Moreover, as $(\mathsf{R}_1,\mathsf{R}_2)=\bigl(\mathsf{C}(\mathsf{A},\alpha),0\bigr)$ and $\bigl(0,\mathsf{C}(\mathsf{A},\alpha)\bigr)$ are achievable, $\mathsf{C}_{\rm sum} \ge \mathsf{C}(\mathsf{A},\alpha)$ is achievable (TDMA). Thus, $\mathsf{C}_{\rm sum} \ge \max\bigl\{2\mathsf{F}(0,\mathsf{A}, g\mathsf{A}, 1,\alpha,\alpha), \mathsf{C}(\mathsf{A},\alpha)\bigr\}$. Then, 
	\begin{subequations}
		\begin{align}
			\gamma(\delta) &=  \lim_{\frac{\mathsf{A}}{\sigma}\to\infty} \frac{\mathsf{C}_{\rm sum}}{2\log\left( \frac{\mathsf{A}}{\sigma}  \right)} 
			\\
			&\ge \lim_{\frac{\mathsf{A}}{\sigma}\to\infty} \frac{\max \left\{ \log\bigl( 1+\frac{1}{g^2} \bigr),\log(\frac{\mathsf{A}}{\sigma}) \right\}}{2\log\left( \frac{\mathsf{A}}{\sigma}  \right)}  \\
			&=  \max \Bigl\{ 1- \delta, \frac{1}{2}\Bigr\},
		\end{align}
	\end{subequations}
	where we used \eqref{F_Asymptotic} and $g=(\frac{\mathsf{A}}{\sigma})^{\delta-1}$.
	Comparing to the obtained GDoF upper bound proves that $\gamma(\delta)=1-\delta$ is achievable. %when $\delta < \frac{1}{2}$.
	
	Secondly, when $\delta\ge1$ (i.e., $g\ge1$), we have that the gap between $\mathcal{G}'_{\rm HK}$ and $\mathcal{G}'_{\rm Z}$ is less than a half bit as $\frac{\mathsf{A}}{\sigma}\rightarrow\infty$ from the proof of Theorem \ref{thm:SI-Cap}. Thus, the GDoF characterized by $\mathcal{G}'_{\rm Z}$ is achievable by $\mathcal{G}'_{\rm HK}$, i.e., $\gamma(\delta)=\min\{\frac{\delta}{2},1\}$ is achievable when $\delta\ge1$.
	%As subtracting a finite constant number from the numerator in \eqref{label} will not affect the GDoF, $\gamma=\min\{\frac{1}{2}\delta,1\}$ is achievable when $\delta\ge1$.
	
	Lastly, we look into the case where $\delta \in [\frac{1}{2},1)$, i.e., $g\in\bigl[(\frac{\mathsf{A}}{\sigma})^{-\frac{1}{2}},1\bigr)$. By dropping time-sharing and letting $\kappa_i=\beta\mathsf{A}$, $\eta_i=(1-\beta)\mathsf{A}$ for some $\beta\in(0,1)$, and $\theta_i=\phi_i=\alpha$, $i=1,2$,  we have\footnote{$\mathsf{C}_{\rm sum} \ge \min\{ 2\times\eqref{eq:HK-R'-1}, \eqref{eq:HK-R'-2} \}$.}
	\begin{align}
	\mathsf{C}_{\rm sum} \ge \max_{0<\beta\le 1} 2f_1 + \min\{2f_2, f_3\}
	\end{align}
	where 
	\begin{subequations}
		\begin{align}
			f_1 &= \mathsf{F}(0,\beta\mathsf{A}, g\beta\mathsf{A},1,\alpha,\alpha),\\
			f_2 &= \mathsf{F}\big(0,g(1-\beta)\mathsf{A}, (g+1)\beta\mathsf{A}, 1,\alpha,\alpha
			\big), \\
			f_3 &= \mathsf{F}\big((1-\beta)\mathsf{A}, (1-\beta)g\mathsf{A}, (g+1)\beta\mathsf{A}, \alpha,\alpha,\alpha \big).
		\end{align}
	\end{subequations}
	Letting $\beta = c(\frac{g\mathsf{A}}{\sigma})^{-1}$ for any constant $c$, we obtain
	\begin{subequations}
		\begin{align}
			\gamma(\delta)&\ge\lim_{\frac{\mathsf{A}}{\sigma}\to\infty} \frac{2f_1}{2\log(\frac{\mathsf{A}}{\sigma})} + \min \biggl\{\frac{2f_2}{2\log(\frac{\mathsf{A}}{\sigma})}, \frac{f_3}{2\log(\frac{\mathsf{A}}{\sigma})} \biggr\} \\*
			&= \lim_{\frac{\mathsf{A}}{\sigma}\to\infty}  \frac{\log\big(1+\frac{1}{g^2}\big)}{2\log\left( \frac{\mathsf{A}}{\sigma}  \right)} \notag \\ & \hspace{-1cm} +  \min\Bigg\{  \frac{\log\Big(1+\frac{g^4\mathsf{A}^2}{c^2(1+g)^2\sigma^2}\Big)}{2\log\left( \frac{\mathsf{A}}{\sigma}  \right)}, \frac{\frac{1}{2}\log\Big(1+ \frac{(1+g^2)}{c^2(1+g)^2}\frac{g^2\mathsf{A}^2}{\sigma^2}\Big)}{2\log\left( \frac{\mathsf{A}}{\sigma}  \right)}
			\Bigg\} \\*
			&\ge 1-\delta + \min\Bigl\{2\delta-1,\frac{\delta}{2}\Bigr\}\\*
			& = \min\Bigl\{ \delta, 1-\frac{\delta}{2} \Bigr\}.
		\end{align}
	\end{subequations}
	which follows by using \eqref{F_Asymptotic} and since $c$ is a constant.
	This coincides with the GDoF upper bound in the interval $\delta\in\left[\frac{1}{2},1\right)$. Thus, the HK scheme achieves the GDoF over the whole range of $\delta$. This proves the achievability of Theorem \ref{thm:dof} and ends the proof.

\section{Sample Applications of the Interference Channel}
	\label{Sec:Numerical}
	In this section, two example OWC application scenarios are investigated and the obtained results are adopted for solving system design problems. We investigate an on-chip OWC example, where OWC can be used for wireless bus-based interconnections \cite{bermond1996bus,kirman2006leveraging,nafari2017chip} and interference can be moderate/strong. We show how the obtained computable capacity bounds can aid in optimizing the receivers' locations.
	We also investigate an indoor VLC example, where pairing a transmitter with a nearby receiver typically leads to weak interference. We show that the results can be used to compare the interference management capabilities of different schemes.

	\subsection{On-chip Optical Wireless Communication}
	
	Consider a multi-core chip inside a 6$\times$6cm package, where cores want to communicate with each other. Connecting the cores using wires poses a challenge in terms of space. Instead, an IM/DD OWC system with multiple transmitters and receivers can be used, to save space \cite{bermond1996bus,kirman2006leveraging,nafari2017chip}. Assume that there are two identical transmitters-receiver pairs which transmit light signals through a free-space wireless channel %\footnote{As for the specifications of transmitters, receivers, and optical wireless channels on a chip, we refer interested readers to \cite{fuschini2020multi} for an introduction.} 
	which leads to cross talk. The transmitters are at positions $(1.5,0)$ and $(4.5,0)$ cm facing the $(0,1)$ direction where the origin is at the bottom-left corner of the chip as shown in Fig. \ref{fig:onChip}, and the receivers are at positions $(0, 2)$ and $(x,y)$cm facing the $(0,-1)$ direction where $x$ and $y$ are to be determined. Also, assume that all links are line-of-sight dominated. The peak intensity constraint of the transmitters is $\mathsf{A}=1000$, the average intensity constraint is $\alpha\mathsf{A}$ where $\alpha=0.5$, and the noise variance $\sigma^2=1$. Given the above settings, the design task is to find a location for the second receiver to achieve the highest possible sum-rate or meet a minimum sum-rate requirement.
	
	The solution depends on the transmission scheme. Computable achievable rates of the TIN and HK schemes are given in Theorems \ref{thm:TIN-innerB} and \ref{thm:HK-innerB}. Using these Theorems, the achievable sum-rate of each scheme can be evaluated at different values of $(x,y)$,  after evaluating the channel gains $h_{ij}$, $i=1,2$, as follows (assuming the transmitters are modeled as Lambertian emitters)
	\begin{align} \label{eq:lambertian}
	h_{ij} = \frac{(m+1)\mathsf{GS}}{2\pi d_{ij}^2} (\cos \phi_{ij})^m \cos \psi_{ij} \; \mathbb{I}(\psi_{ij} < \Psi),
	\end{align}
	where $m = \frac{-\ln 2}{\ln (\cos \Phi)}$ is the Lambertian emission order, $\Phi$ is the half-intensity semi-angle of the light source, $d_{ij}$ is the transmitter-receiver distance, \cc{$\mathsf{G}$} is the receiver gain, \cc{$\mathsf{S}$} is the area of the receiver's  active region, $\phi_{ij}$ and $\psi_{ij}$ are the emitting and receiving angles, respectively, i.e., the angle between the transmitter (or receiver) direction and the line between the transmitter and receiver, $\Psi$ is the field of view (FOV) of the receiver, and $\mathbb{I}(\cdot)$ is the indicator function which returns 1 if its argument is true, and zero otherwise \cite{ghassemlooy2019optical}. We assume here that \cc{$\mathsf{G}=1$, $\mathsf{S}=10\;\text{mm}^2$}, $\Phi = 60^{\circ}$, $\Psi = 70^{\circ}$, and since the transmitters and receivers are pointed along the y-axis, we have that $\phi_{ij} = \psi_{ij}$ which can be obtained from the transmitter and receiver positions. 
	
	Fig. \ref{fig:onChip_countour} shows a contour plot of the achievable sum-rate (bit/transmission) with respect to the position of receiver 2. It can be seen that the largest sum-rate is achieved around $(x,y)=(4.7,1.3)$ cm for both the TIN and HK schemes, since receiver 2 nearly does not receive any interference at this point, whereas receiver 1 receives strong interference. However, if the receiver is constrained to be in the upper-right quadrant due to space constraints, then the HK scheme provides better performance as demonstrated in Fig. \ref{fig:onChip_G}, which shows the achievable rate regions of both schemes when receiver 2 is at $(4,4)$ cm (where $[h_{11},h_{12},h_{21},h_{22}]=[3.3,1,3.3,1.9]$, correspondingly. Also, if the design targets a minimum sum-rate requirement of $0.4$ bit/transmission e.g., then the HK scheme can provide a wider area to place receiver 2 which is more convenient for the on-chip layout. Note that the sum-rate design target defined here only serves as an example. Various design targets can be analyzed similarly (such as minimum achievable rate per receiver), and precise channel models from measurements can be used instead of \eqref{eq:lambertian} to obtain performance baselines using the TIN and HK schemes.
	\begin{figure}
		\centering
		\subfloat[Sum-rate countour versus Rx2 position]{\label{fig:onChip_countour}
			\scalebox{1}{% This file was created by matlab2tikz.
%
%The latest updates can be retrieved from
%  http://www.mathworks.com/matlabcentral/fileexchange/22022-matlab2tikz-matlab2tikz
%where you can also make suggestions and rate matlab2tikz.
%

%
\begin{tikzpicture}

\begin{axis}[%
width=.9\columnwidth,
height=.8\columnwidth,
at={(0,0)},
xmin=-3,
xmax=3,
%scale only axis,
%xlabel style={font=\color{white!15!black}},
xlabel={\scriptsize X(cm)},
xlabel near ticks,
ymin=0,
ymax=6,
%ylabel style={font=\color{white!15!black}},`
ylabel={\scriptsize Y(cm)},
ylabel near ticks,
axis background/.style={fill=white},
legend style={legend cell align=left, align=left, at={(axis cs: -3,6)}, anchor=north west,},
%x tick style={font=\tiny},
%y tick style={font=\tiny},
x tick label style ={font=\scriptsize},
y tick label style ={font=\scriptsize},
xlabel near ticks,
ylabel near ticks,
xtick = {-3,-2,-1,0,1,2,3},
xticklabels={0,1,2,3,4,5,6},
ytick = {0, 1,2,3,4,5,6},
grid=major,
]
\addplot[contour prepared, contour prepared format=matlab, contour/draw color=black] table[row sep=crcr] {%
	0.4	33\\
	3	1.95585030621715\\
	2.88090931872727	2.21052631578947\\
	2.68421052631579	2.44943466801666\\
	2.60965357074291	2.52631578947368\\
	2.36842105263158	2.72216062589535\\
	2.13709023585806	2.84210526315789\\
	2.05263157894737	2.91503706953543\\
	1.73684210526316	3.00089610290489\\
	1.42105263157895	3.01152740040215\\
	1.10526315789474	2.9727472126271\\
	0.871011532473068	2.84210526315789\\
	0.789473684210526	2.81087160100282\\
	0.473684210526316	2.60181019897806\\
	0.398190476340831	2.52631578947368\\
	0.191963332838208	2.21052631578947\\
	0.157894736842105	2.13964673496854\\
	-0.00702391574900915	1.89473684210526\\
	-0.0609585209149037	1.57894736842105\\
	-0.0695041447759698	1.26315789473684\\
	-0.00640150426951192	0.947368421052632\\
	0.157894736842105	0.783069310174568\\
	0.473684210526316	0.656360032109582\\
	0.789473684210526	0.64303878196313\\
	1.09380332330003	0.947368421052632\\
	1.10526315789474	0.957638982999829\\
	1.42105263157895	0.956342348229948\\
	1.73684210526316	0.956747800218549\\
	2.04325219978145	1.26315789473684\\
	2.05263157894737	1.2796320982461\\
	2.36842105263158	1.28775195150519\\
	2.68421052631579	1.31128142815142\\
	2.95187646658542	1.57894736842105\\
	3	1.84490153329724\\
	0.6	29\\
	0.233003648826896	0.947368421052632\\
	0.473684210526316	0.746057749205637\\
	0.789473684210526	0.684518846785976\\
	1.05232325847718	0.947368421052632\\
	1.10526315789474	0.994814351615838\\
	1.42105263157895	0.988824412255763\\
	1.73684210526316	0.99069743994371\\
	2.00930256005629	1.26315789473684\\
	2.05263157894737	1.3392621935866\\
	2.36842105263158	1.37677270137525\\
	2.68421052631579	1.4854695692667\\
	2.77768832547014	1.57894736842105\\
	2.68421052631579	1.85832535758637\\
	2.66581813905751	1.89473684210526\\
	2.42897483652252	2.21052631578947\\
	2.36842105263158	2.27221743432982\\
	2.05263157894737	2.48147008438246\\
	1.90078878659533	2.52631578947368\\
	1.73684210526316	2.58370872474544\\
	1.42105263157895	2.60453716064107\\
	1.10526315789474	2.53691542675501\\
	1.08371723557504	2.52631578947368\\
	0.789473684210526	2.39482630247081\\
	0.581452881708153	2.21052631578947\\
	0.473684210526316	2.09590375859245\\
	0.343962295179093	1.89473684210526\\
	0.228060553005478	1.57894736842105\\
	0.197124084139292	1.26315789473684\\
	0.233003648826896	0.947368421052632\\
	1.2	19\\
	0.531995254480548	0.947368421052632\\
	0.789473684210526	0.808959041254513\\
	0.927883064008645	0.947368421052632\\
	1.10526315789474	1.10634045746387\\
	1.42105263157895	1.08627060433321\\
	1.73684210526316	1.09254635911919\\
	1.90745364088081	1.26315789473684\\
	2.05263157894737	1.51815247960809\\
	2.23677196904835	1.57894736842105\\
	2.05263157894737	1.76322528085358\\
	1.7472799990056	1.89473684210526\\
	1.73684210526316	1.89950609113603\\
	1.42105263157895	1.92885433690145\\
	1.29948029670915	1.89473684210526\\
	1.10526315789474	1.84814645439625\\
	0.789473684210526	1.61646261825438\\
	0.760096896646328	1.57894736842105\\
	0.583142869815884	1.26315789473684\\
	0.531995254480548	0.947368421052632\\
	2.2	7\\
	1.30638146629293	1.26315789473684\\
	1.42105263157895	1.24868092446228\\
	1.73684210526316	1.262294557745\\
	1.737705442255	1.26315789473684\\
	1.73684210526316	1.26599980670413\\
	1.42105263157895	1.31019699754032\\
	1.30638146629293	1.26315789473684\\
};
\addlegendentry{\scriptsize TIN}

\addplot[contour prepared, contour prepared format=matlab, dashed, contour/draw color=red, color=red, line width = 1] table[row sep=crcr] {%
	0.4	48\\
	3	4.34524714780987\\
	2.82655679789733	4.42105263157895\\
	2.68421052631579	4.49014262408112\\
	2.36842105263158	4.58029886025366\\
	2.05263157894737	4.61329670017736\\
	1.73684210526316	4.59694263082326\\
	1.42105263157895	4.52828124199576\\
	1.16480353984802	4.42105263157895\\
	1.10526315789474	4.40024670237569\\
	0.789473684210526	4.22527557213132\\
	0.644011184868583	4.10526315789474\\
	0.473684210526316	3.96516024525141\\
	0.315983920420453	3.78947368421053\\
	0.157894736842105	3.58393145553102\\
	0.087030374348011	3.47368421052632\\
	-0.0743403532333378	3.15789473684211\\
	-0.157894736842105	2.87394623894334\\
	-0.372560775454653	2.84210526315789\\
	-0.473684210526316	2.84101478163276\\
	-0.47684805985173	2.84210526315789\\
	-0.789473684210526	2.9939076008772\\
	-1.10526315789474	2.85285794711685\\
	-1.12806154799794	2.84210526315789\\
	-1.13557751360901	2.52631578947368\\
	-1.10526315789474	2.29348582431706\\
	-1.09920913130878	2.21052631578947\\
	-1.04285300899299	1.89473684210526\\
	-0.938018941470452	1.57894736842105\\
	-0.789473684210526	1.3439645323556\\
	-0.760403665840876	1.26315789473684\\
	-0.54455238730069	0.947368421052632\\
	-0.473684210526316	0.902380089131624\\
	-0.157894736842105	0.701310367999072\\
	-0.0211954513287974	0.631578947368421\\
	0.157894736842105	0.486952873125985\\
	0.302520811084541	0.631578947368421\\
	0.473684210526316	0.654701587446538\\
	0.789473684210526	0.642352033702952\\
	1.09449007156021	0.947368421052632\\
	1.10526315789474	0.957024597840955\\
	1.42105263157895	0.955809063189613\\
	1.73684210526316	0.956193077648127\\
	2.04380692235187	1.26315789473684\\
	2.05263157894737	1.27867122039042\\
	2.36842105263158	1.28635169759803\\
	2.68421052631579	1.30879814953493\\
	2.95435974520192	1.57894736842105\\
	3	1.75011027380406\\
	0.6	29\\
	0.205838711559429	0.947368421052632\\
	0.473684210526316	0.743768940821361\\
	0.789473684210526	0.683849470165809\\
	1.05299263509735	0.947368421052632\\
	1.10526315789474	0.994219749750501\\
	1.42105263157895	0.988322031145915\\
	1.73684210526316	0.990185251757444\\
	2.00981474824256	1.26315789473684\\
	2.05263157894737	1.33842784218351\\
	2.36842105263158	1.37569316676266\\
	2.68421052631579	1.4846023262416\\
	2.77855556849525	1.57894736842105\\
	2.68421052631579	1.85832535758637\\
	2.66590591904481	1.89473684210526\\
	2.43154114235413	2.21052631578947\\
	2.36842105263158	2.27571340101341\\
	2.05263157894737	2.48590821542987\\
	1.9192650529926	2.52631578947368\\
	1.73684210526316	2.59200091272653\\
	1.42105263157895	2.61698315061663\\
	1.10526315789474	2.55785198257366\\
	1.03513937938724	2.52631578947368\\
	0.789473684210526	2.42544283991702\\
	0.524898744325865	2.21052631578947\\
	0.473684210526316	2.15997143144023\\
	0.201786545529632	1.89473684210526\\
	0.170556860752044	1.57894736842105\\
	0.160428810739058	1.26315789473684\\
	0.205838711559429	0.947368421052632\\
	1.2	19\\
	0.529167679687639	0.947368421052632\\
	0.789473684210526	0.80834177955438\\
	0.928500325708778	0.947368421052632\\
	1.10526315789474	1.10580520547914\\
	1.42105263157895	1.08586093501482\\
	1.73684210526316	1.09216177408539\\
	1.90783822591461	1.26315789473684\\
	2.05263157894737	1.51769770756279\\
	2.23759567959632	1.57894736842105\\
	2.05263157894737	1.76447945923023\\
	1.75200778488738	1.89473684210526\\
	1.73684210526316	1.90171870756231\\
	1.42105263157895	1.93226750574059\\
	1.28433449891076	1.89473684210526\\
	1.10526315789474	1.85268589178595\\
	0.789473684210526	1.62592551167725\\
	0.751840292552383	1.57894736842105\\
	0.576666759141984	1.26315789473684\\
	0.529167679687639	0.947368421052632\\
	2.2	7\\
	1.30402104020859	1.26315789473684\\
	1.42105263157895	1.24842577479633\\
	1.73684210526316	1.26212264463198\\
	1.73787735536802	1.26315789473684\\
	1.73684210526316	1.2665646910031\\
	1.42105263157895	1.31104768003582\\
	1.30402104020859	1.26315789473684\\
};
\addlegendentry{\scriptsize HK}

\addplot [color=black, draw=none, mark=*, mark options={solid, fill=black, black}, forget plot]
  table[row sep=crcr]{%
0	2\\
};
%\addlegendentry{data1}

\addplot [color=black, draw=none, mark size=3.3pt, mark=triangle*, mark options={solid, rotate=180, fill=black, black}, forget plot]
  table[row sep=crcr]{%
-1.5	0\\
1.5	0\\
};
%\addlegendentry{data2}

\addplot [color=black, dotted, forget plot]
  table[row sep=crcr]{%
-3	0.866025403784439\\
-1.5	0\\
3	2.59807621135332\\
};

\addplot [color=black, dotted, forget plot]
  table[row sep=crcr]{%
-3	2.59807621135332\\
1.5	0\\
3	0.866025403784439\\
};

\node[fill=white, inner sep=.5pt] at (axis cs: -1.5, 0.3)  {\scriptsize Tx1};
\node[fill=white, inner sep=.5pt] at (axis cs: 1.5, 0.3)  {\scriptsize Tx2};
\node[fill=white, inner sep=.5pt] at (axis cs: -0.3, 2.3)  {\scriptsize Rx1};

\end{axis}

\end{tikzpicture}%}
			
		}\hspace{1 em}
		\subfloat[Rate region with Rx2 at $(4,4)$ cm]{\label{fig:onChip_G}
			\scalebox{1}{\input{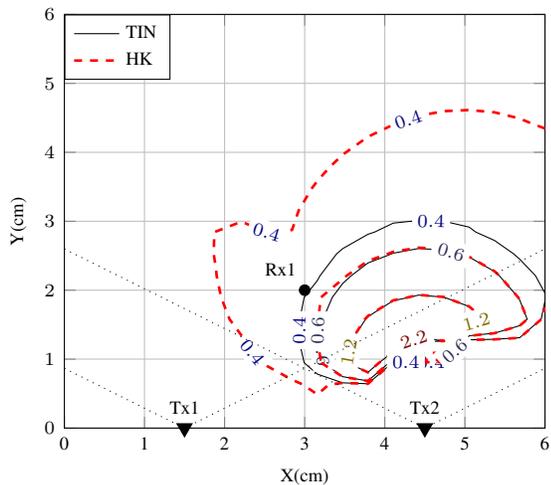}
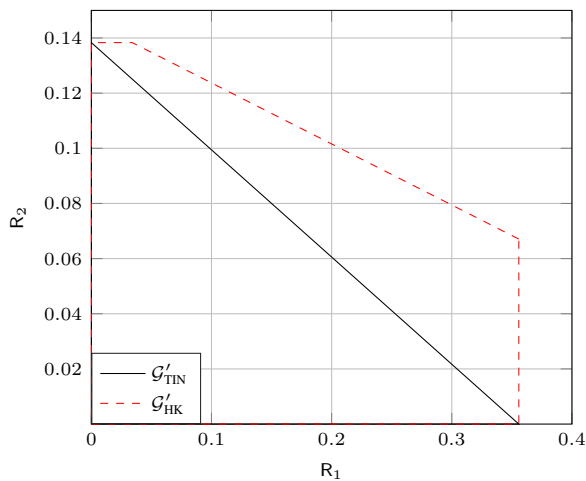}
		}
		\caption{Achievable sum-rate (bit/transmission) for the on-chip optical wireless communication system. Receiver 1 (Rx1) and the two transmitters (Tx1 and Tx2) are represented by the dot and the triangles, respectively, and the dotted lines define the half-intensity range for each transmitter.}
		\label{fig:onChip}
	\end{figure}

	\subsection{Indoor Visible Light Communication (VLC)}
	
	Consider an indoor VLC downlink system in an exhibition room with two light sources that illuminate the room while transmitting data to paired users. The illumination areas of the two light sources overlap to meet illumination requirements, and each light source serves one paired user. Due to the overlap of the illumination areas, interference occurs. To deal with interference, three schemes are considered: TIN, simplified HK, and TDMA with equal time allocation. The achievable sum-rate of TDMA is given by $\frac{1}{2}( \underline{\mathsf{C}}(h_{11}\mathsf{A},\alpha) + \underline{\mathsf{C}}(h_{22}\mathsf{A},\alpha) )$. Note that TDMA requires transmitter-side coordination which is not necessary for TIN and HK schemes.  
	
	Assume that the exhibition room has length, width, and height of $8$, $5$, and $3$ m, respectively, where the origin is the room floor center. Fig. \ref{fig:indoor} depicts the top view of the room. Two identical transmitters modelled as Lambertian emitters are located at $(-2,0,3)$ m and $(2,0,3)$ m, respectively, facing downwards. Two identical receivers are randomly located in the room at $0.8$ m above the ground facing upwards. The channel gain $h_{ij}$, $i,j\in\{1,2\}$, is given by \eqref{eq:lambertian} where we assume \cc{$\mathsf{G}=1$, $\mathsf{S}=10$mm$^2$}, $\Phi = 60^{\circ}$, $\Psi = 70^{\circ}$. We assume that $\mathsf{A}=1000$, $\alpha=0.5$, and $\sigma=1$, and we calculate the sum-rate achieved by TIN and HK using Theorems \ref{thm:TIN-innerB} and \ref{thm:HK-innerB}.
	
	To compare the capabilities of the three schemes under the given settings, we first fix the location of receiver 1 to be at $(-2,1, 0.8)$ m and compare the achievable sum-rate of the schemes with respect to the location of receiver 2 $(x,y,0.8)$ m. The results show that both TIN and HK perform nearly the same in this setting under the given location of receiver 1, thus, we only show the comparison of TDMA and HK in Fig. \ref{fig:indoor}. We can see that the HK scheme can provide a higher sum-rate over a wider area than TDMA, even if receiver 2 is constrained to be in the right half of the room.
	
	Next, we compare the average achievable sum-rate of each scheme with random receiver location. To do so, we divide the room into a $16\times10$ grid of $0.5\times0.5$ m squares, and use the center of each square as a possible location of receivers. Then we divide the room into three regions: left, right, middle, specified by $x<-1$, $x>1$, and $-1\le x \le 1$, respectively, and we assume that receiver 1 is located randomly in the left or middle regions, and receiver 2 is located randomly in the middle or right regions. If we denote the locations of the two receivers using a pair (region 1, region 2), then we can sort these pairs (qualitatively) in increasing order of interference strength as (left, right), then (left, middle) or (middle, right), then (middle, middle). Evaluating the average sum-rate achieved by each scheme in each of these pairs, we obtain the results shown in Table \ref{tab:indoor}. We can observe that when interference is relatively weak, i.e., (left, right), then the TIN and HK schemes achieve the same average sum-rate but outperforms TDMA. As interference gets stronger, i.e., (left, middle) or (middle, right) and (middle, middle), the average sum-rate of the schemes decreases, but the HK scheme outperforms the rest. Thus, the HK scheme shows the best interference management performance. 
	
	\begin{figure*}
		\centering
		\subfloat[The TDMA scheme]{
			\centering
			\begin{tikzpicture}[scale=.9]
			\pgftext{\includegraphics[width=.9\columnwidth, height= .8\columnwidth]{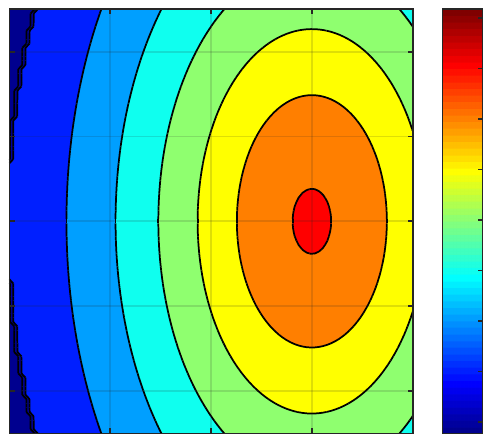}} at (0,0);
			\node[fill=black, regular polygon, regular polygon sides=3,inner sep=1.5pt] at (0.62,0.15) {};
			\node[fill=black, regular polygon, regular polygon sides=3,inner sep=1.5pt] at (-1.97,0.15) {};
			\filldraw[fill=black] (-1.97,1.25) circle (3pt);
			\node[fill=none,inner sep=.5pt] at (0.62,-.2) {\scriptsize Tx1};
			\node[fill=none,inner sep=.5pt] at (-1.97,-.2) {\scriptsize Tx2};
			\node[fill=none,inner sep=.5pt] at (-1.97,.9) {\scriptsize Rx1};
			\node[fill=none,inner sep=.5pt] at (-3.5,-3.1) {\scriptsize $-4$}; % x axis
			\node[fill=none,inner sep=.5pt] at (-2.2,-3.1) {\scriptsize $-2$};
			\node[fill=none,inner sep=.5pt] at (-.7,-3.1) {\scriptsize $0$};
			\node[fill=none,inner sep=.5pt] at (.6,-3.1) {\scriptsize $2$};
			\node[fill=none,inner sep=.5pt] at (2,-3.1) {\scriptsize $4$};
			\node[fill=none,inner sep=.5pt] at (-3.8,-2.3) {\scriptsize $-2$}; % y axis
			\node[fill=none,inner sep=.5pt] at (-3.8,-1.1) {\scriptsize $-1$};
			\node[fill=none,inner sep=.5pt] at (-3.7,.1) {\scriptsize $0$};
			\node[fill=none,inner sep=.5pt] at (-3.7,1.3) {\scriptsize $2$};
			\node[fill=none,inner sep=.5pt] at (-3.7,2.5) {\scriptsize $2$};
			\node[fill=none,inner sep=.5pt] at (3.2,-2.7) {\scriptsize $3.5$}; % color bar
			\node[fill=none,inner sep=.5pt] at (3.1,-2) {\scriptsize $4$};
			\node[fill=none,inner sep=.5pt] at (3.2,-1.3) {\scriptsize $4.5$};
			\node[fill=none,inner sep=.5pt] at (3.1,-.55) {\scriptsize $5$};
			\node[fill=none,inner sep=.5pt] at (3.2,.15) {\scriptsize $5.5$};
			\node[fill=none,inner sep=.5pt] at (3.1,.9) {\scriptsize $6$};
			\node[fill=none,inner sep=.5pt] at (3.2,1.6) {\scriptsize $6.5$};
			\node[fill=none,inner sep=.5pt] at (3.1,2.3) {\scriptsize $7$};
			\node[fill=none,inner sep=.5pt] at (3.2,3) {\scriptsize $7.5$};
			\end{tikzpicture}
		}
		\vspace{2em}
		\subfloat[The HK scheme]{
			\centering
			\begin{tikzpicture}[scale=.9]
			\pgftext{\includegraphics[width=.9\columnwidth, height= .8\columnwidth]{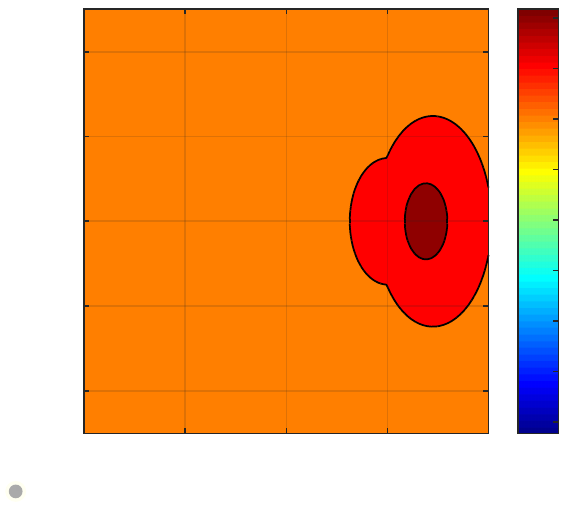}} at (0,0);
			\node[fill=black, regular polygon, regular polygon sides=3,inner sep=1.5pt] at (0.62,0.15) {};
			\node[fill=black, regular polygon, regular polygon sides=3,inner sep=1.5pt] at (-1.97,0.15) {};
			\filldraw[color=black] (-1.97,1.25) circle (3pt);
			\node[fill=none,inner sep=.5pt] at (0.62,-.2) {\scriptsize Tx1};
			\node[fill=none,inner sep=.5pt] at (-1.97,-.2) {\scriptsize Tx2};
			\node[fill=none,inner sep=.5pt] at (-1.97,.9) {\scriptsize Rx1};
			\node[fill=none,inner sep=.5pt] at (-3.5,-3.1) {\scriptsize $-4$}; % x axis
			\node[fill=none,inner sep=.5pt] at (-2.2,-3.1) {\scriptsize $-2$};
			\node[fill=none,inner sep=.5pt] at (-.7,-3.1) {\scriptsize $0$};
			\node[fill=none,inner sep=.5pt] at (.6,-3.1) {\scriptsize $2$};
			\node[fill=none,inner sep=.5pt] at (2,-3.1) {\scriptsize $4$};
			\node[fill=none,inner sep=.5pt] at (-3.8,-2.3) {\scriptsize $-2$}; % y axis
			\node[fill=none,inner sep=.5pt] at (-3.8,-1.1) {\scriptsize $-1$};
			\node[fill=none,inner sep=.5pt] at (-3.7,.1) {\scriptsize $0$};
			\node[fill=none,inner sep=.5pt] at (-3.7,1.3) {\scriptsize $2$};
			\node[fill=none,inner sep=.5pt] at (-3.7,2.5) {\scriptsize $2$};
			\node[fill=none,inner sep=.5pt] at (3.2,-2.7) {\scriptsize $3.5$}; % color bar
			\node[fill=none,inner sep=.5pt] at (3.1,-2) {\scriptsize $4$};
			\node[fill=none,inner sep=.5pt] at (3.2,-1.3) {\scriptsize $4.5$};
			\node[fill=none,inner sep=.5pt] at (3.1,-.55) {\scriptsize $5$};
			\node[fill=none,inner sep=.5pt] at (3.2,.15) {\scriptsize $5.5$};
			\node[fill=none,inner sep=.5pt] at (3.1,.9) {\scriptsize $6$};
			\node[fill=none,inner sep=.5pt] at (3.2,1.6) {\scriptsize $6.5$};
			\node[fill=none,inner sep=.5pt] at (3.1,2.3) {\scriptsize $7$};
			\node[fill=none,inner sep=.5pt] at (3.2,3) {\scriptsize $7.5$};
			\end{tikzpicture}
		}
		\caption{Achievable sum-rate (bit/transmission) vs. the position of receiver 2, where the two transmitters (triangles) are at $(-2,0,3)$ m, $(2,0,3)$ m and receiver 1 (circle) is at $(-2,1,0.8)$ m. The maximum and minimum sum-rate achieved by the TDMA scheme are $7$ and $3.4$ bit/transmission, respectively, compared to $7.6$ and $6.8$ bit/transmission for the HK scheme.}  %(Receiver 1 and the two transmitters are represented by the dot and the triangles, respectively) 
		\label{fig:indoor}
	\end{figure*}

	\begin{table}	
		\centering
		\caption{Average achievable sum-rate of the TDMA, TIN, and HK schemes for different users' locations.}
		\label{tab:indoor}
		\begin{tabular}{ c| c c c}
			 & (left, right) & \multirow{2}{2cm}{(left, middle) or (middle, right)} & (middle, middle) \\ &&& \\ \hline
			TDMA & 6 & 5.5 & 5\\  
			TIN & 6.63 & 6.09 & 5.38\\
			HK & 6.63 & 6.15 & 5.59
		\end{tabular}
	\end{table}

	\section{Conclusion}
	\label{Sec:Conclusion}
	We studied the two-user IM/DD IC under peak and average optical intensity constraints, and provided computable capacity inner bounds based on TIN and HK schemes, and outer bounds based on Z- and genie-aided channels. 
	We also approximated the high-SNR asymptotic capacity region in the strong interference regime and the sum-rate capacity in the weak interference regime. Furthermore, we characterized the GDoF for the symmetric IM/DD IC using the obtained inner and outer bounds, which confirms the asymptotic tightness of the obtained capacity bounds.  
	Numerical results shows that TIN can achieve the same good performance as the HK scheme when interference is weak, and that the HK scheme can achieve the capacity region within $0.5$ bits in the strong interference regime at high SNR. We also showed how the results can be used to optimize receivers' location in an on-chip OWC system and to evaluate the interference management performance of different schemes in an indoor VLC system. 
	
	This work can be extended in several directions. One direction involves bounding the capacity for the IM/DD IC with more than two transmitter-receiver pairs or where the transmitters and receivers have multiple optical emitters/detectors (IM/DD MIMO IC). Also, since all the bounds obtained in this paper are based on an IM/DD P2P channel capacity upper bound, tightening the obtained IM/DD IC capacity bounds using tighter P2P capacity upper bounds is a direct extension. Tighter bounds can also be investigated using methods like those used for studying the IM/DD BC or IM/DD MAC \cite{zhou2019bounds}.

	\section*{Acknowledgments}
	We acknowledge the feedback from the editor and the reviewers, whose comments helped to improve the quality of this paper.
	
	\bibliographystyle{IEEEtran}
	\bibliography{references}
	
	\begin{IEEEbiography}[]{Zhenyu~(Charlus)~Zhang} was born in Xinyang, Henan Province, P.R. China. He received the M.Sc. degree in Tianjin University, China, in 2017, and is now a Ph.D. candidate in University of British Columbia, Canada, since 2018. He was awarded Chinese National Scholarship in 2012 and 2016, and he won the Canadian Workshop of Information Theory paper award in 2019. His research interests are in the areas of information theory and wireless communications. 
		
	\end{IEEEbiography}
	
	\begin{IEEEbiography}[]{Anas Chaaban} (S'09 - M'14 - SM'17) received the Ma{\^i}trise {\`e}s Sciences degree in electronics from Lebanese University, Lebanon, in 2006, the M.Sc. degree in communications technology and the Dr. Ing. (Ph.D.) degree in electrical engineering and information technology from the University of Ulm and the Ruhr-University of Bochum, Germany, in 2009 and 2013, respectively. From 2008 to 2009, he was with the Daimler AG Research Group On Machine Vision, Ulm, Germany. He was a Research Assistant with the Emmy-Noether Research Group on Wireless Networks, University of Ulm, Germany, from 2009 to 2011, which relocated to the Ruhr-University of Bochum in 2011. He was a PostDoctoral Researcher with the Ruhr-University of Bochum from 2013 to 2014, and with King Abdullah University of Science and Technology from 2015 to 2017. He joined the School of Engineering at the University of British Columbia as an Assistant Professor in 2018. His research interests are in the areas of information theory and wireless communications.
	\end{IEEEbiography}

\end{document}